\documentclass[11pt]{article}

\usepackage{arxiv}

\usepackage[T1]{fontenc}
\usepackage[utf8]{inputenc}
\usepackage{amsmath}
\usepackage{amssymb}
\usepackage{amsthm}
\usepackage{microtype}
\usepackage{booktabs}
\usepackage{braket}
\usepackage{quantikz}
\usepackage{algorithm}
\usepackage{algorithmic}
\usepackage{enumitem}
\usepackage{url}
\usepackage{hyperref}
\definecolor{linknavy}{rgb}{0.10,0.20,0.60}
\hypersetup{colorlinks=true, linkcolor=linknavy,
            citecolor=linknavy, urlcolor=linknavy}

\theoremstyle{plain}
\newtheorem{theorem}{Theorem}[section]
\newtheorem{lemma}[theorem]{Lemma}
\newtheorem{corollary}[theorem]{Corollary}
\theoremstyle{definition}
\newtheorem{remark}[theorem]{Remark}

\newcommand{\Description}[1]{}

\allowdisplaybreaks[1]

\setlist[itemize]{topsep=6pt, itemsep=6pt, parsep=0pt, leftmargin=*}

\usepackage[datamodel=acmdatamodel,style=acmnumeric,%
            backend=biber]{biblatex}
\title{Exact and Optimal Recursive Quantum Search via\\
       Hilbert-Space Decomposition}

\author{%
  John Burke \\
  School of Computer Science and Statistics \\
  Trinity College Dublin \\
  Dublin 2, Ireland \\
  \texttt{burkej15@tcd.ie} \\
  \href{https://orcid.org/0000-0001-7720-8270}
       {\small\texttt{0000-0001-7720-8270}}
  \And
  Ciaran Mc Goldrick \\
  School of Computer Science and Statistics \\
  Trinity College Dublin \\
  Dublin 2, Ireland \\
  \texttt{ciaran.mcgoldrick@tcd.ie} \\
  \href{https://orcid.org/0000-0001-6442-3262}
       {\small\texttt{0000-0001-6442-3262}}
}

\begin{document}

\maketitle


\begin{abstract}
Current approaches to quantum search fail to deeply exploit extant structure in the underlying Hilbert space. Decomposing the search by this structure empowers new strategies and formulations for quantum search and algorithm design.
We present a new decomposition technique acting directly on this structure by recursively decomposing the Hilbert space and constructing the search operator from reflections over the resulting partition. When initial and target states factorise over this partition, dynamics reduce to a single rotation in a two-dimensional plane at each level, with angle given by a scalar recurrence. This recurrence avoids error accumulation from separately bounding success probabilities at each level, yielding an exact state description enabling treatment of the recursion as a whole.
We obtain the target state deterministically and derive oracle and non-oracle costs independently of the search setting.
For unstructured search, our approach
attains the simultaneously optimal $\Theta(\sqrt{N})$ oracle and
non-oracle gate counts. For spatial search on $d$-dimension grids, it recovers the
$O(\sqrt{N})$ time for $d\geq3$ and the
$O\bigl(\sqrt{N}(\log N)^{3/2}\bigr)$ bound of Aaronson and Ambainis
for $d=2$. The exact description of the recursion extends over our decomposition to new subdivision structures and provides a new approach for applying and analysing recursion in quantum algorithm design.
\end{abstract}


\section{Introduction}
\label{sec:introduction}

Unstructured search is one of the few problems for which quantum computers
have a provable advantage over their classical counterparts. Grover's
algorithm~\cite{Grover} finds a marked item among $N$ candidates using
$\Theta(\sqrt{N})$ oracle calls, a quadratic improvement over classical
exhaustive search that is optimal in query
complexity~\cite{Optimal, zalkaOptimal}. Amplitude
amplification~\cite{AmplitudeAmplification} generalises this mechanism,
allowing the success probability of any subroutine preparing a target to be
amplified with a quadratic improvement over classical sampling.
While Grover's algorithm and amplitude amplification act as a single global
process, dividing the task into smaller sub-searches can yield advantages
beyond the standard approaches. Previous work has shown that these
decompositions of quantum search can reduce implementation
costs~\cite{groverOptGates,zhangKorepinDepth,hardwareGrover2}, enable
distributed implementations~\cite{distributedGrover,distributedGroverExact},
and improve quantum search when access to the search space is restricted by
its geometry~\cite{scott_paper}. In this work, we treat the decomposition of
quantum search as a decomposition of the underlying Hilbert space itself, and
show how a recursive search algorithm can be constructed over this
decomposition. This separates the underlying recursive construction from any
particular search setting, while allowing its instantiation to be adapted to
the structure and constraints of the setting in which it is applied.

Given a partition of the underlying Hilbert space, we build the search algorithm recursively from nested reflection operators rather than from nested amplitude-amplification subroutines. These reflections generate rotations on progressively larger subspaces of the partition, with each level tracking the displacement of the component of the initial state within that subspace. When the initial and target states factorise over a common partition, the non-trivial dynamics at each level of recursion reduce to a single two-dimensional invariant plane whose angle is determined by a scalar recurrence.
The recurrence propagates the rotation exactly at every level, giving an exact description of the state throughout the recursion. This avoids the accumulation of error that arises when the success probability is bounded separately at each stage, allowing the recursion to be treated as a whole. This description can then be used to instantiate the algorithm in different search settings, which we demonstrate for unstructured and spatial search. For unstructured search, the construction attains the simultaneously optimal $\Theta(\sqrt{N})$ oracle and non-oracle gate counts of
Bria\'{n}ski et al.~\cite{hardwareGrover2}. For spatial search on the $d$-dimensional grid, it recovers the $O(\sqrt{N})$ running time of Aaronson
and Ambainis for $d\geq3$ and their $O\bigl(\sqrt{N}(\log N)^{3/2}\bigr)$ bound for $d=2$~\cite{scott_paper}. The approach also highlights greater flexibility in how recursion can be structured within quantum search. In particular, the optimal costs described above can be attained with a fixed rate of subdivision at every level, whereas the prior analyses for unstructured search and for $d\geq 3$-dimensional grids are restricted to subdivision rates that increase with recursion depth. More broadly, the exact recursive description makes it possible to analyse recursive quantum search as a whole, rather than through separate bounds at each level, and provides a versatile method for designing recursive quantum algorithms.

\paragraph{An overview of the construction.}
Suppose $\mathcal{H} = \mathcal{H}_m \otimes \cdots \otimes \mathcal{H}_1$
and that the initial state $\ket{\psi}$ and target $\ket{x}$ are product
states over that partition, with local overlaps
$\braket{x_i|\psi_i} = \sin\theta_i$. Denote
$S_x = \mathbb{I} - 2\ket{x}\bra{x}$ for the oracle and let $S_{\psi_i}$
reflect about $\ket{\psi_{i \ldots 1}}$ within the cumulative block of
registers $i \ldots 1$, acting as the identity $\mathbb{I}$ above it. The construction is
the palindromic recursion
\begin{equation}
\label{eq:intro-W}
    W_i = (S_{\psi_i}W_{i-1})^{t_i}\,S_{\psi_i}\,(W_{i-1}S_{\psi_i})^{t_i},
    \qquad W_0 = S_x,
\end{equation}
in which each $W_{i-1}$ is a reflection, so that $S_{\psi_{i}}W_{i-1}$ is a rotation. 
Given the product structure, the two-subspace decomposition of the reflected subspaces~\cite{halmosSubspaces} reduces the non-trivial action to a single invariant plane, spanned by $\ket{x_{m \ldots i+1}}\ket{\psi_{i \ldots 1}}$ and one further
vector. In that plane $S_{\psi_i}W_{i-1}$ rotates through $-2\gamma_i$, with
\begin{equation}
\label{eq:intro-recurrence}
    \sin\gamma_i = \sin\theta_i\,\sin(2t_{i-1}\gamma_{i-1}),
    \qquad \gamma_1 = \theta_1.
\end{equation}
The rotation generated at a level is characterised by two values: the local
overlap $\sin\theta_i$ within its own subspace and the angle
$2t_{i-1}\gamma_{i-1}$ through which the level below carried its rotation.

Given a decomposition of the global space into $m$ subspaces, after $m-1$ levels, the
iterate $S_{\psi_m}W_{m-1}$ rotates through $-2\gamma_m$ in a plane
containing the full initial state $\ket{\psi}$. A vector within that
plane is the factorised state $\ket{x_m} \otimes \ket{\phi_{m-1}}$, which contains the component of the target within $\mathcal{H}_m$, and a known residual state $\ket{\phi_{m-1}}$. Since the rotation angle
$\gamma_m$ is known exactly, that vector can be reached deterministically through generalised phase-reflections~\cite{longExact, hoyerPhases, AmplitudeAmplification}. Furthermore, the residual state $\ket{\phi_{m-1}}$ sits in the subspace one level down at a known
angular distance from that level's own target $\ket{x_{m-1}} \otimes \ket{\phi_{m-2}}$. The construction applies again on the lower subspaces, and a cascade of $m-1$ corrections acting on progressively smaller subspaces each resolve one component of the target, preparing the target state $\ket{x}$ with
unit probability.
Section~\ref{sec:algorithm} describes the scheme in full.

Such a decomposition offers several advantages. The partial reflections $S_{\psi_i}$ act only on the subspaces accumulated up to level $i$, and can therefore be substantially cheaper to implement than the global reflection $S_{\psi}$. Formulating the decomposition directly over the Hilbert space also makes the construction independent of any particular search setting, allowing it to be instantiated whenever the initial and target states admit the required product decomposition. Finally, because the rotation generated at each level is propagated exactly through the recursion, the analysis avoids the accumulation of error that arises from bounding the success probability separately at each stage, significantly simplifying the derivation of properties of the construction.

We show, in Section~\ref{sec:applications}, that both unstructured search with a single target and spatial search for a single marked vertex on a grid admit the required product decomposition. The former partitions the register encoding the candidates into smaller subregisters, while the latter corresponds to a recursive subdivision of the grid encoding the possible positions of the quantum walker. The construction can then be optimised according to the cost relevant to the particular search setting. In unstructured search, this is the non-oracle gate count, while in spatial search it is the number of time steps taken by the walker. Choosing the intermediate iteration counts $\{t_i\}$ appropriately allows the oracle complexity and the cost of the partial reflections to be controlled simultaneously. This allows the same construction to be optimised according to the cost structure of the setting in which it is applied.

\subsection{Related Work}
\label{subsec:related}

We position our approach within the key relevant literature in recursive quantum search, review the
two settings in which we instantiate our approach, and distinguish this work
from our own prior constructions.

\paragraph{Recursive quantum search.}
Recursive quantum search decomposes a global search problem into a hierarchy of sub-searches, where the output of one level provides the input for the next. Such recursions are often built from nested amplitude-amplification subroutines, whose behaviour depends on the success probability of the input algorithm~\cite{AmplitudeAmplification}. Analysing a recursion of these subroutines therefore requires a guarantee on the success probability at each level~\cite{error_bound_1}. For structured search problems, the recursion can be constructed from searches that establish partial candidate solutions, with each stage promoting valid candidates to the next with a bounded success probability~\cite{cerfNested, schrottenStevens}. Even when the problem is unstructured and no partial solution can be established, the search can still be decomposed recursively by partitioning the space encoding the problem. Within each subspace, the overlap between the initial and target states determines the success probability of the restricted search, providing the guarantee needed to compose these searches recursively. Two examples particularly relevant to this work are within unstructured and spatial search. For unstructured
search, Bria\'{n}ski et al.~\cite{hardwareGrover2} partition the
qubit register and recurse with smaller, cheaper operators over the
sub-registers. For spatial search, Aaronson and
Ambainis~\cite{scott_paper} recursively subdivide the grid so that
the searches at each level act only locally. In both cases the recursion replaces global operations with local ones
while retaining the query scaling of the underlying search.

Our work generalises this decomposition of the underlying search setting as an abstract
decomposition of the Hilbert space encoding the problem itself, and
that the search can be constructed recursively within it. Our requirement from Section~\ref{subsec:definitions}
that the initial and target states factorise over a product
partition enables us to guarantee the success probability of our
scheme throughout the recursion, established exactly as a single
angle recurrence. By defining the recursion through a sequence of
reflection operators instead of through amplitude-amplification
subroutines, as in~\cite{scott_paper, hardwareGrover2}, we establish that the rotation generated at each level
propagates through the recursion by an exact amount, and the success
probability accumulates no residual error from bounding the success of
each stage. This makes it possible to reason about the recursion as a whole, rather than controlling the
accumulated error of a sequence of approximate sub-searches. 
Our construction applies to any setting that can be abstracted as
such a decomposable Hilbert space, and we apply it to the two
instantiation settings described above. Our scheme enables greater fidelity and insight into how recursion
operates within these settings. For example, our approach allows us
to show that quantum search can be recursed upon at a fixed rate of
subdivision, where the corresponding prior analyses~\cite{scott_paper, hardwareGrover2} treat an
increasing rate to manage the accumulated error of the subroutines
through the recursion.

\paragraph{Gate-count optimisation for unstructured search.}
Grover search, while optimal in query complexity, is not optimal in
the number of elementary gates required to realise quantum search.
The original algorithm requires $\Theta(\sqrt{N}\log N)$ non-oracle
gates for its diffusers, each a reflection about the uniform
superposition across all $n=\log_2 N$ qubits at $\Omega(n)$
elementary gates per iterate~\cite{mcx1}. This bound was subsequently
reduced to $O(\sqrt{N}\log\log N)$ by replacing the global diffuser
with partial diffusion operators acting on fewer qubits, while
retaining $O(\sqrt{N})$ oracle complexity~\cite{grover_trade-offs_2002},
and later to $O\bigl(\sqrt{N}\log^{(r)}N\bigr)$ for any constant
$r$~\cite{groverOptGates}. Bria\'{n}ski et al.~\cite{hardwareGrover2}
then reduced the non-oracle cost to the optimal $\Theta(\sqrt{N})$
while retaining $\Theta(\sqrt{N})$ oracle complexity. Their construction achieves this through a recursive decomposition of the search, with smaller operators acting on successive sub-registers and the amplitude gain at each level balanced against the cost of implementing those operators.
They also establish a matching trade-off, whereby any algorithm using
$O(\log(1/\epsilon)\sqrt{N})$ non-oracle gates requires at least
$(1+\epsilon)\tfrac{\pi}{4}\sqrt{N}$ queries. Other works explore heuristics for improving the gate complexity,
interleaving the local search operators introduced for quantum partial search~\cite{partialSearch} with full iterates~\cite{zhangKorepinDepth, palindromicDepth}. We
note~\cite{palindromicDepth} in particular, whose recent numerical
optimisations indicate that a palindromic sequence of partial and full
iterates provides the greatest depth reductions, a concept similar to the
palindromic structure of our
iterate from Equation~\eqref{eq:intro-W}. Beyond this shared symmetry, the approaches differ substantially, as
the palindromic sequence in~\cite{palindromicDepth} composes to a single amplitude-amplification
process rather than a recursion on the search.

\paragraph{Spatial search.}
In the spatial setting, the search space is a physical graph and
operations are constrained by its geometry. Since operations act
locally on the walker's position, correlating vertices at distance
$\ell$ requires $\Omega(\ell)$ steps. With travel and oracle queries
carrying a unit cost per step, an efficient search algorithm must
account for both. Grover search in this model costs
$\Theta(N^{1/2+1/d})$ steps on the $d$-dimensional grid of $N$
vertices, and $\Theta(N)$ at $d=2$, its global diffuser fanning out
across the whole grid at every iterate~\cite{benioffRobot}. Aaronson
and Ambainis~\cite{scott_paper} showed that this cost can be improved
upon by recursing on the search, subdividing the grid recursively and
amplifying within the subgrids. By bounding the success of each
sub-search, they show the search completes with constant probability
in $O(\sqrt{N})$ steps for $d\ge3$ and
$O\big(\sqrt{N}(\log N)^{3/2}\big)$ at $d=2$. While these remain the
best known bounds for employing quantum search natively on the grid,
quantum walks have also proven effective in this setting, traversing
the grid directly and searching in $O(\sqrt{N})$ steps for
$d\ge3$~\cite{walks1} and $O\big(\sqrt{N\log N}\big)$ at
$d=2$~\cite{walks2}, achieving a quadratic speedup over the classical
hitting time on general graphs, including with multiple marked
vertices~\cite{apersSpatial}. To our knowledge, whether $O(\sqrt{N})$
is achievable at $d=2$ remains an open question.

\paragraph{Our prior constructions.}
In~\cite{burkeDeterministic} we used recursion to show that
deterministic search can be performed without generalised-phase
operators. The register is partitioned into blocks of two qubits,
fixing the initial overlap so that a single iterate resolves each
block exactly~\cite{exact}, and recursing on these blocks searches
deterministically using $O(N^{\log_2(3)/2})$ queries in total.
In~\cite{burke26} we introduced the idea of recursing through
reflection operators. We used this to show that global reflections
can be removed from quantum search entirely while retaining the
optimal $\Theta(\sqrt{N})$ oracle count, whereas prior constructions
had reduced their number but not eliminated
them~\cite{grover_trade-offs_2002, groverOptGates, hardwareGrover2,
palindromicDepth}. The recursion there takes the same general form
as Equation~\eqref{eq:intro-W}, but is built from different
operators, with the partial diffusers reflecting about individual
registers rather than cumulative blocks.

The analysis of that construction exploits a degeneracy in the
principal angles generated by the recursive operators. Although the
number of associated eigenspaces grows exponentially with the
recursion depth, their principal angles take only two distinct
values. This is sufficient to determine the probability of measuring
the resolved part of the target, but not the remainder of the state,
so mid-circuit measurements are required to extract the information
obtained at each level. By altering the operators used within this work's construction, the
non-trivial dynamics at each level are instead confined to a single
rotation plane, allowing the full state to be determined exactly
throughout the recursion. This exact state description allows us to
prepare the target deterministically and to derive further properties
of the construction, including its non-oracle cost. Moreover, whereas
the previous work was developed specifically for unstructured search,
the present construction is formulated over a decomposition of the
underlying Hilbert space and can therefore be instantiated in other
search settings.

\subsection{Contributions}
\label{sec:results}
\label{subsec:contributions}

\begin{table}[t]
\centering
\small
\setlength{\belowcaptionskip}{8pt}
\caption{The construction instantiated in two settings and compared with
algorithms native to each. For unstructured search, we report the number
of oracle calls and non-oracle gates used to search $N$ items with a
unique target. For spatial search, we report the number of steps used to
find a unique marked vertex of the $d$-dimensional grid in the $C$-local
model of~\cite{scott_paper}. The spatial costs listed for Grover search
are its costs in the grid model as analysed by
Benioff~\cite{benioffRobot}, rather than claims made in~\cite{Grover}.}
\label{tab:comparison}
\begin{tabular*}{\textwidth}{@{\extracolsep{\fill}}lcccc@{}}
\toprule
& \multicolumn{2}{c}{Unstructured search}
& \multicolumn{2}{c}{Spatial search} \\
\cmidrule(lr){2-3}\cmidrule(lr){4-5}
& Oracle calls & Non-oracle gates & $d \geq 3$ & $d = 2$ \\
\midrule
Grover~\cite{Grover, benioffRobot}
& $\Theta(\sqrt{N})$
& $\Theta(\sqrt{N}\log N)$
& $\Theta\bigl(N^{1/2+1/d}\bigr)$
& $\Theta(N)$ \\
Aaronson--Ambainis~\cite{scott_paper}
& --- & ---
& $O(\sqrt{N})$
& $O\bigl(\sqrt{N}\,(\log N)^{3/2}\bigr)$ \\
Bria\'{n}ski et al.~\cite{hardwareGrover2}
& $\Theta(\sqrt{N})$
& $\Theta(\sqrt{N})$
& --- & --- \\
Quantum walks~\cite{walks1, walks2}
& --- & ---
& $O(\sqrt{N})$
& $O\bigl(\sqrt{N\log N}\bigr)$ \\
\addlinespace
\textbf{This work}
& $\Theta(\sqrt{N})$
& $\Theta(\sqrt{N})$
& $\Theta(\sqrt{N})$
& $\Theta\bigl(\sqrt{N}\,(\log N)^{3/2}\bigr)$ \\
\bottomrule
\end{tabular*}
\end{table}

Our central contribution is a novel decomposition technique for quantum search built over a partition of the underlying Hilbert space. The search is constructed recursively from nested reflection operators defined over the decomposition, and we show that this structure allows the dynamics of the recursion to be determined exactly. Because the construction is defined independently of a particular search setting, it can then be instantiated in different settings, as we demonstrate for unstructured and spatial search.

\begin{itemize}
\item \textbf{A reflection-based decomposition of quantum search.} We show how a quantum search algorithm can be built over a decomposition of the Hilbert space encoding the problem. The search operator is constructed recursively from reflections acting on progressively larger subspaces (Eq.~\eqref{eq:intro-W}). Each level rotates the initial state within the subspace on which it acts, and the recursion tracks this displacement rather than the amplitude of the target tracked by nested amplitude amplification~\cite{cerfNested, scott_paper, hardwareGrover2, groverOptGates}. When the initial and target states factorise over a common partition, the non-trivial dynamics at each level reduce to a single rotation whose angle is determined by the recurrence (Eq.~\eqref{eq:intro-recurrence}). This gives an exact description of the state throughout the recursion and allows the target state $\ket{x}$ to be prepared with unit probability (Theorem~\ref{thm:cascade}). This exact characterisation allows the recursion to be treated as a single process rather than through separate success-probability bounds at each stage, simplifying the derivation of properties of the construction, including its oracle and non-oracle costs.

\item \textbf{Instantiation in unstructured and spatial search.}
We demonstrate that the Hilbert spaces encoding unstructured search and spatial search on the $d$-dimensional grid can be decomposed so that our construction applies in both settings. We then optimise the construction for the non-oracle gate count in unstructured search and for the number of time steps taken by the walker in spatial search. Table~\ref{tab:comparison} compares the resulting instantiations with existing constructions in each setting. For unstructured search, the construction attains the simultaneously optimal $\Theta(\sqrt{N})$ oracle and non-oracle gate counts of Bria\'{n}ski et al.~\cite{hardwareGrover2} (Theorem~\ref{thm:combinatorial}). For spatial search, it attains the $O(\sqrt{N})$ running time of Aaronson and Ambainis~\cite{scott_paper} for $d\geq3$ and their $O\bigl(\sqrt{N}(\log N)^{3/2}\bigr)$ bound for $d=2$. The construction also attains these optimal costs under fixed-rate recursive subdivisions, which, for unstructured search and spatial search with $d\geq3$, are not covered by the corresponding prior analyses.

\item \textbf{Improvements over our prior construction.}
The present construction substantially extends our earlier reflection-based recursion~\cite{burke26}. It follows the same broad recursive structure, but uses different reflection operators that confine the non-trivial dynamics at each level to a single rotation plane. This allows the state to be determined exactly throughout the recursion and the target to be prepared deterministically. The construction also removes the $\log\log N$ overhead in the non-oracle gate count, reducing it from $O(\sqrt{N}\log\log N)$ to the optimal $\Theta(\sqrt{N})$. Whereas the earlier construction was developed specifically for unstructured search, the present one is formulated over an abstract decomposition of the Hilbert space and can therefore be instantiated in other search settings. The corrective cascade and spatial-search instantiation are also new.

\end{itemize}

Together, these results show that recursive quantum search can be constructed and analysed directly through a decomposition of the underlying Hilbert space, providing a new route for designing and analysing recursive quantum algorithms.


\section{Model and Problem Statement}
\label{sec:model}

This section outlines the problem class, the operators, the cost and access
model, and the notation.

\subsection{Product Decomposition}
\label{subsec:definitions}

We consider a Hilbert space that factorises as
$\mathcal{H} = \mathcal{H}_m \otimes \mathcal{H}_{m-1} \otimes \cdots
\otimes \mathcal{H}_1$ over $m$ registers, indexed from the innermost
register $1$ to the outermost register $m$, and both the initial state and
the target factorise over that same partition,
\begin{equation*}
    \ket{\psi} = \ket{\psi_m} \otimes \cdots \otimes \ket{\psi_1},
    \qquad
    \ket{x} = \ket{x_m} \otimes \cdots \otimes \ket{x_1},
\end{equation*}
with $\ket{\psi_i}, \ket{x_i} \in \mathcal{H}_i$ and
$\braket{x_i|\psi_i} \ne 0$ at every level $i$.

We define
\begin{equation}
\label{eq:blocks}
    \ket{\psi_{i\ldots j}} = \ket{\psi_{i}}\ket{\psi_{i-1\ldots j}},
    \quad
    \ket{x_{i\ldots j}} = \ket{x_{i}}\ket{x_{i-1\ldots j}},
\end{equation}
as the components of the initial and target states within registers $i$
through $j$, with $\ket{\psi_{i\ldots i}} = \ket{\psi_i}$ and
$\ket{x_{i\ldots i}} = \ket{x_i}$. Let
\begin{equation}
\label{eq:overlaps}
    \braket{x|\psi} = \sin\theta = \prod_{i=1}^m\braket{x_i|\psi_i}
    = \prod_{i=1}^m\sin\theta_i
\end{equation}
where $\sin\theta \in (0,1)$ is the global overlap between $\ket{x}$ and
$\ket{\psi}$, and each $\sin\theta_i$, with $\theta_i \in (0, \pi/2)$, is
the local overlap within register $i$. Without loss of generality we take
$\braket{x_i|\psi_i} = \sin\theta_i > 0$ at every level, absorbing any
phase into $\ket{x_i}$ at the cost of a global phase on $\ket{x}$. We
define the operators
\begin{equation}
\label{eq:reflections}
    S_x = \mathbb{I} - 2\ket{x}\bra{x},
    \quad
    S_{\psi_i} = \mathbb{I}_{m\ldots i+1} \otimes
    (\mathbb{I}_{i\ldots 1} -
    2\ket{\psi_{i\ldots 1}}\bra{\psi_{i\ldots 1}}),
\end{equation}
where $S_x$ is the oracle and $S_{\psi_i}$ the partial diffuser, acting as
the identity $\mathbb{I}$ on registers $m,\ldots,i+1$ and reflecting about
$\ket{\psi_{i\ldots 1}}$ within registers $i,\ldots,1$. The outermost
$S_{\psi_m}$ acts on the entire space.
Finally, the recursive operator $W_i$ is
\begin{equation}
\label{eq:W-def}
    W_i = (S_{\psi_i} W_{i-1})^{t_i}\, S_{\psi_i}\,
    (W_{i-1} S_{\psi_i})^{t_i}, \quad W_0 = S_x,
\end{equation}
where the integer $t_i \ge 1$ counts the repetitions of the iterate
$S_{\psi_i} W_{i-1}$ in each half of the level, for $2t_i$ in total. Figure~\ref{fig:w-decomp} depicts one level. The symmetry
of~\eqref{eq:W-def} about its central $S_{\psi_i}$ makes each $W_i$
itself a reflection, which the following lemma records together with the
subspace it reflects about.

\begin{lemma}[Conjugation form of $W_i$]
\label{lem:conjugation}
Write $P_{\psi_i}$ and $P_{W_i}$ for the projectors onto the subspaces
reflected by $S_{\psi_i}$ and $W_i$, so that
$S_{\psi_i} = \mathbb{I} - 2P_{\psi_i}$ and
$W_i = \mathbb{I} - 2P_{W_i}$, and set
$A_i := (S_{\psi_i}W_{i-1})^{t_i}$. Then, for every $1 \le i \le m$,
\begin{equation}
\label{eq:conjugation}
    (W_{i-1}S_{\psi_i})^{t_i} = A_i^{\dagger}, \qquad
    W_i = A_i\,S_{\psi_i}\,A_i^{\dagger}, \qquad
    P_{W_i} = A_i\,P_{\psi_i}\,A_i^{\dagger}.
\end{equation}
Therefore $W_i$ is a reflection, about the image of
$\mathrm{range}(P_{\psi_i})$ under $A_i$.
\end{lemma}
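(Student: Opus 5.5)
We argue by induction on $i$, carrying the hypothesis that $W_{i-1}$ is a Hermitian unitary involution. For the base case $W_0 = S_x = \mathbb{I} - 2\ket{x}\bra{x}$, which is Hermitian and satisfies $S_x^2 = \mathbb{I}$. Every partial diffuser $S_{\psi_i}$ likewise satisfies $S_{\psi_i}^\dagger = S_{\psi_i}$ and $S_{\psi_i}^2 = \mathbb{I}$, because by \eqref{eq:reflections} it is a tensor product of an identity with a reflection about the single vector $\ket{\psi_{i\ldots 1}}$. Its projector is therefore $P_{\psi_i} = \mathbb{I}_{m\ldots i+1}\otimes\ket{\psi_{i\ldots1}}\bra{\psi_{i\ldots1}}$.

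\textbf{First identity.} Assume $W_{i-1}$ is a Hermitian unitary. Then
\[
(S_{\psi_i}W_{i-1})^\dagger = W_{i-1}^\dagger S_{\psi_i}^\dagger = W_{i-1}S_{\psi_i}.
\]
Taking the $t_i$-th power and using $(B^{t})^\dagger = (B^\dagger)^{t}$ gives
\[
A_i^\dagger = \bigl((S_{\psi_i}W_{i-1})^{t_i}\bigr)^\dagger = (W_{i-1}S_{\psi_i})^{t_i}.
\]
This is the first identity in \eqref{eq:conjugation}.

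\textbf{Second identity.} Substituting $A_i^\dagger$ into \eqref{eq:W-def} gives $W_i = A_i S_{\psi_i} A_i^\dagger$ directly.

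\textbf{Closing the induction.} $A_i$ is unitary, being a product of unitaries. Hence $W_i$ is unitarily conjugate to $S_{\psi_i}$, so it is Hermitian and satisfies $W_i^2 = A_i S_{\psi_i}^2 A_i^\dagger = \mathbb{I}$. This establishes the hypothesis at level $i$.

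\textbf{Third identity.} Write $S_{\psi_i} = \mathbb{I} - 2P_{\psi_i}$ and use $A_iA_i^\dagger = \mathbb{I}$:
\[
W_i = \mathbb{I} - 2A_iP_{\psi_i}A_i^\dagger.
\]
The operator $Q := A_iP_{\psi_i}A_i^\dagger$ is Hermitian and idempotent, since $Q^2 = A_iP_{\psi_i}^2A_i^\dagger = Q$. So $Q$ is an orthogonal projector. Its range is $A_i\,\mathrm{range}(P_{\psi_i})$, because $A_i^\dagger$ is surjective.

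A Hermitian involution $W$ determines its projector uniquely as $P_W = (\mathbb{I}-W)/2$, the projector onto its $-1$ eigenspace. Hence $P_{W_i} = Q$, which gives the last identity. This also shows that $W_i$ is the reflection about the image of $\mathrm{range}(P_{\psi_i})$ under $A_i$, as claimed.

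There is no substantive obstacle here. The only point needing care is the ordering in the inductive hypothesis: the Hermiticity of $W_{i-1}$ must already be in hand before the adjoint computation that produces $A_i^\dagger$. Once that is arranged, the remaining steps are routine algebra.
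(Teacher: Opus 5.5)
Your proof is correct and follows the paper's argument: invert the iterate using that $S_{\psi_i}$ and $W_{i-1}$ are unitary involutions, substitute $A_i^\dagger$ into the palindromic definition, and conjugate $\mathbb{I}-2P_{\psi_i}$ by $A_i$. The only differences are that you make explicit the induction on $i$ (that $W_{i-1}$ is a Hermitian involution, supplied by the lemma at level $i-1$), which the paper's proof uses tacitly, and that you identify $P_{W_i}=A_iP_{\psi_i}A_i^\dagger$ via $P_W=(\mathbb{I}-W)/2$ rather than reading it off directly.
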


\begin{proof}
Both $S_{\psi_i}$ and $W_{i-1}$ are unitary involutions, so
$(S_{\psi_i}W_{i-1})^{-1} = W_{i-1}S_{\psi_i}$. Raising to the $t_i$-th power gives the first identity, and
substituting it into~\eqref{eq:W-def} gives the second. Conjugating $S_{\psi_i} = \mathbb{I} - 2P_{\psi_i}$ by the unitary $A_i$
gives $W_i = \mathbb{I} - 2A_i P_{\psi_i} A_i^{\dagger}$, so
$P_{W_i} = A_i P_{\psi_i} A_i^{\dagger}$ as claimed.
\end{proof}

\paragraph{Standing assumptions.}
Let $\gamma_i$ be the angles the recursion yields, defined by
$\gamma_1 = \theta_1$ and
$\gamma_i = \arcsin\bigl(\sin\theta_i \sin(2t_{i-1}\gamma_{i-1})\bigr)$
for $i \ge 2$. Lemma~\ref{lem:recursion} derives this recurrence and identifies
$\gamma_i$ as half the angle through which the level-$i$ iterate
rotates. Throughout, we assume a bound on the schedule,
\begin{equation}
\label{eq:window-assumption}
    w_i := 2t_i\gamma_i \le \frac{\pi}{2}, \qquad i < m,
\end{equation}
imposed alongside $t_i \ge 1$, preventing any level from over-rotating.
Together with $\gamma_1 = \theta_1 > 0$, this makes every $\gamma_i$
strictly positive by induction. We also assume a condition on the partition,
\begin{equation}
\label{eq:partition-assumption}
    \sin\theta_i \le \frac{\sqrt{3}}{2}, \qquad \text{equivalently}
    \qquad \theta_i \le \frac{\pi}{3},
\end{equation}
at every level, so that the local overlap within any single register is
at most $\sqrt{3}/2$. Under this condition we later establish the
deterministic preparation of the target.

\begin{figure}[t]
\centering
\begin{quantikz}[column sep=0.28cm, row sep=0.35cm]
\lstick{$\ket{\psi_{m\ldots i+1}}$}
  & \gate[3]{W_i}
  & \midstick[3]{$=$}
  & \gategroup[3,steps=2,style={dashed,rounded corners,fill=blue!6,
      inner xsep=2pt},background,
      label style={label position=below,anchor=north,yshift=-0.3cm}]
      {repeat $t_i$ times}
  & \gate[3]{W_{i-1}}
  &
  & \gate[3]{W_{i-1}}\gategroup[3,steps=2,style={dashed,rounded corners,
      fill=blue!6,inner xsep=2pt},background,
      label style={label position=below,anchor=north,yshift=-0.3cm}]
      {repeat $t_i$ times}
  &
  & \qw \\
\lstick{$\ket{\psi_i}$}
  & & & \gate[2]{S_{\psi_i}} &
  & \gate[2,style={fill=orange!25}][0.6cm]{S_{\psi_i}}
  & & \gate[2]{S_{\psi_i}} & \qw \\
\lstick{$\ket{\psi_{i-1\ldots 1}}$}
  & & & & & & & & \qw
\end{quantikz}
\caption{One level of the recursion~\eqref{eq:W-def}, consisting of two
mirrored sequences of $t_i$ alternating applications of $W_{i-1}$ and
$S_{\psi_i}$, separated by a central $S_{\psi_i}$. The reflection
$S_{\psi_i}$ acts on registers $i, \ldots, 1$, while the oracle enters
only at the base through $W_0 = S_x$.}
\label{fig:w-decomp}
\Description{A quantum circuit diagram on three register wires, carrying
from top to bottom the blocks psi at levels m down to i+1, psi at level i,
and psi at levels i-1 down to 1. A single gate W at level i spanning all
three wires is shown to equal a palindromic sequence: a dashed box repeated
t sub i times containing the level i-1 reflection W followed by the partial
diffuser S psi at level i acting only on the lower two wires; then a single
highlighted central copy of that same partial diffuser; then a second dashed
box, again repeated t sub i times, containing the level i-1 reflection and
the partial diffuser in the mirrored order.}
\end{figure}

\subsection{Cost and Access Model}
\label{subsec:access}

The partition, the factors $\ket{\psi_i}$ of the initial state and the
magnitudes $\sin\theta_i$ of the local overlaps are known in advance,
and every parameter of the construction is calculated from these
properties. The construction uses only the oracle, the partial diffusers
$S_{\psi_i}$, and general-phase variants of
both~\cite{longExact, hoyerPhases, AmplitudeAmplification}.

The oracle count $T_{\mathrm{total}}$ is the number of applications of
the oracle by the algorithm. The non-oracle cost is everything the
circuit does between oracle calls, and is defined by the setting the
algorithm is instantiated within. Each partial diffuser is assigned an
abstract cost $c(S_{\psi_i})$, so a setting enters the analysis through
the $m$ numbers $c(S_{\psi_1}), \ldots, c(S_{\psi_m})$. 
The general-phase variants are assigned the same asymptotic costs, since they can be implemented with constant overhead using the corresponding reflection together with an ancilla and single-qubit phase rotations.

\section{The Recursion}
\label{sec:recursion}
\label{sec:algorithm}

This section assembles the operators of Section~\ref{sec:model} into the
full construction (Section~\ref{subsec:construction}). We show that each
iterate $S_{\psi_i}W_{i-1}$ acts as a rotation within a single invariant
plane and derive the recurrence for its angle
(Section~\ref{subsec:reduction}), and show that a cascade of phase-tuned
corrections prepares the target with unit probability
(Section~\ref{subsec:cascade}).

\subsection{The Construction and Proof Overview}
\label{subsec:construction}
The algorithm prepares the target state $\ket{x}$ from the initial
state $\ket{\psi}$. It does so through the recursively defined
operator $W_i$ of Eq.~\eqref{eq:W-def}, which forms the per-level
iterate $S_{\psi_i} W_{i-1}$. The iterate at level $i$ acts non-trivially on a single plane that
contains the state $\ket{x_{m\ldots i+1}} \otimes \ket{\psi_{i\ldots 1}}$. It rotates on this plane through the angle $-2\gamma_i$, a known angle computed in
advance from the local overlaps. The reflection $W_i$ is taken about the
image of the axis of its central $S_{\psi_i}$ under $t_i$ of these
rotations. The angle $2t_i\gamma_i$ carried by the level sets the
angle $\gamma_{i+1}$ of the level above, through the
recurrence~\eqref{eq:intro-recurrence}. This repeats until the outermost
iterate $S_{\psi_m} W_{m-1}$, whose plane contains the full initial
state $\ket{\psi}$ and is not conditioned on any higher subspace
holding its component of $\ket{x}$. We show that this iterate rotates
$\ket{\psi}$ towards $\ket{x_m} \otimes \ket{\phi_{m-1}}$, with
$\ket{\phi_{m-1}}$ a known state of the subspaces beneath, lying in
the plane of the level below at a known angle from that level's own
target $\ket{x_{m-1}} \otimes \ket{\phi_{m-2}}$. The same procedure
therefore applies again, repeating until the state is $\ket{x}$. Every count and phase in this procedure is computed beforehand, in the
classical stage of Algorithm~\ref{alg:protocol}, which states the full
construction.

\renewcommand{\algorithmicrequire}{\textbf{Input:}}
\renewcommand{\algorithmicensure}{\textbf{Output:}}
\begin{algorithm}[t]
\caption{Search via recursive reflections over a decomposed Hilbert space}
\label{alg:protocol}
\begin{algorithmic}[1]
\REQUIRE A partition
$\mathcal{H} = \mathcal{H}_m \otimes \cdots \otimes \mathcal{H}_1$ with
product states $\ket{\psi}$ and $\ket{x}$, local overlaps
$\sin\theta_i$, and a schedule $t_1, \ldots, t_{m-1} \ge 1$ obeying the
bound~\eqref{eq:window-assumption}.
\ENSURE The target
$\ket{x} = \ket{x_m} \otimes \cdots \otimes \ket{x_1}$, up to a global
phase, with unit probability of success.
\item[]
\item[] \textbf{Classical precomputation.} $O(m)$ arithmetic operations,
independent of $\ket{x}$.
\STATE $\gamma_1 \gets \theta_1$
\FOR{$i = 2, \ldots, m$}
    \STATE $\gamma_i \gets
    \arcsin\bigl(\sin\theta_i\,\sin(2t_{i-1}\gamma_{i-1})\bigr)$
    \COMMENT{Lemma~\ref{lem:recursion}}
\ENDFOR
\STATE $J \gets \bigl\lfloor \pi/(4\gamma_m) - 1/2 \bigr\rfloor$
\COMMENT{Lemma~\ref{lem:optimal-tm}}
\STATE $(\alpha_m, \beta_m) \gets$ the phases resolving the residual
rotation $\pi/2 - (2J+1)\gamma_m$
\COMMENT{Lemma~\ref{lem:outermost-step}}
\FOR{$k = m-1, \ldots, 2$}
    \STATE $(\alpha_k, \beta_k) \gets$ the phases
    of~\eqref{eq:fractional-angles} evaluated at $\gamma = \gamma_k$
    \COMMENT{Lemma~\ref{lem:fractional-step}}
\ENDFOR
\STATE $\alpha_1 \gets 2\arcsin\bigl(1/(2\cos\theta_1)\bigr)$, \quad
$\beta_1 \gets -\alpha_1$ \COMMENT{Lemma~\ref{lem:base-step}}
\item[]
\item[] \textbf{Quantum stage 1, resolving the outermost register.}
\STATE Prepare
$\ket{\psi} = \ket{\psi_m} \otimes \cdots \otimes \ket{\psi_1}$
\STATE Apply $(S_{\psi_m}W_{m-1})^{J}$
\COMMENT{Lemma~\ref{lem:optimal-tm}}
\STATE Apply $S_{\psi_m}(\alpha_m)\,W_{m-1}(\beta_m)$
\COMMENT{Lemma~\ref{lem:outermost-step}; state is now
$\ket{x_m} \otimes \ket{\phi_{m-1}}$}
\item[]
\item[] \textbf{Quantum stage 2, the corrective cascade.}
\FOR{$k = m-1, \ldots, 1$}
    \STATE Apply $(W_{k-1}S_{\psi_k})^{t_k}$
    \COMMENT{rewinding the level-$k$ iterates}
    \STATE Apply $S_{\psi_k}(\alpha_k)\,W_{k-1}(\beta_k)$
    \COMMENT{Lemmas~\ref{lem:fractional-step}
    and~\ref{lem:base-step}; state is now
    $\ket{x_{m\ldots k}} \otimes \ket{\phi_{k-1}}$}
\ENDFOR
\RETURN the register contents $\ket{x}$
\COMMENT{Theorem~\ref{thm:cascade}}
\end{algorithmic}
\end{algorithm}

\subsection{Single-Plane Dynamics}
\label{sec:dynamics}
We prove that the non-trivial dynamics of the recursion are confined,
at each level $i$, to the sector
$\ket{x_{m\ldots i+1}} \otimes \mathcal{H}_{i\ldots 1}$, and we
determine those dynamics exactly.
Section~\ref{subsec:reduction} reduces the rotation generated by the
level-$i$ iterate $S_{\psi_i}W_{i-1}$ to a single plane within that
sector (Lemmas~\ref{lem:reduction} and~\ref{lem:eigenvalue}).
Section~\ref{subsec:angle-recurrence} shows that the rotation angle at
each level is determined by the scalar
recurrence~\eqref{eq:engine-recurrence}
(Lemma~\ref{lem:recursion}). Section~\ref{subsec:planes} constructs an
orthonormal basis for the plane and fixes the orientation of the
rotation (Lemma~\ref{lem:basis}).
Section~\ref{subsec:evolution} determines the plane and angle generated
by the outermost iterate $S_{\psi_m}W_{m-1}$ and the overlap of the
evolved state with the target
(Lemmas~\ref{lem:outermost-evolution} and~\ref{lem:optimal-tm}).

\subsubsection{Reduction to One Plane}
\label{subsec:reduction}
Lemma~\ref{lem:reduction} confines the non-trivial action of the
iterate $S_{\psi_i}W_{i-1}$ to the sector
$\ket{x_{m\ldots i+1}} \otimes \mathcal{H}_{i\ldots 1}$, and
Lemma~\ref{lem:eigenvalue} extracts the single principal angle of the
rotation it generates there. Both statements are for the levels
$i \ge 2$, with the base level entering as the case
$\gamma_1 = \theta_1$, where $W_0 = S_x$ makes the iterate ordinary
amplitude amplification about $\ket{x_1}$ on the sector where the
upper subspaces hold $\ket{x_{m\ldots 2}}$.

The rotation angle follows from the two-subspace calculus of
Halmos~\cite{halmosSubspaces}, in which the squared cosines of the
principal angles between two subspaces are the eigenvalues of
$P_1P_2P_1$ on $\mathrm{range}(P_1)$. We write $\Pi_i^\perp$ for the
projector onto the complement of $\ket{x_i}$ within $\mathcal{H}_i$,
\begin{equation}
\label{eq:perp-projector}
    \Pi_i^\perp := \mathbb{I}_i - \ket{x_i}\bra{x_i},
\end{equation}
and $\ket{x_i^\perp}$ for the normalised component of $\ket{\psi_i}$
within that complement,
\begin{equation}
\label{eq:perp-vector}
    \ket{x_i^\perp} := \frac{\Pi_i^\perp\ket{\psi_i}}{\cos\theta_i},
\end{equation}
well defined because
$\bra{\psi_i}\Pi_i^\perp\ket{\psi_i} = \cos^2\theta_i$ is non-zero for
$\theta_i \in (0,\pi/2)$, so that
\begin{equation}
\label{eq:perp-overlap}
    \Pi_i^\perp\ket{\psi_i} = \cos\theta_i\ket{x_i^\perp},
    \qquad
    \braket{x_i^\perp|\psi_i} = \cos\theta_i,
\end{equation}
the second identity following from the first and the positivity of the
overlaps. The pair $\{\ket{x_i},\ket{x_i^\perp}\}$ is then an
orthonormal basis of $\mathrm{span}\{\ket{x_i},\ket{\psi_i}\}$, and
$\ket{\psi_i} = \sin\theta_i\ket{x_i} + \cos\theta_i\ket{x_i^\perp}$.
On multiple subspaces we use only the projector, writing
$\Pi_{m\ldots i}^\perp := \mathbb{I}_{m\ldots i} -
\ket{x_{m\ldots i}}\bra{x_{m\ldots i}}$ and analogously.

\begin{lemma}[Reduction to the relevant subspace]
\label{lem:reduction}
Fix $i \ge 2$ and let $P_\psi$ and $P_W$ denote the projectors onto the
subspaces reflected by $S_{\psi_i}$ and $W_{i-1}$ respectively, so that
$S_{\psi_i} = \mathbb{I} - 2P_\psi$ and $W_{i-1} = \mathbb{I} - 2P_W$.
Write $A_{i-1} := (S_{\psi_{i-1}}W_{i-2})^{t_{i-1}}$, so that
$A_{i-1}^\dagger = (W_{i-2}S_{\psi_{i-1}})^{t_{i-1}}$ by
Lemma~\ref{lem:conjugation}. Then $P_W$ has the block form
\begin{equation}
\label{eq:PW-block}
    P_W = \ket{x_{m\ldots i}}\bra{x_{m\ldots i}} \otimes
    A_{i-1}\ket{\psi_{i-1\ldots 1}}\bra{\psi_{i-1\ldots 1}}A_{i-1}^\dagger
    + \Pi_{m\ldots i}^\perp \otimes
    \ket{\psi_{i-1\ldots 1}}\bra{\psi_{i-1\ldots 1}},
\end{equation}
in whose first term $A_{i-1}$ denotes its restriction to
$\mathcal{H}_{i-1} \otimes \cdots \otimes \mathcal{H}_1$. The subspace
\[
    \mathcal{F} := \{\ket{v}\ket{\psi_i}\ket{\psi_{i-1\ldots 1}} :
    \ket{v} \perp \ket{x_{m\ldots i+1}}\}
\]
is fixed by $S_{\psi_i}W_{i-1}$ with eigenvalue $+1$, and on the sector
$\mathrm{range}(\Pi_{m\ldots i+1}^\perp) \otimes \mathcal{H}_i \otimes
\cdots \otimes \mathcal{H}_1$ the iterate $S_{\psi_i}W_{i-1}$ has
spectrum contained in $\{+1,-1\}$. In the complementary sector the principal
angles between $P_\psi$ and $P_W$ coincide with those between the reduced
projectors
\begin{equation}
\label{eq:reduced-projectors}
\begin{split}
    \tilde{P}_\psi &= \ket{\psi_i}\bra{\psi_i} \otimes
    \ket{\psi_{i-1\ldots 1}}\bra{\psi_{i-1\ldots 1}}
\\
    \tilde{P}_W &= \ket{x_i}\bra{x_i} \otimes
    A_{i-1}\ket{\psi_{i-1\ldots 1}}\bra{\psi_{i-1\ldots 1}}A_{i-1}^\dagger
    + \Pi_i^\perp \otimes \ket{\psi_{i-1\ldots 1}}\bra{\psi_{i-1\ldots 1}}
\end{split}
\end{equation}
acting within $\mathcal{H}_i \otimes \cdots \otimes \mathcal{H}_1$.
\end{lemma}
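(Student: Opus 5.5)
The plan is to prove the block form~\eqref{eq:PW-block} by induction on the level. We use Lemma~\ref{lem:conjugation}, $P_{W_{j}} = A_j P_{\psi_j} A_j^\dagger$, together with the orthogonal splitting of registers $m,\ldots,j+1$ into $\ket{x_{m\ldots j+1}}$ and its complement $\Pi^\perp_{m\ldots j+1}$. The inductive claim, for each $j \ge 1$, is
\[
P_{W_j} = \ket{x_{m\ldots j+1}}\bra{x_{m\ldots j+1}} \otimes A_j\ket{\psi_{j\ldots 1}}\bra{\psi_{j\ldots 1}}A_j^\dagger + \Pi^\perp_{m\ldots j+1}\otimes \ket{\psi_{j\ldots 1}}\bra{\psi_{j\ldots 1}}.
\]
The first term implicitly asserts that $A_j$ preserves the sector $\ket{x_{m\ldots j+1}}\otimes\mathcal{H}_{j\ldots 1}$. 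Taking $j=i-1$ gives exactly~\eqref{eq:PW-block}.

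\textbf{Base case $j=1$.} We have $W_0 = S_x$ and $x_{m\ldots 1} = x_{m\ldots 2}\,x_1$. On the sector $\ket{x_{m\ldots 2}}\otimes\mathcal{H}_1$ the oracle acts as $S_{x_1}$. On the sector $\mathrm{range}(\Pi^\perp_{m\ldots 2})\otimes\mathcal{H}_1$ it acts as the identity. The partial diffuser $S_{\psi_1}$ acts as the identity on registers $m,\ldots,2$, so it preserves both sectors. Hence $A_1$ is block diagonal:
\begin{itemize}
\item on the $\ket{x_{m\ldots 2}}$ sector it is the restricted $A_1$;
\item on the complement sector it is $S_{\psi_1}^{t_1}$, which commutes with $P_{\psi_1}$.
\end{itemize}
Conjugating $P_{\psi_1}$ by $A_1$ therefore reproduces the claimed two terms.

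\textbf{Inductive step.} Assume the form holds for $W_{j-1}$, with its first term on $\ket{x_{m\ldots j}}$ and its second on $\Pi^\perp_{m\ldots j}$. Refine the splitting of register $j$ through $x_{m\ldots j} = x_{m\ldots j+1}\,x_j$.
\begin{itemize}
\item On the sector $\mathrm{range}(\Pi^\perp_{m\ldots j+1})\otimes\mathcal{H}_{j\ldots 1}$, every vector lies in $\mathrm{range}(\Pi^\perp_{m\ldots j})$. There $W_{j-1}$ acts as $\mathbb{I}_j\otimes(\mathbb{I}-2\ket{\psi_{j-1\ldots 1}}\bra{\psi_{j-1\ldots 1}})$. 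Its projector dominates $P_{\psi_j}$, since $\ket{\psi_j}\ket{\psi_{j-1\ldots 1}}$ lies in its range. Nested projectors commute, so $S_{\psi_j}$ and $W_{j-1}$ commute on this sector. Hence $A_j$ commutes with $P_{\psi_j}$ there, and conjugation leaves $\Pi^\perp\otimes\ket{\psi_{j\ldots 1}}\bra{\psi_{j\ldots 1}}$ unchanged.
\item The sector $\ket{x_{m\ldots j+1}}\otimes\mathcal{H}_{j\ldots 1}$ is preserved by both $S_{\psi_j}$ and $W_{j-1}$, the latter by the inductive form. So $A_j$ restricts to it, giving the first term.
\end{itemize}

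\textbf{The claims about $S_{\psi_i}W_{i-1}$.} On the sector $\mathrm{range}(\Pi^\perp_{m\ldots i+1})\otimes\mathcal{H}_{i\ldots 1}$, the same nesting argument applies: $P_\psi = \Pi^\perp\otimes\ket{\psi_{i\ldots 1}}\bra{\psi_{i\ldots 1}}$ and $P_W = \Pi^\perp\otimes\mathbb{I}_i\otimes\ket{\psi_{i-1\ldots 1}}\bra{\psi_{i-1\ldots 1}}$. Since $P_\psi \le P_W$, the two reflections commute, and their product is a commuting product of involutions with spectrum in $\{\pm1\}$. Vectors $\ket{v}\ket{\psi_i}\ket{\psi_{i-1\ldots 1}}$ with $\ket v\perp\ket{x_{m\ldots i+1}}$ lie in both ranges. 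Each reflection therefore gives $-1$, and the product fixes them; this establishes the statement about $\mathcal{F}$.

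On the complementary sector $\ket{x_{m\ldots i+1}}\otimes\mathcal{H}_{i\ldots 1}$, both projectors preserve the sector. We restrict them, splitting register $i$ via $\ket{x_i}\bra{x_i}+\Pi_i^\perp$ so that $\Pi^\perp_{m\ldots i}$ restricts to $\Pi_i^\perp$ there. Stripping the fixed factor $\ket{x_{m\ldots i+1}}$ yields exactly $\tilde P_\psi$ and $\tilde P_W$ of~\eqref{eq:reduced-projectors}. Because the sector is invariant under both projectors, the Halmos principal angles computed on it coincide with those of the reduced pair.

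\textbf{Main obstacle.} I expect the delicate point to be the block-diagonality bookkeeping in the inductive step. One must check that the invariance of the $\ket{x_{m\ldots j+1}}$ sector under $W_{j-1}$ follows from the coarser inductive splitting over registers $m,\ldots,j$, and that the nesting $P_{\psi_j}\le P_{W_{j-1}}$ holds on the entire complement sector rather than only on part of it.
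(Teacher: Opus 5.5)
Your proposal is correct and follows the paper's proof step for step. Both arguments run the same induction through $P_{W_j} = A_j P_{\psi_j} A_j^\dagger$, with $A_j$ block-diagonal over the split of the upper registers into $\ket{x_{m\ldots j+1}}$ and its complement, then use the same nesting $P_\psi \le P_W$ on the $\Pi^\perp_{m\ldots i+1}$ sector and the same restriction to the $\ket{x_{m\ldots i+1}}$ sector; your ``nested projectors commute'' step on the complement block is the paper's observation that $W_{j-1}$ and $S_{\psi_j}$ each negate $\ket{\psi_{j\ldots 1}}$ there.
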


\begin{proof}
We establish~\eqref{eq:PW-block} by induction on $i$, using
Lemma~\ref{lem:conjugation} in the form
$P_{W_{i-1}} = A_{i-1}P_{\psi_{i-1}}A_{i-1}^\dagger$, and then read the
remaining assertions off it.

At $i = 2$, $P_{\psi_1} = \mathbb{I}_{m\ldots 2} \otimes
\ket{\psi_1}\bra{\psi_1}$ and both factors of $A_1 = (S_{\psi_1}S_x)^{t_1}$
are block-diagonal for the splitting of $\mathcal{H}_{m\ldots 2}$ into
$\ket{x_{m\ldots 2}}$ and its complement, since $S_x = W_0$ restricts to
$\mathbb{I}_1 - 2\ket{x_1}\bra{x_1}$ on the first block and to
$\mathbb{I}_1$ on the second. Hence $A_1$ restricts to
$(S_{\psi_1}W_0)^{t_1}$ on the first block and, on the second, fixes
$\ket{\psi_1}$ up to a sign $(-1)^{t_1}$ that cancels against
$A_1^\dagger$. Conjugating on each block separately
gives~\eqref{eq:PW-block}. For $i \ge 3$ the same argument runs at the
coarser splitting of $\mathcal{H}_{m\ldots i}$.
Assume~\eqref{eq:PW-block} one level down. Both
$\ket{x_{m\ldots i-1}}\bra{x_{m\ldots i-1}}$ and
$\Pi_{m\ldots i-1}^\perp$ are block-diagonal there, and $S_{\psi_{i-1}}$
acts as the identity on $\mathcal{H}_{m\ldots i}$, so $A_{i-1}$ is
block-diagonal too. On the block $\ket{x_{m\ldots i}}$ it restricts to
the operator of~\eqref{eq:PW-block}, and on the complement it restricts
to one fixing $\ket{\psi_{i-1\ldots 1}}$, since $W_{i-2}$ and
$S_{\psi_{i-1}}$ each negate $\ket{\psi_{i-1\ldots 1}}$ there.
Conjugating $P_{\psi_{i-1}}$ on each block completes the induction.

Now $P_\psi = \mathbb{I}_{m\ldots i+1} \otimes
\ket{\psi_{i\ldots 1}}\bra{\psi_{i\ldots 1}}$ has component
$\Pi_{m\ldots i+1}^\perp \otimes \ket{\psi_i}\bra{\psi_i} \otimes
\ket{\psi_{i-1\ldots 1}}\bra{\psi_{i-1\ldots 1}}$ on
$\mathrm{range}(\Pi_{m\ldots i+1}^\perp)$, where the corresponding
component of~\eqref{eq:PW-block} is the larger projector
$\Pi_{m\ldots i+1}^\perp \otimes \mathbb{I}_i \otimes
\ket{\psi_{i-1\ldots 1}}\bra{\psi_{i-1\ldots 1}}$. On this sector the
range of the restriction of $P_\psi$ is therefore contained in that of
$P_W$, so the two projectors commute, the reflections
$S_{\psi_i}$ and $W_{i-1}$ commute with them, and their product is an
involution with spectrum in $\{+1,-1\}$. On $\mathcal{F}$, which is the
range of the restricted $P_\psi$, both projectors act as the identity,
so $S_{\psi_i}W_{i-1}$ applies the phase $(-1)^2 = 1$ there. All
non-real spectrum therefore comes from the $\ket{x_{m\ldots i+1}}$
sector, where the projectors reduce to $\tilde{P}_\psi$ and
$\tilde{P}_W$ as claimed in~\eqref{eq:reduced-projectors}.
\end{proof}
Having confined the rotation to the reduced projectors, we extract the
single principal angle between them as the eigenvalue of
$\tilde{P}_\psi \tilde{P}_W \tilde{P}_\psi$.
\begin{lemma}[Principal angle eigenvalue]
\label{lem:eigenvalue}
The reduced projector $\tilde{P}_\psi$ has rank one, and
$\ket{\psi_{i\ldots 1}}$ is an eigenvector of
$\tilde{P}_\psi \tilde{P}_W \tilde{P}_\psi$ with eigenvalue
\begin{equation}
\label{eq:lambda}
    \lambda = |\braket{x_i|\psi_i}|^2 \cdot
    \left|\braket{\psi_{i-1\ldots 1} | (S_{\psi_{i-1}}W_{i-2})^{t_{i-1}} |
    \psi_{i-1\ldots 1}}\right|^2 + |\braket{x_i^\perp|\psi_i}|^2.
\end{equation}
The single principal angle $\sigma$ between the two subspaces is therefore
$\sigma = \arccos(\sqrt{\lambda}) = |\gamma_i|$, where $\gamma_i$ is the
signed angle of the recurrence~\eqref{eq:engine-recurrence} whose sign
determines the orientation of the rotation plane (Lemma~\ref{lem:basis}). 
\end{lemma}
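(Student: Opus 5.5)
Rank one is immediate: $\tilde P_\psi = \ket{\psi_{i\ldots 1}}\bra{\psi_{i\ldots 1}}$ is a tensor product of two rank-one projectors. For any rank-one projector $\ket{u}\bra{u}$ we have $\ket{u}\bra{u}\,Q\,\ket{u}\bra{u} = \braket{u|Q|u}\,\ket{u}\bra{u}$. Taking $\ket{u} = \ket{\psi_{i\ldots 1}}$ and $Q = \tilde P_W$, the vector $\ket{\psi_{i\ldots 1}}$ is an eigenvector with eigenvalue $\lambda = \bra{\psi_{i\ldots 1}}\tilde P_W\ket{\psi_{i\ldots 1}}$. It remains to evaluate this expectation using the two terms of~\eqref{eq:reduced-projectors}.

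Write $\ket{\psi_{i\ldots 1}} = \ket{\psi_i}\ket{\psi_{i-1\ldots 1}}$. Both terms are tensor products, so each expectation factorises.

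\textbf{First term.} It gives $|\braket{x_i|\psi_i}|^2$ times $\bra{\psi_{i-1\ldots 1}}A_{i-1}\ket{\psi_{i-1\ldots 1}}\bra{\psi_{i-1\ldots 1}}A_{i-1}^\dagger\ket{\psi_{i-1\ldots 1}}$. This equals $|\bra{\psi_{i-1\ldots 1}}A_{i-1}\ket{\psi_{i-1\ldots 1}}|^2$, with $A_{i-1} = (S_{\psi_{i-1}}W_{i-2})^{t_{i-1}}$ restricted as in Lemma~\ref{lem:reduction}.

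\textbf{Second term.} It gives $\bra{\psi_i}\Pi_i^\perp\ket{\psi_i}\cdot 1$. By~\eqref{eq:perp-overlap} this is $\cos^2\theta_i = |\braket{x_i^\perp|\psi_i}|^2$.

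Summing the two terms yields~\eqref{eq:lambda}.

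By Halmos's two-subspace description quoted above, $\lambda$ is the squared cosine of the principal angle between $\mathrm{range}(\tilde P_\psi)$ and $\mathrm{range}(\tilde P_W)$. Since $\tilde P_\psi$ has rank one, there is a single such angle, $\sigma = \arccos\sqrt{\lambda}$.

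To identify $\sigma$ with $|\gamma_i|$, I will use the inductive hypothesis at level $i-1$, which is the content of the plane analysis for the level below (Lemmas~\ref{lem:recursion} and~\ref{lem:basis}, applied at $i-1$, or Grover's two-dimensional rotation when $i-1=1$). On the sector where the upper registers hold $\ket{x_{m\ldots i}}$, the iterate $S_{\psi_{i-1}}W_{i-2}$ rotates the plane containing $\ket{\psi_{i-1\ldots 1}}$ through $-2\gamma_{i-1}$. Hence $A_{i-1}$ rotates $\ket{\psi_{i-1\ldots 1}}$ through $2t_{i-1}\gamma_{i-1}$ within that plane, and
\[
\bigl|\bra{\psi_{i-1\ldots 1}}A_{i-1}\ket{\psi_{i-1\ldots 1}}\bigr| = \bigl|\cos(2t_{i-1}\gamma_{i-1})\bigr|.
\]
Substituting gives
\[
\lambda = \sin^2\theta_i\cos^2(2t_{i-1}\gamma_{i-1}) + \cos^2\theta_i = 1 - \sin^2\theta_i\sin^2(2t_{i-1}\gamma_{i-1}).
\]
Therefore $\sin^2\sigma = \sin^2\theta_i\sin^2(2t_{i-1}\gamma_{i-1}) = \sin^2\gamma_i$. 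Because both $\sigma$ and $\arcsin|\cdot|$ lie in $[0,\pi/2]$, this gives $\sigma = |\gamma_i|$.

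\textbf{Main obstacle.} The main obstacle is this last step. It requires knowing that the diagonal matrix element of $A_{i-1}$ on $\ket{\psi_{i-1\ldots 1}}$ is exactly $\cos(2t_{i-1}\gamma_{i-1})$, with no contribution from components outside the invariant plane. I would handle it by noting that $\ket{\psi_{i-1\ldots 1}}$ lies entirely in the two-dimensional invariant plane identified at level $i-1$. This holds because the plane contains $\ket{x_{m\ldots i}}\ket{\psi_{i-1\ldots 1}}$ by the level-$(i-1)$ version of Lemma~\ref{lem:reduction}. The rotation then acts as a genuine $2\times 2$ rotation on it. The sign of $\gamma_i$ is only fixed by the orientation convention of Lemma~\ref{lem:basis}, which is why the statement asserts $|\gamma_i|$ only.
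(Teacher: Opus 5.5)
Your proof is correct and follows the paper's argument: the paper likewise sandwiches $\tilde P_W$ between the rank-one $\tilde P_\psi$ and evaluates the two tensor-product terms, obtaining the same scalar multiple of $\tilde P_\psi$. Your identification $\sigma = |\gamma_i|$, which substitutes $|\cos(2t_{i-1}\gamma_{i-1})|$ from the level below, is the inductive step the paper carries out in the proof of Lemma~\ref{lem:recursion}; you have simply folded that induction into this lemma.
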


\begin{proof}
Let $\ket{\Psi} := \ket{\psi_{i-1\ldots 1}}$ and $A := A_{i-1}$, so
that $\tilde{P}_\psi = \ket{\psi_i}\bra{\psi_i} \otimes \ket{\Psi}\bra{\Psi}$
has rank one. Multiplying $\tilde{P}_\psi$ into $\tilde{P}_W$ from the left
register by register, and using
$\bra{\psi_i}\Pi_i^\perp = \braket{\psi_i|x_i^\perp}\bra{x_i^\perp}$
from~\eqref{eq:perp-overlap} together with
$\bra{\Psi}A^{\dagger}\ket{\Psi} = \overline{\braket{\Psi|A|\Psi}}$,
\begin{equation}
\begin{split}
    \tilde{P}_\psi \tilde{P}_W \tilde{P}_\psi
    &= \Big[\braket{\psi_i|x_i}\,\ket{\psi_i}\bra{x_i} \otimes
    \braket{\Psi|A|\Psi}\,\ket{\Psi}\bra{\Psi}A^{\dagger} \\
    &\qquad + \braket{\psi_i|x_i^{\perp}}\,\ket{\psi_i}\bra{x_i^{\perp}}
    \otimes \ket{\Psi}\bra{\Psi}\Big]\tilde{P}_\psi \\
    &= \Big[\,\big|\braket{x_i|\psi_i}\big|^{2}\,
    \big|\braket{\Psi|A|\Psi}\big|^{2} +
    \big|\braket{x_i^{\perp}|\psi_i}\big|^{2}\Big]\tilde{P}_\psi .
\end{split}
\end{equation}
Since $\tilde{P}_\psi = \ket{\psi_{i\ldots 1}}\bra{\psi_{i\ldots 1}}$, this
says that $\ket{\psi_{i\ldots 1}}$ is an eigenvector with the
eigenvalue~\eqref{eq:lambda}, the squared cosine of the unique principal
angle between the two ranges.
\end{proof}

\subsubsection{The Angle Recurrence}
\label{subsec:angle-recurrence}

The dependence of the angle $\gamma_i$ of Lemma~\ref{lem:eigenvalue} on the
operators one level below collapses to a single scalar recurrence in
$\gamma_{i-1}$.

\begin{lemma}[Recursive overlap formula]
\label{lem:recursion}
Let $\sin\theta_i = \braket{x_i | \psi_i}$ and define $\gamma_i$ recursively
by $\gamma_1 = \theta_1$ and
\begin{equation}
\label{eq:engine-recurrence}
    \sin\gamma_i = \sin\theta_i\sin(2t_{i-1}\gamma_{i-1}).
\end{equation}
Then for all $i \ge 1$,
\begin{equation}
    \left|\bra{x_{m\ldots i+1}}\bra{\psi_{i\ldots 1}}
    (S_{\psi_i}W_{i-1})^{t_i}
    \ket{x_{m\ldots i+1}}\ket{\psi_{i\ldots 1}}\right|^2
    = \cos^2(2t_i \gamma_i).
\end{equation}
\end{lemma}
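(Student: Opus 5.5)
We argue by induction on $i$, using the two-reflection picture: on the sector where the upper registers hold $\ket{x_{m\ldots i+1}}$, the iterate $S_{\psi_i}W_{i-1}$ is a product of two reflections about the ranges of $\tilde P_\psi$ and $\tilde P_W$ from Lemma~\ref{lem:reduction}. Since $\tilde P_\psi$ has rank one (Lemma~\ref{lem:eigenvalue}), the vector $\ket{\psi_{i\ldots 1}}$ lies in a single two-dimensional invariant plane. This plane is spanned by $\ket{\psi_{i\ldots 1}}$ and the normalised component of $\tilde P_W\ket{\psi_{i\ldots 1}}$ orthogonal to it. On the plane the product of the two reflections is a rotation through twice the principal angle $\sigma$ between the two ranges. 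Hence the overlap of $\ket{\psi_{i\ldots1}}$ with its image after $t_i$ iterates is $\cos(2t_i\sigma)$ up to a sign, and its squared modulus is $\cos^2(2t_i\sigma)$. The claim therefore reduces to showing $\sigma = \gamma_i$ with $\gamma_i$ satisfying~\eqref{eq:engine-recurrence}, together with the plane being invariant so that no leakage spoils the overlap.

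For the base case $i=1$, $W_0 = S_x$ restricted to the sector $\ket{x_{m\ldots 2}}$ acts as $\mathbb{I}_1 - 2\ket{x_1}\bra{x_1}$. The iterate is therefore ordinary amplitude amplification in $\mathrm{span}\{\ket{x_1},\ket{\psi_1}\}$, with $\braket{x_1|\psi_1} = \sin\theta_1$. The reflection axes are $\ket{\psi_1}$ and $\ket{x_1^\perp}$, at angle $\theta_1$, so each iterate rotates by $\pm 2\theta_1$. This gives $|\bra{\psi_1}(S_{\psi_1}S_x)^{t_1}\ket{\psi_1}|^2 = \cos^2(2t_1\theta_1)$ with $\gamma_1=\theta_1$.

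For the inductive step, assume the statement at level $i-1$, namely $|\braket{\Psi|A_{i-1}|\Psi}|^2 = \cos^2(2t_{i-1}\gamma_{i-1})$ with $\ket\Psi = \ket{\psi_{i-1\ldots1}}$. Note that $A_{i-1}$ restricted to the $\ket{x_{m\ldots i}}$ block is exactly the operator appearing in the induction hypothesis, by the block form~\eqref{eq:PW-block}. Substituting into~\eqref{eq:lambda}, and using $|\braket{x_i^\perp|\psi_i}|^2 = \cos^2\theta_i$, we get
\[
\cos^2\sigma = \lambda = \sin^2\theta_i\cos^2(2t_{i-1}\gamma_{i-1}) + \cos^2\theta_i = 1 - \sin^2\theta_i\sin^2(2t_{i-1}\gamma_{i-1}).
\]
Hence $\sin\sigma = \sin\theta_i\,|\sin(2t_{i-1}\gamma_{i-1})|$. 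Under~\eqref{eq:window-assumption} we have $2t_{i-1}\gamma_{i-1}\in(0,\pi/2]$, so the absolute value drops and $\sigma = \gamma_i$ as defined. Applying the two-reflection rotation argument at level $i$ then gives the claimed $\cos^2(2t_i\gamma_i)$.

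The main obstacle is justifying rigorously that $\ket{\psi_{i\ldots1}}$ evolves only in one plane under $(S_{\psi_i}W_{i-1})^{t_i}$, since $\tilde P_W$ has large rank. I would use the Halmos/Jordan decomposition. Because $\tilde P_\psi$ has rank one, the space $\mathrm{span}\{\ket{\psi_{i\ldots1}},\tilde P_W\ket{\psi_{i\ldots1}}\}$ is invariant under both reflections: $\tilde P_W$ maps $\tilde P_W\ket{\psi_{i\ldots1}}$ to itself, and $\tilde P_\psi$ maps everything into the span of $\ket{\psi_{i\ldots1}}$. Within that plane the two reflections have axes at angle $\sigma$, which yields the rotation. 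One degenerate case needs separate treatment: if $\tilde P_W\ket{\psi_{i\ldots1}}$ is parallel to $\ket{\psi_{i\ldots1}}$, then $\sigma\in\{0,\pi/2\}$ and the formula still holds trivially. Finally, the outer $\ket{x_{m\ldots i+1}}$ factor is inert by Lemma~\ref{lem:reduction}, so the reduced computation equals the full matrix element.
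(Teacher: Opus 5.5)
Your proposal is correct and follows essentially the same route as the paper. In both, the iterate has a single invariant plane on the $\ket{x_{m\ldots i+1}}$ sector and rotates through twice the principal angle, with the modulus absorbing the sign; the base case is ordinary amplitude amplification, and the inductive step substitutes the level-$(i-1)$ overlap into $\lambda$ of Lemma~\ref{lem:eigenvalue}, using the window assumption to fix the sign. Your explicit check that $\mathrm{span}\{\ket{\psi_{i\ldots1}},\tilde P_W\ket{\psi_{i\ldots1}}\}$ is invariant under both projectors fills in a step the paper only asserts.
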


\begin{proof}
We argue in two parts. First, on the $\ket{x_{m\ldots i+1}}$ sector of
Lemma~\ref{lem:reduction}, $S_{\psi_i}$ acts as the rank-one reflection
$\mathbb{I} - 2\tilde{P}_\psi$, so the iterate has a single invariant
plane there and rotates within it through $2\gamma_i$ in magnitude,
$\gamma_i$ being the principal angle of Lemma~\ref{lem:eigenvalue}.
Applied $t_i$ times to the plane vector
$\ket{x_{m\ldots i+1}}\ket{\psi_{i\ldots 1}}$, a rotation through
$\pm 2\gamma_i$ per iterate yields
\begin{equation}
    (S_{\psi_i}W_{i-1})^{t_i}\ket{x_{m\ldots i+1}}\ket{\psi_{i\ldots 1}}
    = \zeta\,\ket{x_{m\ldots i+1}}
    \big(\cos(2t_i\gamma_i)\ket{\psi_{i\ldots 1}} -
    \sin(2t_i\gamma_i)\ket{\psi_{i\ldots 1}^\perp}\big),
\end{equation}
for some unit vector
$\ket{\psi_{i\ldots 1}^\perp} \perp \ket{\psi_{i\ldots 1}}$ in the
plane and a sign $\zeta \in \{+1,-1\}$. Only the modulus of the
overlap is used, on which neither $\zeta$ nor the direction of the
rotation bears, so the overlap equals $\cos^2(2t_i\gamma_i)$.

Second, that this principal angle is the one defined by the recurrence
follows by induction on $i$. At $i = 1$ the iterate is ordinary amplitude
amplification within $\mathcal{H}_1$ on the sector where the upper
subspaces resolve to $\ket{x_{m\ldots 2}}$, so the overlap is
$|\braket{\psi_1 | (S_{\psi_1}S_x)^{t_1} | \psi_1}|^2 = \cos^2(2t_1\theta_1)$,
the claimed formula with $\gamma_1 = \theta_1$. Assuming the formula one
level down, substitution into the eigenvalue~\eqref{eq:lambda} of
Lemma~\ref{lem:eigenvalue} gives
\begin{equation}
    \lambda = \sin^2\theta_i\cos^2(2t_{i-1}\gamma_{i-1}) + \cos^2\theta_i
      = 1 - \sin^2\theta_i\sin^2(2t_{i-1}\gamma_{i-1})
      = 1 - \sin^2\gamma_i = \cos^2\gamma_i
\end{equation}
by the recurrence~\eqref{eq:engine-recurrence}, so
$\arccos(\sqrt{\lambda}) = |\gamma_i|$. The principal angle of
Lemma~\ref{lem:eigenvalue} therefore agrees in magnitude with the
recurrence-defined $\gamma_i$. As per assumption~\eqref{eq:window-assumption}, this keeps it strictly positive.
With the first part, this completes the induction.
\end{proof}

\subsubsection{The Basis of the Plane}
\label{subsec:planes}

The iterate therefore acts as a rotation through $2\gamma_i$ in magnitude
within a two-dimensional invariant plane restricted to
$\ket{x_{m\ldots i+1}} \otimes \mathcal{H}_{i\ldots 1}$. We construct an
orthonormal basis
for that plane, oriented so that the iterate rotates by $-2\gamma_i$ at
level $i$ (Lemma~\ref{lem:basis}).

\begin{lemma}[Orthonormal basis for the rotation plane]
\label{lem:basis}
Define $\ket{\psi_{1}^\perp} := \big(\sin\theta_1\ket{\psi_1} -
\ket{x_1}\big)/\cos\theta_1$ and, recursively for $i \ge 2$,
\begin{equation}
\begin{split}
    \ket{\psi_{i\ldots 1}^\perp} =
    \frac{1}{\cos\gamma_i}\Big[\sin\gamma_i\ket{\psi_i}\ket{\psi_{i-1\ldots 1}}
    - \ket{x_i}\otimes\big(&\sin(2t_{i-1}\gamma_{i-1})
    \ket{\psi_{i-1\ldots 1}} \\
    &+ \cos(2t_{i-1}\gamma_{i-1})\ket{\psi_{i-1\ldots 1}^\perp}\big)\Big].
\end{split}
\label{eq:e2}
\end{equation}
Then, for $i \ge 2$, within the subspace $\ket{x_{m\ldots i+1}} \otimes
\mathcal{H}_i \otimes \cdots \otimes \mathcal{H}_1$ the rotation plane of
$S_{\psi_i}W_{i-1}$ is spanned by $\ket{x_{m\ldots i+1}}\ket{e_1^{(i)}}$
and $\ket{x_{m\ldots i+1}}\ket{e_2^{(i)}}$ with
\begin{equation}
    \ket{e_1^{(i)}} = \ket{\psi_{i\ldots 1}}, \qquad
    \ket{e_2^{(i)}} = \ket{\psi_{i\ldots 1}^\perp},
\end{equation}
and within this plane $S_{\psi_i}W_{i-1}$ acts as a rotation by
$-2\gamma_i$:
\begin{equation}
\label{eq:plane-evolution}
    (S_{\psi_i}W_{i-1})^{t}\ket{\psi_{i\ldots 1}} =
    \cos(2t\gamma_i)\ket{\psi_{i\ldots 1}} -
    \sin(2t\gamma_i)\ket{\psi_{i\ldots 1}^\perp}, \qquad t \in \mathbb{N}.
\end{equation}
At the base level $i = 1$, the same relations hold with $\gamma_1 = \theta_1$
up to an overall sign,
\begin{equation}
\label{eq:base-evolution}
    (S_{\psi_1}W_0)^{t}\ket{\psi_{1}} =
    (-1)^t\big(\cos(2t\theta_1)\ket{\psi_{1}} -
    \sin(2t\theta_1)\ket{\psi_{1}^\perp}\big),
\end{equation}
the factor $(-1)^t$ arising because $W_0 = S_x$ is a rank-one reflection on
the plane rather than the complement of one.
\end{lemma}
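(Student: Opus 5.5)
The plan is to prove Lemma~\ref{lem:basis} by induction on $i$. The base case is a direct two-dimensional Grover computation. The inductive step works entirely inside the sector $\ket{x_{m\ldots i+1}}\otimes\mathcal{H}_{i\ldots 1}$, where by Lemma~\ref{lem:reduction} the operators act as $S_{\psi_i}=\mathbb{I}-2\tilde P_\psi$ and $W_{i-1}=\mathbb{I}-2\tilde P_W$. Throughout I drop the spectator factor $\ket{x_{m\ldots i+1}}$.

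\emph{Base case $i=1$.} On $\mathrm{span}\{\ket{x_1},\ket{\psi_1}\}$, first check that $\ket{\psi_1^\perp}=(\sin\theta_1\ket{\psi_1}-\ket{x_1})/\cos\theta_1$ is a unit vector orthogonal to $\ket{\psi_1}$. Then compute
\[
S_{\psi_1}S_x\ket{\psi_1}=S_{\psi_1}\bigl(\ket{\psi_1}-2\sin\theta_1\ket{x_1}\bigr)
\]
and rewrite $\ket{x_1}=\sin\theta_1\ket{\psi_1}-\cos\theta_1\ket{\psi_1^\perp}$. This gives $-(\cos 2\theta_1\ket{\psi_1}-\sin2\theta_1\ket{\psi_1^\perp})$. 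On this plane $S_x$ is a rank-one reflection and $S_{\psi_1}$ is the negative of a rank-one reflection, so $S_{\psi_1}S_x$ is $-1$ times a rotation by $-2\theta_1$. Taking powers yields~\eqref{eq:base-evolution}.

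\emph{Inductive step.} Write $\ket{\Psi}=\ket{\psi_{i-1\ldots 1}}$, $A=A_{i-1}$ and $w=2t_{i-1}\gamma_{i-1}$. By the hypothesis (with $t=t_{i-1}$), $A\ket{\Psi}=\pm(\cos w\ket{\Psi}-\sin w\ket{\Psi^\perp})$, where $\ket{\Psi^\perp}\perp\ket{\Psi}$ is a unit vector. The sign $(-1)^{t}$, present only at the base, cancels in $A\ket{\Psi}\braket{\Psi|A^\dagger|\Psi}$. Inserting this into~\eqref{eq:reduced-projectors}, together with $\ket{\psi_i}=\sin\theta_i\ket{x_i}+\cos\theta_i\ket{x_i^\perp}$, gives
\[
\ket{e_1}-\tilde P_W\ket{e_1}=\sin\theta_i\sin w\,\ket{x_i}\otimes\ket{u}=\sin\gamma_i\,\ket{x_i}\otimes\ket{u},
\qquad \ket{u}:=\sin w\ket{\Psi}+\cos w\ket{\Psi^\perp},
\]
where the last equality is the recurrence~\eqref{eq:engine-recurrence}.

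\emph{The plane and its basis.} The plane is $\mathrm{span}\{\ket{e_1},\tilde P_W\ket{e_1}\}$. It is invariant under both projectors, since $\tilde P_\psi$ has rank one with range $\ket{e_1}$. By definition~\eqref{eq:e2}, $\ket{e_2}=(\sin\gamma_i\ket{e_1}-\ket{x_i}\otimes\ket{u})/\cos\gamma_i$, so $\ket{e_2}$ lies in this plane. The denominator is nonzero because $\sin\gamma_i\le\sin\theta_i<1$, and the plane is genuinely two-dimensional because $\gamma_i>0$ under~\eqref{eq:window-assumption}.

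For orthonormality, use $\braket{e_1|x_i\otimes u}=\sin\theta_i\sin w=\sin\gamma_i$ and $\|u\|=1$. Then $\braket{e_1|e_2}=0$ and $\|e_2\|^2=(1-\sin^2\gamma_i)/\cos^2\gamma_i=1$.

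\emph{Orientation.} Inverting the definition gives $\ket{x_i}\otimes\ket{u}=\sin\gamma_i\ket{e_1}-\cos\gamma_i\ket{e_2}$. Hence
\[
W_{i-1}\ket{e_1}=-\ket{e_1}+2\sin\gamma_i\,\ket{x_i}\otimes\ket{u}=-\cos2\gamma_i\ket{e_1}-\sin2\gamma_i\ket{e_2}.
\]
Applying $S_{\psi_i}$, which flips only the $\ket{e_1}$ component on the plane, gives $\cos2\gamma_i\ket{e_1}-\sin2\gamma_i\ket{e_2}$. The restriction of $S_{\psi_i}W_{i-1}$ to the invariant plane is a product of two reflections, so it has determinant $+1$ and is a rotation. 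It is therefore the rotation by $-2\gamma_i$, and iterating gives~\eqref{eq:plane-evolution}.

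\emph{Main obstacle.} I expect the delicate point to be the sign and orientation bookkeeping. This means confirming that the base-level factor $(-1)^t$ does not propagate upward, and that the chosen sign of $\ket{\psi^\perp}$ makes the rotation $-2\gamma_i$ rather than $+2\gamma_i$ consistently at every level. Everything else is routine verification in the basis above.
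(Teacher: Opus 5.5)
Your proposal is correct and follows essentially the same route as the paper's appendix proof. Both use induction with a direct two-dimensional base computation, evaluate $\tilde P_W\ket{\psi_{i\ldots 1}}$ via the level-$(i-1)$ evolution (the base sign cancelling between ket and bra), and fix the orientation by computing $S_{\psi_i}W_{i-1}\ket{e_1^{(i)}}$. The only differences are cosmetic. You verify the stated $\ket{e_2^{(i)}}$ directly and rely on $\gamma_i>0$ from the standing assumption, whereas the paper obtains it by Gram--Schmidt with a signed normalisation and remarks on the $\gamma_i\le 0$ cases.
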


\begin{proof}[Proof sketch]
The argument is an induction on $i$. We give its shape here and the
computation in Appendix~\ref{app:proofs}. At the base level both $S_{\psi_1}$
and $S_x$ restrict to rank-one reflections within
$\mathrm{span}\{\ket{x_1},\ket{\psi_1}\}$. The reflected rays $\ket{\psi_1}$
and $\ket{x_1}$ meet at angle $\tfrac{\pi}{2} - \theta_1$, and the rotation twice that angle
$\pi - 2\theta_1$, which is~\eqref{eq:base-evolution}. For $i \ge 2$
the plane is
$\mathrm{span}\{\ket{\psi_{i\ldots 1}}, \tilde{P}_W\ket{\psi_{i\ldots 1}}\}$
by Lemma~\ref{lem:reduction}, so one takes
$\ket{e_1^{(i)}} = \ket{\psi_{i\ldots 1}}$ and obtains $\ket{e_2^{(i)}}$ by
Gram--Schmidt, applying $\tilde{P}_W$ with the level-$(i-1)$
evolution~\eqref{eq:plane-evolution} substituted. The residual vector has
norm $\lvert\sin\gamma_i\cos\gamma_i\rvert$. Dividing by the \emph{signed}
quantity $\sin\gamma_i\cos\gamma_i$ rather than by that norm
yields~\eqref{eq:e2}, with the orientation that makes the iterate rotate at
every level by $-2\gamma_i$, the signed angle
of~\eqref{eq:engine-recurrence}.
\end{proof}

The deferred computation also records the axis of the reflection $W_{i-1}$
within the level-$i$ plane, an identity used repeatedly below and in
Section~\ref{sec:resolving},
\begin{equation}
\label{eq:reflection-axis}
    \ket{w} := \frac{\tilde{P}_W\ket{\psi_{i\ldots 1}}}{\cos\gamma_i}
    = \cos\gamma_i\ket{e_1^{(i)}} + \sin\gamma_i\ket{e_2^{(i)}},
\end{equation}
so that $\tilde{P}_W$ preserves the plane and acts on it as $\ket{w}\bra{w}$.

Rearranging~\eqref{eq:e2} yields the vector carrying the target component
$\ket{x_i}$.

\begin{corollary}[Target ray]
\label{cor:target-ray}
For $1 \le k \le m$ define the \emph{residual state}
\begin{equation}
\label{eq:residual}
    \ket{\phi_{k}} := \sin(2t_{k}\gamma_{k})\ket{\psi_{k\ldots 1}} +
    \cos(2t_{k}\gamma_{k})\ket{\psi_{k\ldots 1}^\perp},
    \qquad \ket{\phi_0} := 1 .
\end{equation}
Then, for every $1 \le k \le m$,
\begin{equation}
\label{eq:target-ray}
    \ket{x_k}\otimes\ket{\phi_{k-1}} = \sin\gamma_k\ket{\psi_{k\ldots 1}} -
    \cos\gamma_k\ket{\psi_{k\ldots 1}^\perp}:
\end{equation}
the product of the level-$k$ target with the residual one level down lies
in the level-$k$ rotation plane, at angle $\gamma_k$ from
$-\ket{\psi_{k\ldots 1}^\perp}$.
\end{corollary}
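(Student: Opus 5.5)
The identity is a rearrangement of the definition~\eqref{eq:e2} in Lemma~\ref{lem:basis}, so the plan is to solve that definition for the term containing $\ket{x_k}$ and recognise the bracketed vector as the residual $\ket{\phi_{k-1}}$ of~\eqref{eq:residual}. The base case $k=1$ is handled separately, since $\ket{\psi_1^\perp}$ is defined there by a different formula and $\ket{\phi_0}=1$.

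For $k=1$, I would start from $\ket{\psi_1^\perp} = (\sin\theta_1\ket{\psi_1} - \ket{x_1})/\cos\theta_1$. Multiplying by $\cos\theta_1$ and rearranging gives $\ket{x_1} = \sin\theta_1\ket{\psi_1} - \cos\theta_1\ket{\psi_1^\perp}$. With $\gamma_1=\theta_1$ and $\ket{x_1}\otimes\ket{\phi_0}=\ket{x_1}$, this is~\eqref{eq:target-ray} at $k=1$.

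For $2\le k\le m$, I would multiply~\eqref{eq:e2} (with $i=k$) by $\cos\gamma_k$. This is legitimate because $\gamma_k\in(0,\pi/2)$, by positivity from the standing assumption and because $\sin\gamma_k\le\sin\theta_k<1$. Isolating the $\ket{x_k}$ term gives
\[
\ket{x_k}\otimes\bigl(\sin(2t_{k-1}\gamma_{k-1})\ket{\psi_{k-1\ldots 1}}+\cos(2t_{k-1}\gamma_{k-1})\ket{\psi_{k-1\ldots 1}^\perp}\bigr)
= \sin\gamma_k\ket{\psi_{k\ldots 1}} - \cos\gamma_k\ket{\psi_{k\ldots 1}^\perp},
\]
where I use $\ket{\psi_k}\ket{\psi_{k-1\ldots 1}}=\ket{\psi_{k\ldots 1}}$ from~\eqref{eq:blocks}. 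The bracket on the left is exactly $\ket{\phi_{k-1}}$ as defined in~\eqref{eq:residual} with index $k-1$, which gives~\eqref{eq:target-ray}.

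The geometric reading then follows from Lemma~\ref{lem:basis}. The vectors $\ket{e_1^{(k)}}=\ket{\psi_{k\ldots1}}$ and $\ket{e_2^{(k)}}=\ket{\psi_{k\ldots1}^\perp}$ form an orthonormal basis of the level-$k$ plane, so the right-hand side is a unit vector in that plane. Its inner product with $-\ket{\psi_{k\ldots1}^\perp}$ is $\cos\gamma_k$, so it lies at angle $\gamma_k$ from that vector. The only real obstacle is bookkeeping: the sign convention of~\eqref{eq:e2} (division by the signed $\cos\gamma_k$) has to match the orientation, and the base case has to be treated with its own definition. Both are settled by the direct computation above.
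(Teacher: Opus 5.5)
Your proposal is correct and follows the paper's proof exactly: for $k\ge2$ you multiply~\eqref{eq:e2} by $\cos\gamma_k$, solve for the $\ket{x_k}$-term and recognise the bracket as $\ket{\phi_{k-1}}$; for $k=1$ you solve the definition of $\ket{\psi_1^\perp}$ for $\ket{x_1}$ using $\gamma_1=\theta_1$ and $\ket{\phi_0}=1$. Your additional check of the geometric reading, that the right-hand side is a unit vector of the plane with overlap $\cos\gamma_k$ against $-\ket{\psi_{k\ldots 1}^\perp}$ by the orthonormality in Lemma~\ref{lem:basis}, is a harmless completion of what the paper states without proof.
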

\begin{proof}
For $k \ge 2$ this is~\eqref{eq:e2} multiplied by $\cos\gamma_k$ and solved
for the $\ket{x_k}$-term. For $k = 1$ it is the definition of
$\ket{\psi_1^\perp}$ solved for $\ket{x_1}$, with $\gamma_1 = \theta_1$ and
$\ket{\phi_0} = 1$.
\end{proof}

These objects all live in the one invariant plane of their level.
Figure~\ref{fig:rotation-plane} draws that plane twice, once for the
outermost evolution from $\ket{e_1^{(m)}} = \ket{\psi}$ and once as the
corrective cascade of Section~\ref{sec:resolving} finds it, at the
residual $\ket{\phi_k}$.

\begin{figure}[t]
\centering
\begin{minipage}[t]{0.48\linewidth}
\centering
\begin{tikzpicture}[thick, scale=0.78]
  \def\R{3.1}
  \draw[gray!40] (0,0) circle (\R);
  \draw[gray!40,->] (-1.12*\R,0) -- (1.12*\R,0);
  \draw[gray!40,->] (0,-1.12*\R) -- (0,1.12*\R);
  \draw[->, very thick] (0,0) -- (0:\R)
      node[right, font=\footnotesize]
      {$\ket{e_1^{(m)}} = \ket{\psi}$};
  \draw[->, very thick] (0,0) -- (90:\R)
      node[above, font=\footnotesize] {$\ket{e_2^{(m)}}$};
  \draw[->, blue!60!black] (0,0) -- (20:\R)
      node[above right, font=\footnotesize, inner sep=1.5pt] {$\ket{w}$};
  \draw[->, red!70!black] (0,0) -- (-70:\R)
      node[below right, font=\footnotesize, inner sep=1.5pt]
      {$\ket{x_m}\otimes\ket{\phi_{m-1}}$};
  \draw[gray!70] (20:{0.30*\R}) -- (-25:{0.4243*\R}) -- (-70:{0.30*\R});
  \draw[->, densely dashed] (0,0) -- (-40:\R)
      node[below right, font=\footnotesize, inner sep=1.5pt, yshift=2mm]
      {$\ket{\psi(t)}$};
  \draw[->, gray!70] (0:{0.42*\R}) arc (0:20:{0.42*\R});
  \node[font=\footnotesize, gray!45!black] at (10:{0.53*\R}) {$\gamma_m$};
  \draw[->, gray!70] (0:{0.68*\R}) arc (0:-40:{0.68*\R});
  \node[font=\footnotesize, gray!45!black] at (-20:{0.80*\R})
      {$-2t\gamma_m$};
  \draw[<->, gray!70, densely dotted]
      (-40:{0.93*\R}) arc (-40:-70:{0.93*\R});
  \node[font=\footnotesize, gray!45!black] at (-55:{1.06*\R}) {$\delta$};
\end{tikzpicture}\\[3pt]
{\small (a) The outermost evolution, from
$\ket{e_1^{(m)}} = \ket{\psi}$.}
\end{minipage}\hfill
\begin{minipage}[t]{0.48\linewidth}
\centering
\begin{tikzpicture}[thick, scale=0.78]
  \def\R{3.1}
  \draw[gray!40] (0,0) circle (\R);
  \draw[gray!40,->] (-1.12*\R,0) -- (1.12*\R,0);
  \draw[gray!40,->] (0,-1.12*\R) -- (0,1.12*\R);
  \draw[->, very thick] (0,0) -- (0:\R)
      node[right, font=\footnotesize]
      {$\ket{e_1^{(k)}} = \ket{\psi_{k\ldots 1}}$};
  \draw[->, very thick] (0,0) -- (90:\R)
      node[above, font=\footnotesize]
      {$\ket{e_2^{(k)}} = \ket{\psi_{k\ldots 1}^\perp}$};
  \draw[->, blue!60!black] (0,0) -- (20:\R)
      node[above right, font=\footnotesize, inner sep=1.5pt] {$\ket{w}$};
  \draw[->, red!70!black] (0,0) -- (-70:\R)
      node[below right, font=\footnotesize, inner sep=1.5pt]
      {$\ket{x_k}\otimes\ket{\phi_{k-1}}$};
  \draw[gray!70] (20:{0.30*\R}) -- (-25:{0.4243*\R}) -- (-70:{0.30*\R});
  \draw[->, green!45!black] (0,0) -- (50:\R)
      node[above right, font=\footnotesize, inner sep=1.5pt, align=left]
      {$\ket{\phi_k}$ \\[-2pt]
       {\scriptsize\color{gray!45!black}$\tfrac{\pi}{2} - 2t_k\gamma_k$}};
  \draw[->, gray!70] (0:{0.42*\R}) arc (0:20:{0.42*\R});
  \node[font=\footnotesize, gray!45!black] at (10:{0.53*\R}) {$\gamma_k$};
  \draw[->, gray!70] (0:{0.86*\R}) arc (0:50:{0.86*\R});
\end{tikzpicture}\\[3pt]
{\small (b) The start of the corrective cascade, at the residual
$\ket{\phi_k}$.}
\end{minipage}
\caption{The invariant plane of Lemma~\ref{lem:basis}, drawn at the
outermost level and at a level of the corrective cascade. In both
panels the iterate is the product of two reflections within the plane.
The first reflection is about $\ket{e_1}$. The second is about the
axis $\ket{w}$ of~\eqref{eq:reflection-axis}, at angle $\gamma$ from
$\ket{e_1}$, so the iterate is the rotation by $-2\gamma$. The target
ray of~\eqref{eq:target-ray} lies at $\gamma - \pi/2$, perpendicular
to $\ket{w}$. In panel~(a) the initial state at level $m$ is
$\ket{e_1^{(m)}} = \ket{\psi}$. Each iterate rotates it by
$-2\gamma_m$ towards the target ray. The phase-tuned step of
Section~\ref{sec:resolving} absorbs the angle $\delta$ left after the
last whole iterate. In panel~(b) level $k$ receives the residual $\ket{\phi_k}$
of~\eqref{eq:residual}, at angle $\pi/2 - 2t_k\gamma_k$. The $t_k$
inverse iterates return it to $\ket{\psi_{k\ldots 1}^\perp}$, and a
single phase-tuned iterate rotates it to the target ray, yielding
$\ket{\phi_{k-1}}$ on the subspaces of level $k-1$.
Angles are drawn at $\gamma = 20^\circ$ and $t = 1$ for legibility.}
\label{fig:rotation-plane}
\Description{Two side-by-side diagrams of the same two-dimensional
invariant plane, each drawn as a unit circle with a horizontal and a
vertical axis. In panel (a) the horizontal axis is the first basis vector
at level m, which is the initial state psi, and the vertical axis is the
second basis vector at level m. The reflection axis w is drawn at angle
gamma sub m above the horizontal axis, and the target ray, the level m
target tensored with the residual one level down, is drawn at gamma sub m
minus ninety degrees, perpendicular to w and marked with a right angle. A
dashed vector, the evolved state after t iterates, lies at minus two t
gamma sub m, part of the way from the horizontal axis towards the target
ray; the angle still separating it from the target ray is labelled delta.
Panel (b) repeats the same construction at a level k of the corrective
cascade, with the axes relabelled accordingly, and adds the residual state
phi sub k above the horizontal axis at ninety degrees minus two t sub k
gamma sub k.}
\end{figure}

\subsubsection{Outermost Rotation and the Optimal Iteration Count}
\label{subsec:evolution}
At the outermost level $i = m$ the conditioning block
$\ket{x_{m\ldots i+1}}$ is empty and the initial state is
$\ket{\psi} = \ket{e_1^{(m)}}$, so the state remains within the
rotation plane of Lemma~\ref{lem:basis}.
Lemma~\ref{lem:outermost-evolution} gives the state after any number of
applications of the outermost iterate in closed form.
Lemma~\ref{lem:optimal-tm} establishes the iteration count at which the
outermost subspace resolves to $\ket{x_m}$ with certainty.
Remark~\ref{rem:rounding} discusses the probability of success given
the integer rounding of that count, a shortfall the phase-tuned step of
Section~\ref{sec:resolving} removes.

\begin{lemma}[Dynamics at the outermost level]
\label{lem:outermost-evolution}
Let $m \ge 2$. After $t_m$ applications of $S_{\psi_m}W_{m-1}$, the initial
state $\ket{\psi} = \ket{\psi_{m\ldots 1}}$ evolves to
\begin{equation}
    \ket{\psi(t_m)} =
    \frac{1}{\cos\gamma_m}\Big[\cos((2t_m+1)\gamma_m)\ket{\psi_{m\ldots 1}}
    + \sin(2t_m\gamma_m)\ket{x_m}\otimes\ket{\phi_{m-1}}\Big],
\label{eq:psi-tm}
\end{equation}
and for $m = 1$ the same expression holds up to the overall sign
$(-1)^{t_1}$ of~\eqref{eq:base-evolution}.
\end{lemma}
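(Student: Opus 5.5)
The plan is to apply Lemma~\ref{lem:basis} at level $i = m$ and then change basis using Corollary~\ref{cor:target-ray}. At the outermost level the conditioning block $\ket{x_{m\ldots m+1}}$ is empty. The initial state $\ket{\psi} = \ket{\psi_{m\ldots 1}} = \ket{e_1^{(m)}}$ therefore lies in the invariant plane, and~\eqref{eq:plane-evolution} gives directly
\begin{equation*}
    \ket{\psi(t_m)} = \cos(2t_m\gamma_m)\ket{\psi_{m\ldots 1}} - \sin(2t_m\gamma_m)\ket{\psi_{m\ldots 1}^\perp}.
\end{equation*}
The remaining task is to re-express this in the non-orthogonal pair $\{\ket{\psi_{m\ldots 1}},\, \ket{x_m}\otimes\ket{\phi_{m-1}}\}$.

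For the change of basis, I solve~\eqref{eq:target-ray} at $k=m$ for the perpendicular vector:
\begin{equation*}
    \ket{\psi_{m\ldots 1}^\perp} = \frac{\sin\gamma_m\ket{\psi_{m\ldots 1}} - \ket{x_m}\otimes\ket{\phi_{m-1}}}{\cos\gamma_m}.
\end{equation*}
This is legitimate because $\cos\gamma_m \neq 0$. Indeed $\sin\gamma_m = \sin\theta_m\sin(2t_{m-1}\gamma_{m-1}) < 1$, since $\theta_m < \pi/2$, and $\gamma_m > 0$ by the standing assumptions.

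Substituting, the coefficient of $\ket{x_m}\otimes\ket{\phi_{m-1}}$ becomes $\sin(2t_m\gamma_m)/\cos\gamma_m$. The coefficient of $\ket{\psi_{m\ldots 1}}$ becomes
\begin{equation*}
    \frac{\cos(2t_m\gamma_m)\cos\gamma_m - \sin(2t_m\gamma_m)\sin\gamma_m}{\cos\gamma_m} = \frac{\cos((2t_m+1)\gamma_m)}{\cos\gamma_m},
\end{equation*}
by the cosine addition formula. This is exactly~\eqref{eq:psi-tm}.

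For $m=1$ I use~\eqref{eq:base-evolution} in place of~\eqref{eq:plane-evolution}, together with the $k=1$ case of Corollary~\ref{cor:target-ray}, where $\ket{\phi_0}=1$ and $\gamma_1=\theta_1$. The same algebra goes through, carrying the overall factor $(-1)^{t_1}$.

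The only delicate point is consistency of signs and orientation. One must confirm that the $\ket{\psi^\perp_{m\ldots 1}}$ appearing in~\eqref{eq:plane-evolution} is the same signed vector as in~\eqref{eq:e2} and~\eqref{eq:target-ray}. One must also confirm that at level $m$ there is no hidden sign $\zeta$ of the kind allowed in Lemma~\ref{lem:recursion}. Both follow from Lemma~\ref{lem:basis}, which fixes the orientation so that the iterate rotates by exactly $-2\gamma_m$ for $m\ge 2$. I would check this once explicitly, after which the remainder is routine trigonometry.
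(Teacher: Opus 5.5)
Your proposal is correct and matches the paper's proof: apply \eqref{eq:plane-evolution} at level $m$ to $\ket{\psi}=\ket{e_1^{(m)}}$, solve \eqref{eq:target-ray} for $\ket{\psi_{m\ldots 1}^\perp}$ and substitute, then collect the $\ket{\psi_{m\ldots 1}}$-coefficient with the cosine addition formula. Your extra checks that $\cos\gamma_m\neq 0$ and that Lemma~\ref{lem:basis} fixes the sign conventions are sound, and they make explicit what the paper leaves implicit.
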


\begin{proof}
Since $\ket{\psi} = \ket{e_1^{(m)}}$ lies entirely within the rotation
plane, \eqref{eq:plane-evolution} gives
$\ket{\psi(t_m)} = \cos(2t_m\gamma_m)\ket{\psi_{m\ldots 1}} -
\sin(2t_m\gamma_m)\ket{\psi_{m\ldots 1}^\perp}$. Now substitute
$\ket{\psi_{m\ldots 1}^\perp} = \big(\sin\gamma_m\ket{\psi_{m\ldots 1}} -
\ket{x_m}\otimes\ket{\phi_{m-1}}\big)/\cos\gamma_m$
from~\eqref{eq:target-ray}. The $\ket{\psi_{m\ldots 1}}$-terms collect with
coefficient $\cos(2t_m\gamma_m) - \tan\gamma_m\sin(2t_m\gamma_m)$, which
equals $\cos((2t_m+1)\gamma_m)/\cos\gamma_m$ by the addition formula
$\cos\gamma_m\cos(2t_m\gamma_m) - \sin\gamma_m\sin(2t_m\gamma_m) =
\cos((2t_m+1)\gamma_m)$. The remaining term is the
$\ket{x_m}\otimes\ket{\phi_{m-1}}$-component shown.
\end{proof}

The right-hand side of~\eqref{eq:psi-tm} is a unit vector of the
rotation plane for every real $t_m$, by~\eqref{eq:target-ray} and the
addition formulae. From here on we treat the count as a real
parameter and write $t \mapsto \ket{\psi(t)}$ for that extension. The
extension locates the count that maximises the probability of
measuring $\ket{x_m}$ within $\mathcal{H}_m$, although the count the
algorithm runs is an integer.

\begin{lemma}[Optimal outermost iteration count]
\label{lem:optimal-tm}
Let $m \ge 2$, let $t \mapsto \ket{\psi(t)}$ be the real-parameter
extension~\eqref{eq:psi-tm}, and let the schedule satisfy the standing
assumption~\eqref{eq:window-assumption}, so that $\gamma_m > 0$. The
probability of measuring $\ket{x_m}$ on the outermost subspace of
$\ket{\psi(t)}$ is
\begin{equation}
\label{eq:prob-tm}
    \Pr[x_m](t) = 1 -
    \frac{\cos^2\theta_m}{\cos^2\gamma_m}\cos^2\big((2t+1)\gamma_m\big),
\end{equation}
which attains its maximum value $1$ exactly at those real $t$ with
$\cos((2t+1)\gamma_m) = 0$. The smallest non-negative such $t$ is
\begin{equation}
\label{eq:tm-star}
    t_m^{*} = \frac{\pi}{4\gamma_m} - \frac{1}{2}.
\end{equation}
At this value the state factorises as
\begin{equation}
    \ket{\psi(t_m^{*})} = \ket{x_m}\otimes\ket{\phi_{m-1}},
    \label{eq:opt-state}
\end{equation}
so a measurement of the outermost subspace of $\ket{\psi(t_m^{*})}$
would yield $\ket{x_m}$ with unit probability. For $m = 1$ the same
statements hold up to the overall sign $(-1)^{t_1}$
of~\eqref{eq:base-evolution}.
\end{lemma}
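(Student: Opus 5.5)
The plan is to read the probability directly off the closed form~\eqref{eq:psi-tm}. I will decompose $\ket{\psi_{m\ldots 1}} = \ket{\psi_m}\otimes\ket{\psi_{m-1\ldots 1}}$ and split $\ket{\psi_m} = \sin\theta_m\ket{x_m} + \cos\theta_m\ket{x_m^\perp}$ as in~\eqref{eq:perp-overlap}. Only the $\ket{\psi}$-term of~\eqref{eq:psi-tm} has support outside $\ket{x_m}$ on the outermost register, because the second term is already of the form $\ket{x_m}\otimes(\cdot)$. The component of $\ket{\psi(t)}$ in $\mathrm{range}(\Pi_m^\perp)\otimes\mathcal{H}_{m-1\ldots 1}$ is therefore exactly
\[
\frac{\cos((2t+1)\gamma_m)}{\cos\gamma_m}\,\cos\theta_m\,\ket{x_m^\perp}\otimes\ket{\psi_{m-1\ldots 1}}.
\]
Its squared norm is $\cos^2\theta_m\cos^2((2t+1)\gamma_m)/\cos^2\gamma_m$.

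The next step uses the fact, already noted after Lemma~\ref{lem:outermost-evolution}, that $\ket{\psi(t)}$ is a unit vector for every real $t$. This holds because it equals $\cos(2t\gamma_m)\ket{e_1^{(m)}} - \sin(2t\gamma_m)\ket{e_2^{(m)}}$ with an orthonormal pair from Lemma~\ref{lem:basis}. The probability of measuring $\ket{x_m}$ is then $1$ minus the squared norm above, which gives~\eqref{eq:prob-tm}. The coefficient $\cos^2\theta_m/\cos^2\gamma_m$ is strictly positive since $\theta_m<\pi/2$. It is also finite: by~\eqref{eq:engine-recurrence}, $\sin\gamma_m\le\sin\theta_m<1$, so $\gamma_m<\pi/2$. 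The maximum value $1$ is therefore attained exactly when $\cos((2t+1)\gamma_m)=0$, that is, when $(2t+1)\gamma_m = \pi/2 + k\pi$.

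Since $\gamma_m>0$ by the standing assumption~\eqref{eq:window-assumption}, and since $\gamma_m \le \pi/2$ gives $\pi/(4\gamma_m)-1/2\ge 0$, the smallest non-negative solution is $k=0$, which yields~\eqref{eq:tm-star}. Substituting $t_m^*$ into~\eqref{eq:psi-tm} kills the first term. The remaining coefficient is $\sin(2t_m^*\gamma_m)/\cos\gamma_m = \sin(\pi/2-\gamma_m)/\cos\gamma_m = 1$, which gives~\eqref{eq:opt-state} exactly, with no leftover phase. For $m=1$ I would run the same argument starting from~\eqref{eq:base-evolution}. The extra factor $(-1)^{t_1}$ is a global sign, so it does not affect probabilities. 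In that case $\ket{\phi_0}=1$ and $\gamma_1=\theta_1$, so the prefactor is $1$ and the formula reduces to the usual Grover expression.

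The main obstacle is justifying that the complement component isolated above is the whole of the non-$\ket{x_m}$ weight. This relies on $\ket{x_m}\otimes\ket{\phi_{m-1}}$ having no support on $\ket{x_m^\perp}$, which is immediate from its product form. It also relies on the real-$t$ extension remaining normalised; I will verify that directly from orthonormality of $\ket{e_1^{(m)}},\ket{e_2^{(m)}}$ rather than from~\eqref{eq:psi-tm}.
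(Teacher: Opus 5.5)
Your proposal is correct and follows essentially the same route as the paper. Both isolate the $\ket{x_m^\perp}\otimes\ket{\psi_{m-1\ldots 1}}$ component of~\eqref{eq:psi-tm}, subtract its squared norm from unity, and substitute the $k=0$ root so the state collapses to $\ket{x_m}\otimes\ket{\phi_{m-1}}$. Your extra checks are points the paper leaves implicit: normalisation of the real-$t$ extension via the orthonormal pair $\ket{e_1^{(m)}},\ket{e_2^{(m)}}$, strict positivity and finiteness of $\cos^2\theta_m/\cos^2\gamma_m$, and $t_m^*\ge 0$ from $\gamma_m<\pi/2$.
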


\begin{proof}
Collect the terms of~\eqref{eq:psi-tm} according to their outermost
subspace. The second already has $\ket{x_m}$ as its outermost factor,
and the first has outermost factor
$\ket{\psi_m} = \sin\theta_m\ket{x_m} + \cos\theta_m\ket{x_m^\perp}$.
The entire component of $\ket{\psi(t)}$ orthogonal to $\ket{x_m}$ on
$\mathcal{H}_m$ is therefore
$\cos\theta_m\cos((2t+1)\gamma_m)/\cos\gamma_m$ times
$\ket{x_m^\perp}\otimes\ket{\psi_{m-1\ldots 1}}$. Subtracting its
squared norm from unity gives~\eqref{eq:prob-tm}. That term is
non-negative and vanishes precisely when
$(2t+1)\gamma_m \in \tfrac{\pi}{2} + \pi\mathbb{Z}$. Since
$\gamma_m > 0$, the smallest non-negative root is~\eqref{eq:tm-star},
on the branch $(2t+1)\gamma_m = \pi/2$. There the first term
of~\eqref{eq:psi-tm} vanishes and
$\sin(2t_m^{*}\gamma_m) = \sin(\tfrac{\pi}{2} - \gamma_m) =
\cos\gamma_m$, so the state collapses to~\eqref{eq:opt-state}.
\end{proof}

\begin{remark}[Rounding]
\label{rem:rounding}
In general $t_m^{*}$ is not an integer, and
$(S_{\psi_m}W_{m-1})^{t_m}$ is defined only at integer $t_m$, so
iteration alone cannot reach the state of~\eqref{eq:opt-state}. Among
integer counts the probability~\eqref{eq:prob-tm} is largest at the
nearest integer $\lfloor t_m^{*}\rceil$, where
$(2\lfloor t_m^{*}\rceil+1)\gamma_m - \tfrac{\pi}{2}$ is at most
$\gamma_m$ in magnitude and therefore
\begin{equation}
    \Pr[x_m] \ge 1 - \cos^2\theta_m\tan^2\gamma_m.
\end{equation}
The phase-tuned construction of Lemma~\ref{lem:outermost-step} removes
the residual error entirely, realising~\eqref{eq:opt-state} exactly
from the integer count $J = \lfloor t_m^{*}\rfloor$. We take the floor
rather than the nearest integer, so the concluding phase-tuned iterate
supplies the rotation that remains instead of undoing an overshoot.
\end{remark}

The recursion is now determined exactly. The state after any number of
iterates is the plane vector~\eqref{eq:psi-tm}, and the count
$J = \lfloor t_m^{*}\rfloor$ brings it within a known angle of
$\ket{x_m}\otimes\ket{\phi_{m-1}}$. Two gaps remain between this point
and the target $\ket{x}$. The final fraction of a rotation at the
outermost level cannot be supplied by whole iterates, and the residual
$\ket{\phi_{m-1}}$ still occupies every subspace beneath the outermost.
Section~\ref{sec:resolving} closes both, with a phase-tuned iterate for
the first and a cascade of such iterates for the second.

\subsection{Resolving the target \texorpdfstring{$\ket{x}$}{|x>}}
\label{sec:resolving}
The rotations that remain are smaller than a whole iterate. We
implement them with iterates whose reflections carry tunable phases in
place of $\pi$, replacing $S_{\psi_k}$ by $S_{\psi_k}(\alpha)$ and
$W_{k-1}$ by $W_{k-1}(\beta)$. The level-$k$ iterate then becomes the
generalised phase reflections of H\o{}yer~\cite{hoyerPhases}, and
Theorem~4 of Brassard et al.~\cite{AmplitudeAmplification} applies
directly. At the base level the phases recover the
zero-theoretical-failure-rate condition of Long~\cite{longExact} as a
special case. A phase-tuned iterate leaves the subspace beneath it at
a known point of the next plane down, at angular distance $\gamma_k$
from that plane's own target ray, and the procedure applies again.

\subsubsection{The Outermost Subspace}
\label{subsec:outermost-step}

By Remark~\ref{rem:rounding}, the shortfall left by
$J = \lfloor t_m^{*}\rfloor$ ordinary iterates is a rotation of less
than $2\gamma_m$ within the outermost plane. One iterate with distinct
tunable phases on its two reflections closes this shortfall exactly.

We define the phase-generalised operators
\begin{equation}
\label{eq:phase-generalised}
S_{\psi_i}(\alpha) = \mathbb{I}_{m\ldots i+1}\otimes\big(\mathbb{I}_{i\ldots 1}
- (1-e^{\mathrm{i}\alpha})\ket{\psi_{i\ldots 1}}\bra{\psi_{i\ldots 1}}\big),
\qquad
W_{i}(\beta) = \mathbb{I} - (1-e^{\mathrm{i}\beta})P_{W_i},
\end{equation}
where $P_{W_i}$ is the projector onto the subspace reflected by $W_i$,
so that $S_{\psi_i}(\pi) = S_{\psi_i}$ and $W_i(\pi) = W_i$. Since
$W_i$ is $S_{\psi_i}$ conjugated by $(S_{\psi_i}W_{i-1})^{t_i}$
(Lemma~\ref{lem:conjugation}), the operator $W_i(\beta)$ is the same
conjugation with the central $S_{\psi_i}$ replaced by
$S_{\psi_i}(\beta)$. At the base level,
$W_0(\beta) = \mathbb{I} - (1-e^{\mathrm{i}\beta})\ket{x}\bra{x}$ is a
generalised-phase oracle.

\begin{lemma}[Resolving $\ket{x_m}$]
\label{lem:outermost-step}
Given $\gamma_m > 0$, by~\eqref{eq:window-assumption}, let
$J = \lfloor t_m^{*}\rfloor$, write $c = \cos\gamma_m$ and
$s = \sin\gamma_m$, and let
\begin{equation}
\label{eq:residual-angle}
    \delta := \frac{\pi}{2} - (2J+1)\gamma_m \in [0, 2\gamma_m)
\end{equation}
be the rotation left after the $J$ ordinary iterates. Then the angles
\begin{equation}
\label{eq:deterministic-angles}
    \cos\beta_m = -\cot(2\gamma_m)\tan\delta, \quad \beta_m \in [0,\pi],
    \qquad
    e^{\mathrm{i}\alpha_m} =
    \frac{(1-e^{\mathrm{i}\beta_m})\,\omega\, s^2 - s\,b}
    {c\,a - (1-e^{\mathrm{i}\beta_m})\,\omega\, c^2},
\end{equation}
with $a = \cos(2J\gamma_m)$, $b = -\sin(2J\gamma_m)$ and
$\omega = ca + sb = \sin\delta$, are real, and the state
\begin{equation}
\label{eq:deterministic-step}
S_{\psi_m}(\alpha_m)\, W_{m-1}(\beta_m)\, (S_{\psi_m}W_{m-1})^{J}\ket{\psi}
\end{equation}
equals the factorised state~\eqref{eq:opt-state} up to a global phase.
The outermost subspace is therefore resolved to $\ket{x_m}$ exactly.
\end{lemma}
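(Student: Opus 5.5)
The plan is to reduce everything to a two-dimensional computation in the outermost rotation plane of Lemma~\ref{lem:basis}, in the orthonormal basis $\ket{e_1}=\ket{\psi}$, $\ket{e_2}=\ket{\psi_{m\ldots 1}^\perp}$.

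\textbf{State after $J$ iterates.} By~\eqref{eq:plane-evolution}, after $J$ ordinary iterates the state is $a\ket{e_1}+b\ket{e_2}$, with $a=\cos(2J\gamma_m)$ and $b=-\sin(2J\gamma_m)$. This stays in the plane.

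\textbf{The two phase-generalised reflections restricted to the plane.} $S_{\psi_m}(\alpha)$ acts as $\mathbb{I}-(1-e^{\mathrm{i}\alpha})\ket{e_1}\bra{e_1}$. For $W_{m-1}(\beta)$, Lemma~\ref{lem:reduction} and~\eqref{eq:reflection-axis} show that $P_{W_{m-1}}$ preserves the plane and acts on it as $\ket{w}\bra{w}$, where $\ket{w}=c\ket{e_1}+s\ket{e_2}$. Hence $W_{m-1}(\beta)=\mathbb{I}-(1-e^{\mathrm{i}\beta})\ket{w}\bra{w}$ on the plane. At $m$ the conditioning block is empty, so no other sector interferes.

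\textbf{The target ray.} By~\eqref{eq:target-ray}, $\ket{x_m}\otimes\ket{\phi_{m-1}}=s\ket{e_1}-c\ket{e_2}$, which is orthogonal to $\ket{w}$. It therefore suffices to show that the final state has zero component along the orthogonal direction $\ket{w}$. Equivalently, it is proportional to $s\ket{e_1}-c\ket{e_2}$.

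\textbf{Computing the final state.} Applying $W_{m-1}(\beta)$ to $a\ket{e_1}+b\ket{e_2}$ gives
\[
\ket{u}=a\ket{e_1}+b\ket{e_2}-(1-e^{\mathrm{i}\beta})\,\omega\,(c\ket{e_1}+s\ket{e_2}),
\]
where $\omega=\braket{w|u_0}=ca+sb=\cos((2J+1)\gamma_m)=\sin\delta$. Next, $S_{\psi_m}(\alpha)$ multiplies the $\ket{e_1}$-coefficient by $e^{\mathrm{i}\alpha}$. Requiring the result to be proportional to $s\ket{e_1}-c\ket{e_2}$ gives the single complex equation
\[
e^{\mathrm{i}\alpha}\bigl(a-(1-e^{\mathrm{i}\beta})\omega c\bigr)\cdot(-c)=s\bigl(b-(1-e^{\mathrm{i}\beta})\omega s\bigr).
\]
Solving for $e^{\mathrm{i}\alpha}$ gives exactly the formula in~\eqref{eq:deterministic-angles}.

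\textbf{Main obstacle: existence of a real $\alpha$.} For $\alpha$ to be real, the right-hand side must have unit modulus, that is,
\[
|c\,a-(1-e^{\mathrm{i}\beta})\omega c^2|=|s\,b-(1-e^{\mathrm{i}\beta})\omega s^2|.
\]
I would expand both squared moduli using $|1-e^{\mathrm{i}\beta}|^2=2(1-\cos\beta)$ and $\mathrm{Re}(1-e^{\mathrm{i}\beta})=1-\cos\beta$. Both sides are then affine in $\cos\beta$, so the condition reduces to a linear equation in $\cos\beta$. I would substitute $a=\cos(2J\gamma_m)$, $b=-\sin(2J\gamma_m)$ and $2J\gamma_m=\pi/2-\delta-\gamma_m$, and simplify with double-angle identities. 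The expected solution is $\cos\beta=-\cot(2\gamma_m)\tan\delta$.

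\textbf{The phase $\beta$ is real.} It remains to check $|\cot(2\gamma_m)\tan\delta|\le 1$ using $\delta\in[0,2\gamma_m)$. Since $J=\lfloor t_m^*\rfloor$, we have $(2J+1)\gamma_m\le\pi/2$, so $\delta\ge0$. Since $t_m^*-J<1$, we have $\delta<2\gamma_m$.
\begin{itemize}
\item If $2\gamma_m\le\pi/2$, then $0\le\tan\delta<\tan(2\gamma_m)$, so $0\le\cot(2\gamma_m)\tan\delta<1$.
\item If $2\gamma_m>\pi/2$, then $J=0$ and $\delta=\pi/2-\gamma_m$. Here $\cot(2\gamma_m)\tan\delta=\cot(2\gamma_m)\cot\gamma_m$, and I would bound its magnitude by $1$ directly. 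This amounts to checking the inequality $\cos 2\gamma\cos\gamma\ge-\sin2\gamma\sin\gamma$, which holds for $\gamma\le\pi/2$ because the difference equals $\cos\gamma\ge0$.
\end{itemize}
This gives a valid $\beta_m\in[0,\pi]$. Finally, the denominator $c\,a-(1-e^{\mathrm{i}\beta})\omega c^2$ must be nonzero; otherwise both moduli vanish and any $\alpha$ works. Either way, the final state equals $\ket{x_m}\otimes\ket{\phi_{m-1}}$ up to a global phase.
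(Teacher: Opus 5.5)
Your proposal is correct and follows essentially the paper's route: restrict to the outermost plane, write $W_{m-1}(\beta)$ there as $\mathbb{I}-(1-e^{\mathrm{i}\beta})\ket{w}\bra{w}$, require a vanishing $\ket{w}$-component to obtain $e^{\mathrm{i}\alpha_m}$, and reduce the unit-modulus condition to an equation affine in $\cos\beta_m$ (the paper gets linearity from $|u|^2=2\omega\,\mathrm{Re}\,u$ and separately notes that at $\omega=\sin\delta=0$ the equation degenerates and any $\beta_m$ works, a case you should state since your linear equation then has zero coefficient). Your case split for $|\cot(2\gamma_m)\tan\delta|\le 1$ uses $\tan\delta<\tan 2\gamma_m$ when $2\gamma_m\le\pi/2$, and $J=0$, $\delta=\pi/2-\gamma_m$ otherwise; this is a cleaner version of the paper's brief admissibility remark.
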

\begin{proof}
By Lemma~\ref{lem:basis} the iterate $S_{\psi_m}W_{m-1}$ acts on
$\mathrm{span}\{\ket{e_1^{(m)}},\ket{e_2^{(m)}}\}$ as the product of
the reflections about $\ket{e_1^{(m)}}$ and about the axis
$\ket{w} = c\ket{e_1^{(m)}} + s\ket{e_2^{(m)}}$
of~\eqref{eq:reflection-axis}. Writing
$\ket{\tau} := \ket{x_m}\otimes\ket{\phi_{m-1}} =
s\ket{e_1^{(m)}} - c\ket{e_2^{(m)}}$ for the target ray,
Corollary~\ref{cor:target-ray} gives $\braket{w|\tau} = 0$, so
$\mathbb{I} = \ket{w}\bra{w} + \ket{\tau}\bra{\tau}$ on the plane and
\begin{equation}
\label{eq:plane-duality}
 W_{m-1}(\beta)\big\vert_{\mathrm{plane}} = \mathbb{I} -
 (1-e^{\mathrm{i}\beta})\ket{w}\bra{w}
 = e^{\mathrm{i}\beta}\big[\mathbb{I} -
 (1-e^{-\mathrm{i}\beta})\ket{\tau}\bra{\tau}\big].
\end{equation}
The phase $e^{\mathrm{i}\beta}$ on the axis $\ket{w}$ therefore acts,
up to the global factor $e^{\mathrm{i}\beta}$, as the phase
$e^{-\mathrm{i}\beta}$ on the target $\ket{\tau}$. Restricted to the
plane, the pair $\big(S_{\psi_m}(\alpha), W_{m-1}(\beta)\big)$ is
therefore the phase-generalised amplitude amplification of
H\o{}yer~\cite{hoyerPhases} for the initial state $\ket{\psi}$ and the
good subspace $\mathbb{C}\ket{\tau}$, at amplitude angle $\gamma_m$
because $\braket{\tau|\psi} = s$, and by Theorem~4 of Brassard et
al.~\cite{AmplitudeAmplification} phases exist for which one
generalised iterate after the $J$ ordinary iterates reaches the ray of
$\ket{\tau}$ exactly. It remains to solve for them.

By~\eqref{eq:plane-evolution} the state after $J$ ordinary iterates is
$\ket{\psi(J)} = a\ket{e_1^{(m)}} + b\ket{e_2^{(m)}}$, and the output
of~\eqref{eq:deterministic-step} lands on the ray of $\ket{\tau}$
precisely when its $\ket{w}$-component vanishes. Writing
$\omega := \braket{w|\psi(J)} = ca + sb = \sin\delta$ and
$u := (1-e^{\mathrm{i}\beta_m})\omega$, and using
$S_{\psi_m}(\alpha_m) = \mathrm{diag}(e^{\mathrm{i}\alpha_m}, 1)$ in
the basis $\{\ket{e_1^{(m)}},\ket{e_2^{(m)}}\}$, the vanishing
condition is
\begin{equation}
\label{eq:outer-condition}
 c\,e^{\mathrm{i}\alpha_m}a + s\,b =
 (1-e^{\mathrm{i}\beta_m})\,\omega\,
 \big(c^2 e^{\mathrm{i}\alpha_m} + s^2\big),
 \qquad\text{i.e.}\qquad
 e^{\mathrm{i}\alpha_m} = \frac{u s^2 - sb}{ca - u c^2},
\end{equation}
the second half of~\eqref{eq:deterministic-angles}. The angle
$\alpha_m$ is real exactly when that quotient has modulus one, that
is, when $\lvert u s^2 - sb\rvert^2 = \lvert ca - uc^2\rvert^2$. As
$u$ traverses the circle
$\lvert u - \omega\rvert = \lvert\omega\rvert$ we have
$\lvert u\rvert^2 = 2\omega\,\mathrm{Re}\,u$ with $\omega$ real, so
the condition is linear,
$2cs(sa - cb)\,\mathrm{Re}\,u = \omega(ca - sb)$. Substituting
$\mathrm{Re}\,u = \omega(1 - \cos\beta_m)$ cancels $\omega$, and the
identities $ca - sb = \cos((2J-1)\gamma_m)$,
$sa - cb = \sin((2J+1)\gamma_m)$ and $2cs = \sin 2\gamma_m$,
with~\eqref{eq:residual-angle}, give
$1 - \cos\beta_m =
\sin(2\gamma_m + \delta)/(\sin(2\gamma_m)\cos\delta) =
1 + \cot(2\gamma_m)\tan\delta$, the first half. If $\omega = \sin\delta = 0$, the $J$ iterates already land on the
target ray and any $\beta_m$ satisfies the condition. The value of $\cos\beta_m$ is
admissible throughout, since
$\lvert\cot(2\gamma_m)\rvert\tan\delta \le 1$ on $(0,\pi/2)$. For
$\gamma_m$ below $\pi/4$ the bound holds because
$\delta \le \pi - 2\gamma_m$. A real $\beta_m \in [0,\pi]$ therefore
exists, with $\alpha_m$ read off from~\eqref{eq:outer-condition}. The
output of~\eqref{eq:deterministic-step} is then a unit vector of the
plane with vanishing $\ket{w}$-component, hence $\ket{\tau}$ up to a
global phase, the factorised state~\eqref{eq:opt-state}.
\end{proof}

The outermost step is unconditional because the truncation leaves a
residual $\delta$ anywhere in $[0,2\gamma_m)$, which the two-parameter
family $(\alpha_m,\beta_m)$ absorbs at every $\gamma_m \in (0,\pi/2)$.

\subsubsection{The Corrective Cascade}
\label{subsec:cascade}
The levels beneath level $m$ are resolved within their own rotation
planes. By~\eqref{eq:plane-evolution}, the residual~\eqref{eq:residual}
is the image of $\ket{\psi_{k\ldots 1}^\perp}$ under the $t_k$ iterates
applied within $W_k$,
\begin{equation}
\label{eq:residual-rewind}
    \ket{\phi_{k}} = (S_{\psi_k}W_{k-1})^{t_k}\ket{\psi_{k\ldots 1}^\perp},
\end{equation}
exactly for $k \ge 2$ and up to the sign $(-1)^{t_1}$
of~\eqref{eq:base-evolution} for $k=1$. The state
$\ket{x_k}\otimes\ket{\phi_{k-1}}$ lies in the same plane
by~\eqref{eq:target-ray}, at angular distance $\gamma_k$ from
$\ket{\psi_{k\ldots 1}^\perp}$. Each level is therefore resolved by
$t_k$ inverse iterates followed by one phase-tuned iterate applying
the remaining $\gamma_k$, and the resolution leaves a residual of the
same form one level down.

\begin{lemma}[Corrective rotation]
\label{lem:corrective-rotation}
Let $1 \le k \le m-1$ with $\lvert\gamma_k\rvert \le \pi/3$, and let
$\ket{\phi_k}$ be the residual~\eqref{eq:residual}. Then $t_k$ applications
of the inverse iterate $W_{k-1}S_{\psi_k}$ followed by a single phase-tuned
iterate $S_{\psi_k}(\alpha_k)\,W_{k-1}(\beta_k)$, whose phases are those of
Lemma~\ref{lem:fractional-step} for $k \ge 2$ and of
Lemma~\ref{lem:base-step} for $k = 1$, send
\begin{equation}
    \ket{x_{m\ldots k+1}}\otimes\ket{\phi_{k}} \;\longmapsto\;
    \ket{x_{m\ldots k}}\otimes\ket{\phi_{k-1}},
\end{equation}
up to a global phase.
\end{lemma}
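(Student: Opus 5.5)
The plan is to work entirely inside the sector $\ket{x_{m\ldots k+1}}\otimes\mathcal{H}_{k\ldots 1}$ and to reduce the statement to a two-dimensional computation in the level-$k$ plane of Lemma~\ref{lem:basis}. It parallels the proof of Lemma~\ref{lem:outermost-step}, with a different starting vector.

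\textbf{Invariance of the sector.} First I check that every operator used preserves the sector and factorises there. The block-diagonal structure established in the proof of Lemma~\ref{lem:reduction} shows that $S_{\psi_k}$, $W_{k-1}$ and their phase-generalised versions~\eqref{eq:phase-generalised} are block-diagonal for the splitting of $\mathcal{H}_{m\ldots k+1}$ into $\ket{x_{m\ldots k+1}}$ and its complement. On the first block they act as operators on $\mathcal{H}_{k\ldots 1}$ only. For $W_{k-1}(\beta)$ this follows from Lemma~\ref{lem:conjugation}, because it is the conjugate of $S_{\psi_{k-1}}(\beta)$ by the block-diagonal $A_{k-1}$. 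Hence the prefix $\ket{x_{m\ldots k+1}}$ is carried along untouched, and it suffices to prove $\ket{\phi_k}\mapsto\ket{x_k}\otimes\ket{\phi_{k-1}}$ within $\mathcal{H}_{k\ldots 1}$, up to a phase.

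\textbf{Rewinding.} By~\eqref{eq:residual-rewind}, $\ket{\phi_k}=(S_{\psi_k}W_{k-1})^{t_k}\ket{\psi_{k\ldots 1}^\perp}$, exactly for $k\ge 2$ and up to $(-1)^{t_1}$ for $k=1$. Since $(W_{k-1}S_{\psi_k})^{t_k}$ is the inverse of that power (Lemma~\ref{lem:conjugation}), the $t_k$ inverse iterates return the state to $\ket{e_2^{(k)}}=\ket{\psi^\perp_{k\ldots 1}}$, up to a sign.

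\textbf{The phase-tuned step, $k\ge2$.} In the basis $\{\ket{e_1^{(k)}},\ket{e_2^{(k)}}\}$, write $c=\cos\gamma_k$ and $s=\sin\gamma_k$. On the plane, $S_{\psi_k}(\alpha)=\mathrm{diag}(e^{\mathrm{i}\alpha},1)$. By~\eqref{eq:reflection-axis}, $\tilde P_W$ acts on the plane as $\ket{w}\bra{w}$ with $\ket{w}=c\ket{e_1}+s\ket{e_2}$, so $W_{k-1}(\beta)$ restricts to $\mathbb{I}-(1-e^{\mathrm{i}\beta})\ket{w}\bra{w}$. The target ray $\ket{\tau}=\ket{x_k}\otimes\ket{\phi_{k-1}}=s\ket{e_1}-c\ket{e_2}$ (Corollary~\ref{cor:target-ray}) is orthogonal to $\ket{w}$. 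The output lies on the ray of $\ket{\tau}$ exactly when its $\ket{w}$-component vanishes.

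I take the starting coefficients $a=0$, $b=1$, so that $\omega=\braket{w|e_2}=s$, and reuse the vanishing condition~\eqref{eq:outer-condition}. This gives $e^{\mathrm{i}\alpha_k}=(us^2-s)/(-uc^2)$ with $u=(1-e^{\mathrm{i}\beta_k})s$. The unit-modulus requirement becomes linear in $\mathrm{Re}\,u$, exactly as before: $2cs(sa-cb)\,\mathrm{Re}\,u=\omega(ca-sb)$, that is, $-2c^2s\,\mathrm{Re}\,u=-s^2$. With $\mathrm{Re}\,u=s(1-\cos\beta_k)$ this yields
\[
  1-\cos\beta_k=\frac{1}{2\cos^2\gamma_k}.
\]
This is these phases~\eqref{eq:fractional-angles} of Lemma~\ref{lem:fractional-step}. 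A real $\beta_k$ exists iff $1/(2\cos^2\gamma_k)\le 2$, i.e.\ $\cos\gamma_k\ge 1/2$. That is precisely the hypothesis $\lvert\gamma_k\rvert\le\pi/3$, and it is where that hypothesis enters. With $\beta_k$ fixed, $\alpha_k$ is read off the quotient, which now has modulus one. The output is then a unit vector of the plane with no $\ket{w}$-component, hence $\ket{\tau}$ up to a global phase.

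\textbf{Base level $k=1$, and the expected obstacle.} Here $W_0(\beta)$ is the phase oracle. On $\mathrm{span}\{\ket{x_1},\ket{\psi_1}\}$ the duality~\eqref{eq:plane-duality} still holds, but the sign of~\eqref{eq:base-evolution} and the rank-one (rather than complementary) character of $S_x$ must be tracked. The same computation with $\gamma_1=\theta_1$ reduces to Long's zero-failure condition. I expect it to reproduce $\alpha_1=2\arcsin(1/(2\cos\theta_1))$ and $\beta_1=-\alpha_1$, which are admissible again because $\theta_1\le\pi/3$, with $\ket{\phi_0}=1$.

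I expect the main obstacle to be this sign and phase bookkeeping at $k=1$, together with confirming that the generalised $W_{k-1}(\beta)$ really restricts to $\mathbb{I}-(1-e^{\mathrm{i}\beta})\ket{w}\bra{w}$ on the plane. For the latter I would verify directly that $A_{k-1}$ conjugates $S_{\psi_{k-1}}(\beta)$ to give $\tilde P_W$ with the phase on its range. The algebra of the phase equation itself is a specialisation of the earlier lemma.
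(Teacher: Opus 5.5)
Your proposal is correct and follows the paper's proof: rewind the residual by the $t_k$ inverse iterates to $\ket{\psi_{k\ldots 1}^\perp}$, then close the remaining ray angle $\gamma_k$ with one phase-tuned iterate. The only difference is that for $k\ge 2$ you re-derive the phases by specialising \eqref{eq:outer-condition} to $(a,b)=(0,1)$, which correctly reproduces \eqref{eq:fractional-angles}, whereas the paper simply cites Lemma~\ref{lem:fractional-step}. For $k=1$ you should likewise invoke Lemma~\ref{lem:base-step} directly, as the statement does, rather than leave it as an expectation; your anticipated mechanism, the duality \eqref{eq:plane-duality} flipping $\beta\to-\beta$ because $W_0$ reflects about the target ray itself, is indeed why $\beta_1=-\alpha_1$ there.
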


\begin{proof}
Work within the level-$k$ rotation plane on the sector
$\ket{x_{m\ldots k+1}} \otimes \mathcal{H}_k \otimes \cdots \otimes
\mathcal{H}_1$, and write
$v(\xi) := \cos\xi\ket{\psi_{k\ldots 1}} +
\sin\xi\ket{\psi_{k\ldots 1}^\perp}$ for the plane vector at angle
$\xi$. The residual is then $v(\tfrac{\pi}{2} - 2t_k\gamma_k)$
by~\eqref{eq:residual}, and the target is $v(\gamma_k - \tfrac{\pi}{2})$
by~\eqref{eq:target-ray}. By Lemma~\ref{lem:basis} the iterate rotates
the plane by $-2\gamma_k$, so $t_k$ applications of its inverse carry
the residual to $v(\tfrac{\pi}{2}) = \ket{\psi_{k\ldots 1}^\perp}$,
cf.~\eqref{eq:residual-rewind}. The remaining angular distance to the
target ray is $\gamma_k$, which the phase-tuned iterate of
Lemma~\ref{lem:fractional-step} ($k \ge 2$) or
Lemma~\ref{lem:base-step} ($k = 1$) applies.
\end{proof}

That last rotation is half an ordinary iterate, which no integer
number of iterates produces except at $\gamma_k = \pi/3$, where a
single ordinary iterate reaches the target ray. We give the phases
first for $k \ge 2$ and then for the base level, where $W_0 = S_x$
reflects about the target itself.

\begin{lemma}[Fractional corrective step]
\label{lem:fractional-step}
Let $k \ge 2$, write $\gamma=\gamma_{k}$, and suppose
$\gamma\le\pi/3$. Let $\alpha, \beta \in [0, \pi]$ satisfy
\begin{equation}
    \cos\beta = 1 - \frac{1}{2\cos^2\gamma}, \qquad
    e^{\mathrm{i}\alpha} =
    \frac{\cos^2\gamma + \sin^2\gamma\, e^{\mathrm{i}\beta}}
    {\cos^2\gamma\,(1 - e^{\mathrm{i}\beta})},
    \label{eq:fractional-angles}
\end{equation}
the latter of modulus one for $\beta$ and determined by the pair
\begin{equation}
    \cos\alpha = \frac{\cos 2\gamma}{1 + \cos 2\gamma}, \qquad
    \sin\alpha = \frac{\sqrt{1 + 2\cos 2\gamma}}{1 + \cos 2\gamma},
    \label{eq:fractional-branch}
\end{equation}
that is, by $\alpha = \operatorname{atan2}\big(\sqrt{1 + 2\cos 2\gamma},\,
\cos 2\gamma\big) \in [0,\pi]$. Then, on the sector
$\ket{x_{m\ldots k+1}} \otimes \mathcal{H}_k \otimes \cdots \otimes
\mathcal{H}_1$, the iterate $S_{\psi_{k}}(\alpha)\,W_{k-1}(\beta)$
sends $\ket{\psi_{k\ldots 1}^\perp}$ to the target
$\ket{x_{k}}\otimes\ket{\phi_{k-1}} = \sin\gamma\ket{\psi_{k\ldots 1}}
- \cos\gamma\ket{\psi_{k\ldots 1}^\perp}$, up to a global phase.
\end{lemma}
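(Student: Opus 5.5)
The plan is a direct two-by-two computation in the level-$k$ rotation plane of Lemma~\ref{lem:basis}, using the basis $\{\ket{e_1^{(k)}},\ket{e_2^{(k)}}\} = \{\ket{\psi_{k\ldots 1}},\ket{\psi_{k\ldots 1}^\perp}\}$ on the sector $\ket{x_{m\ldots k+1}}\otimes\mathcal{H}_k\otimes\cdots\otimes\mathcal{H}_1$. Write $c=\cos\gamma$ and $s=\sin\gamma$.

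\textbf{Step 1: reduce both operators to the plane.} By Lemma~\ref{lem:reduction}, on this sector $P_{W_{k-1}}$ reduces to $\tilde P_W$. By~\eqref{eq:reflection-axis}, $\tilde P_W$ preserves the plane and acts there as $\ket{w}\bra{w}$ with $\ket{w}=c\ket{e_1^{(k)}}+s\ket{e_2^{(k)}}$. Hence
\[
W_{k-1}(\beta)\big\vert_{\mathrm{plane}} = \mathbb{I}-(1-e^{\mathrm{i}\beta})\ket{w}\bra{w}.
\]
Similarly, $S_{\psi_k}(\alpha)\big\vert_{\mathrm{plane}} = \mathrm{diag}(e^{\mathrm{i}\alpha},1)$. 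This is the same reduction already used in the proof of Lemma~\ref{lem:outermost-step}, now at level $k$. Since $\ket{e_2^{(k)}}$ lies in the plane, the whole computation stays two-dimensional.

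\textbf{Step 2: apply the iterate to $\ket{e_2^{(k)}}$.} First,
\[
W_{k-1}(\beta)\ket{e_2^{(k)}} = -(1-e^{\mathrm{i}\beta})sc\,\ket{e_1^{(k)}} + \bigl(1-(1-e^{\mathrm{i}\beta})s^2\bigr)\ket{e_2^{(k)}}.
\]
Applying $S_{\psi_k}(\alpha)$ then multiplies the first coordinate by $e^{\mathrm{i}\alpha}$. By Corollary~\ref{cor:target-ray}, the target $\ket{\tau}=s\ket{e_1^{(k)}}-c\ket{e_2^{(k)}}$ is orthogonal to $\ket{w}$. The output is a unit vector of the plane, so it is a phase multiple of $\ket{\tau}$ exactly when its $\ket{w}$-component vanishes. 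That condition reads
\[
-e^{\mathrm{i}\alpha}(1-e^{\mathrm{i}\beta})sc^2 + s\bigl(1-(1-e^{\mathrm{i}\beta})s^2\bigr)=0 .
\]
Since $s\neq 0$, this rearranges to the second identity of~\eqref{eq:fractional-angles}.

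\textbf{Step 3: the modulus condition, which I expect to be the main obstacle.} The issue is showing the quotient for $e^{\mathrm{i}\alpha}$ has modulus one, i.e.\ that a real $\alpha$ exists. The modulus condition is
\[
\lvert c^2+s^2e^{\mathrm{i}\beta}\rvert^2 = c^4\lvert 1-e^{\mathrm{i}\beta}\rvert^2 .
\]
Expanding gives $c^4+s^4+2c^2s^2\cos\beta = 2c^4(1-\cos\beta)$. This is linear in $\cos\beta$, and using $c^2+s^2=1$ it reduces to $2c^2\cos\beta = c^2-s^2$. That is exactly $\cos\beta = 1-1/(2c^2)$.

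Admissibility requires $\cos\beta\ge -1$, i.e.\ $c^2\ge 1/4$, which is precisely the hypothesis $\gamma\le\pi/3$. Also $\cos\beta<1$, so $1-e^{\mathrm{i}\beta}\ne 0$ and the quotient is well defined. Take $\beta\in[0,\pi]$, so that $\sin\beta=\sqrt{1-\cos^2\beta}\ge 0$.

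\textbf{Step 4: identify the branch for $\alpha$.} Rationalise the quotient by multiplying numerator and denominator by $1-e^{-\mathrm{i}\beta}$; the denominator becomes the real number $2c^2(1-\cos\beta)=1$, using $1-\cos\beta = 1/(2c^2)$. Substitute $\sin\beta=\sqrt{4c^2-1}/(2c^2)$, the $\cos\beta$ above, $2c^2=1+\cos2\gamma$ and $4c^2-1=1+2\cos2\gamma$. This should produce~\eqref{eq:fractional-branch}, with $\sin\alpha\ge 0$ confirming $\alpha\in[0,\pi]$.

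The only delicate part is the bookkeeping of signs in this last simplification. I would check it against the boundary case $\gamma=\pi/3$. There $\cos\beta=-1$ and $\alpha=\pi$, recovering the ordinary iterate, consistent with the remark following Lemma~\ref{lem:corrective-rotation}.
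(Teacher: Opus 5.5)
Your proposal is correct and follows essentially the same route as the paper. You reduce $S_{\psi_k}(\alpha)$ and $W_{k-1}(\beta)$ on the plane to $\mathrm{diag}(e^{\mathrm{i}\alpha},1)$ and $\mathbb{I}-(1-e^{\mathrm{i}\beta})\ket{w}\bra{w}$, require the $\ket{w}$-component of the image of $\ket{\psi_{k\ldots 1}^\perp}$ to vanish, and take moduli to obtain $\cos\beta = 1-1/(2\cos^2\gamma)$. Your rationalisation in Step 4 is just an explicit version of the paper's separation into real and imaginary parts, and it checks out: the denominator becomes $2c^2(1-\cos\beta)=1$, the real part is $(c^2-s^2)(1-\cos\beta)$, and the imaginary part is $\sin\beta$, which also shows $\alpha=\beta$.
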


\begin{proof}
Work in the plane basis
$\{\ket{\psi_{k\ldots 1}},\ket{\psi_{k\ldots 1}^\perp}\}$ with
$c=\cos\gamma$, $s=\sin\gamma$. By~\eqref{eq:reflection-axis}, as in
the proof of Lemma~\ref{lem:outermost-step}, the operator
$W_{k-1}(\beta)$ acts on the plane as
$\mathbb{I}-(1-e^{\mathrm{i}\beta})\ket{w}\bra{w}$ with
$\ket{w}=c\ket{\psi_{k\ldots 1}}+s\ket{\psi_{k\ldots 1}^\perp}$, and
$S_{\psi_{k}}(\alpha)$ acts as $\mathrm{diag}(e^{\mathrm{i}\alpha},1)$.
The target
$\ket{x_{k}}\otimes\ket{\phi_{k-1}} = s\ket{\psi_{k\ldots 1}} -
c\ket{\psi_{k\ldots 1}^\perp}$ is orthogonal to $\ket{w}$ by
Corollary~\ref{cor:target-ray}. The output therefore lands on the
target ray iff its $\ket{w}$-component vanishes. For
$\ket{u}=\ket{\psi_{k\ldots 1}^\perp}$, with $\braket{w|u}=s$, that
component is
\begin{equation}
    \braket{w|S_{\psi_{k}}(\alpha)W_{k-1}(\beta)|u}
    = s\big[c^2 + s^2 e^{\mathrm{i}\beta} -
    c^2 e^{\mathrm{i}\alpha}(1-e^{\mathrm{i}\beta})\big],
\end{equation}
so vanishing requires $c^2 e^{\mathrm{i}\alpha}(1-e^{\mathrm{i}\beta})
= c^2 + s^2 e^{\mathrm{i}\beta}$. Taking moduli and using
$|1-e^{\mathrm{i}\beta}|^2=2(1-\cos\beta)$ gives
$c^4-s^4 = 2c^2\cos\beta$, i.e.\ $\cos\beta = 1 - 1/(2c^2)$, which a
real $\beta$ satisfies iff $c^2\ge\tfrac14$, that is, iff
$\gamma\le\pi/3$. The expression for $e^{\mathrm{i}\alpha}$
in~\eqref{eq:fractional-angles} then has modulus one by construction,
and separating its real and imaginary parts returns the
pair~\eqref{eq:fractional-branch}. The second entry of the pair is
non-negative throughout $\gamma\le\pi/3$, so $\alpha \in [0,\pi]$.
With these angles the $\ket{w}$-component vanishes and, the iterate
being unitary and the plane invariant, the output lies on the ray of
$\ket{x_{k}}\otimes\ket{\phi_{k-1}}$.
\end{proof}

\begin{remark}[Computing the phases]
\label{rem:branch}
Since $1 + \cos2\gamma = 2\cos^2\gamma$, the first entry
of~\eqref{eq:fractional-branch} is
$\cos\alpha = 1 - 1/(2\cos^2\gamma) = \cos\beta$, so the two phases
coincide, $\alpha = \beta$, with common value $\pi$ at
$\gamma = \pi/3$. Recovering $\alpha$ from
$\tan\alpha = \sqrt{1+2\cos2\gamma}/\cos2\gamma$ by the principal
arctangent, however, is wrong by exactly $\pi$ for
$\gamma\in(\pi/4,\pi/3]$, and the iterate then misses the target ray
entirely. At $\gamma=\pi/4$, where $\cos2\gamma = 0$, the tangent is
undefined while the pair~\eqref{eq:fractional-branch} gives
$\alpha = \pi/2$ directly.
\end{remark}

At the base level the reflection $W_0 = S_x$ acts about the target
itself rather than through the conjugated projector of the levels
above. The phases form the conjugate pair $\beta = -\alpha$, in place
of the coinciding pair of Remark~\ref{rem:branch}.

\begin{lemma}[Base corrective step]
\label{lem:base-step}
Suppose $\theta_1 \le \pi/3$. Let $\alpha \in [0,\pi]$ satisfy
$\sin(\alpha/2) = 1/(2\cos\theta_1)$ and set $\beta = -\alpha$. Then,
on the sector $\ket{x_{m\ldots 2}} \otimes \mathcal{H}_1$, the iterate
$S_{\psi_1}(\alpha)\,W_0(\beta)$ sends $\ket{\psi_1^\perp}$ to
$\ket{x_1}$ up to a global phase.
\end{lemma}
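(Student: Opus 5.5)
The plan is a direct computation in the two-dimensional plane $\mathrm{span}\{\ket{x_1},\ket{x_1^\perp}\}$, in the same spirit as Lemma~\ref{lem:fractional-step}. The base level is simpler because $W_0(\beta)$ puts its phase directly on the target rather than on a conjugated axis.

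First I would fix the restrictions of the two operators to the sector $\ket{x_{m\ldots 2}}\otimes\mathcal{H}_1$. Since $\ket{x}\bra{x} = \ket{x_{m\ldots 2}}\bra{x_{m\ldots 2}}\otimes\ket{x_1}\bra{x_1}$, the generalised oracle $W_0(\beta)$ acts there as $\mathbb{I}_1-(1-e^{\mathrm{i}\beta})\ket{x_1}\bra{x_1}$. By~\eqref{eq:phase-generalised}, $S_{\psi_1}(\alpha)$ acts as $\mathbb{I}_1-(1-e^{\mathrm{i}\alpha})\ket{\psi_1}\bra{\psi_1}$. Both operators preserve $\mathrm{span}\{\ket{x_1},\ket{x_1^\perp}\}$, because $\ket{\psi_1}=s\ket{x_1}+c\ket{x_1^\perp}$ with $s=\sin\theta_1$ and $c=\cos\theta_1$. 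Substituting this into the definition of $\ket{\psi_1^\perp}$ from Lemma~\ref{lem:basis} gives the input in coordinates,
\[
\ket{\psi_1^\perp} = -c\ket{x_1}+s\ket{x_1^\perp}.
\]

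Next I would apply the two operators in turn. The generalised oracle gives $\ket{v}:=W_0(\beta)\ket{\psi_1^\perp} = -c\,e^{\mathrm{i}\beta}\ket{x_1}+s\ket{x_1^\perp}$. Its overlap with the diffuser axis is $\braket{\psi_1|v}=sc\,(1-e^{\mathrm{i}\beta})$. Hence
\[
S_{\psi_1}(\alpha)\ket{v}=\ket{v}-(1-e^{\mathrm{i}\alpha})(1-e^{\mathrm{i}\beta})\,sc\,\ket{\psi_1}.
\]
The output is proportional to $\ket{x_1}$ exactly when its $\ket{x_1^\perp}$-component vanishes. That component is $s\bigl[1-c^2(1-e^{\mathrm{i}\alpha})(1-e^{\mathrm{i}\beta})\bigr]$, so since $s\neq 0$ the condition is
\[
c^2(1-e^{\mathrm{i}\alpha})(1-e^{\mathrm{i}\beta})=1 .
\]

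The only real step is showing that this single complex equation is solvable with real phases, and this is where the choice $\beta=-\alpha$ does the work. With $\beta=-\alpha$ the product becomes $|1-e^{\mathrm{i}\alpha}|^2=2-2\cos\alpha=4\sin^2(\alpha/2)$, which is real and non-negative. The imaginary part of the condition therefore vanishes automatically, and the real part reduces to $4c^2\sin^2(\alpha/2)=1$, that is $\sin(\alpha/2)=1/(2\cos\theta_1)$. This has a solution $\alpha\in[0,\pi]$ if and only if $1/(2\cos\theta_1)\le 1$, which is precisely the hypothesis $\theta_1\le\pi/3$. At $\theta_1=\pi/3$ it gives $\alpha=\pi$, recovering the ordinary iterate noted after Lemma~\ref{lem:corrective-rotation}.

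To finish, the iterate is unitary and preserves the plane, so an output with vanishing $\ket{x_1^\perp}$-component is a unit multiple of $\ket{x_1}$, i.e.\ $\ket{x_1}$ up to a global phase. Tensored with the untouched $\ket{x_{m\ldots 2}}$, this is the claimed image on the sector.
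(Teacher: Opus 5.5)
Your proposal is correct and follows essentially the same route as the paper. You work in $\mathrm{span}\{\ket{x_1},\ket{x_1^\perp}\}$, where your $\ket{x_1^\perp}$ is the same vector as the paper's $\ket{\bar x_1}$. You reduce the landing condition to $c^2(1-e^{\mathrm{i}\alpha})(1-e^{\mathrm{i}\beta})=1$ and use $\beta=-\alpha$ to turn the left side into $4c^2\sin^2(\alpha/2)$, which is solvable exactly when $\theta_1\le\pi/3$; your explicit intermediate state and the remark that the imaginary part vanishes automatically just spell out steps the paper leaves implicit.
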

\begin{proof}
On this sector $W_0(\beta)$ acts as
$\mathbb{I}-(1-e^{\mathrm{i}\beta})\ket{x_1}\bra{x_1}$. Work in
$\mathrm{span}\{\ket{x_1},\ket{\bar x_1}\}$ with
$\ket{\bar x_1} :=
(\ket{\psi_1}-\sin\theta_1\ket{x_1})/\cos\theta_1$, $c = \cos\theta_1$
and $s = \sin\theta_1$, so that
$\ket{\psi_1^\perp} = -c\ket{x_1}+s\ket{\bar x_1}$. For general phases
the $\ket{\bar x_1}$-component of
$S_{\psi_1}(\alpha)W_0(\beta)\ket{\psi_1^\perp}$ is
$s[1 - c^2(1-e^{\mathrm{i}\alpha})(1-e^{\mathrm{i}\beta})]$, so the
output lands on the ray of $\ket{x_1}$ iff
$(1-e^{\mathrm{i}\alpha})(1-e^{\mathrm{i}\beta}) = 1/c^2$. With
$\beta=-\alpha$ the left-hand side is
$|1-e^{\mathrm{i}\alpha}|^2 = 4\sin^2(\alpha/2)$, giving
$\sin(\alpha/2) = 1/(2c)$, which a real $\alpha$ satisfies iff
$c\ge\tfrac12$, that is, iff $\theta_1\le\pi/3$. Unitarity then places
the unit-norm output on $\ket{x_1}$ up to a global phase.
\end{proof}

The condition $\sin(\alpha/2)=1/(2\cos\theta_1)$ is Long's
zero-theoretical-failure-rate condition~\cite{longExact} at zero
preceding iterations, and $(\alpha,-\alpha)$ is an instance of
H\o{}yer's generalised-phase amplitude
amplification~\cite{hoyerPhases}.

\begin{remark}[The generalised-phase oracle]
\label{rem:oracle-free}
Only the base corrective iterate $S_{\psi_1}(\alpha_1)\,W_0(\beta_1)$
of Lemma~\ref{lem:base-step} places a phase other than $\pi$ on the
oracle. At every other level the tuned iterate is
$S_{\psi_k}(\alpha_k)\,W_{k-1}(\beta_k)$ with $k \ge 2$, where
$W_{k-1}(\beta_k) =
A_{k-1}\,S_{\psi_{k-1}}(\beta_k)\,A_{k-1}^{\dagger}$ by
Lemma~\ref{lem:outermost-step}. The tunable phase sits on the central
diffuser, every oracle call inside $A_{k-1}$ and $A_{k-1}^{\dagger}$
is the plain $S_x$, and the same holds for the outermost step whenever
$m \ge 2$. With a black-box oracle giving $S_x$ but not $W_0(\beta)$,
the outermost stage and the corrective steps at levels
$m-1, \ldots, 2$ therefore still conclude exactly, and only the
innermost subspace is left unresolved. Applying the $t_1$ inverse
iterates of level $1$ and stopping leaves
$\ket{x_{m\ldots 2}} \otimes \ket{\psi_1^{\perp}}$, whose measurement
returns $\ket{x}$ with probability $\cos^2\theta_1$. The failure
probability of the black-box variant is thus $\sin^2\theta_1$,
confined to the base level and set by its local overlap alone.
\end{remark}

\subsubsection{The Cascade}
\label{subsec:cascade-theorem}
Applying Lemmas~\ref{lem:corrective-rotation}--\ref{lem:base-step} in
turn, from the outermost resolution of Lemma~\ref{lem:outermost-step},
prepares the full target state.

\begin{theorem}[Exact preparation of the full product target]
\label{thm:cascade}
Let the schedule consist of integers $t_k \ge 1$ for
$1 \le k \le m-1$ satisfying $2t_k\gamma_k \le \pi/2$ and
$\gamma_k \le \pi/3$. Both conditions hold under the standing
assumptions~\eqref{eq:window-assumption}
and~\eqref{eq:partition-assumption}, the second since
$\gamma_k \le \theta_k$. Then the
outermost step of Lemma~\ref{lem:outermost-step} followed by the
corrective steps of Lemma~\ref{lem:corrective-rotation} at levels
$k = m-1, m-2, \ldots, 1$ prepares
\begin{equation}
    \ket{x_m}\otimes\ket{x_{m-1}}\otimes\cdots\otimes\ket{x_1}
\end{equation}
with unit probability, up to a global phase accumulated through the
corrections. Resolving each subspace $\mathcal{H}_k$ with
$1 \le k \le m-1$ requires $t_k$ applications of the inverse iterate
$W_{k-1}S_{\psi_k}$ and one application of the phase-tuned iterate
$S_{\psi_k}(\alpha_k)\,W_{k-1}(\beta_k)$, so $t_k + 1$ applications of
$W_{k-1}$ in total.
\end{theorem}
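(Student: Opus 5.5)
The proof will be an induction down the levels, with the invariant that after the outermost step and the corrections at levels $m-1,\ldots,k$, the state equals $\ket{x_{m\ldots k}}\otimes\ket{\phi_{k-1}}$ up to a global phase.

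First I check that the hypotheses of the earlier lemmas hold. The window bound $2t_k\gamma_k\le\pi/2$ is assumption~\eqref{eq:window-assumption}, and together with $\gamma_1=\theta_1>0$ it gives $\gamma_k>0$ by induction. For $\gamma_k\le\theta_k$ I use the recurrence~\eqref{eq:engine-recurrence}: $\sin\gamma_k=\sin\theta_k\sin(2t_{k-1}\gamma_{k-1})\le\sin\theta_k$ because $0<2t_{k-1}\gamma_{k-1}\le\pi/2$. Since both angles lie in $(0,\pi/2]$, monotonicity of $\sin$ gives $\gamma_k\le\theta_k\le\pi/3$ by~\eqref{eq:partition-assumption}. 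The same bound covers the base case $\theta_1\le\pi/3$ needed in Lemma~\ref{lem:base-step}.

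The base of the induction ($k=m$) is Lemma~\ref{lem:outermost-step}. Starting from $\ket{\psi}$, the $J$ ordinary iterates and one phase-tuned iterate give $\ket{x_m}\otimes\ket{\phi_{m-1}}$ up to a global phase. That lemma only needs $\gamma_m>0$. For $m=1$, Lemma~\ref{lem:outermost-step} is not available because $\gamma_m$ is then the base angle. In that case I would instead apply the same phase-tuned amplitude-amplification argument with $W_0(\beta)$ directly, which is Theorem~4 of Brassard et al.

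For the inductive step, assume the state is $\ket{x_{m\ldots k+1}}\otimes\ket{\phi_k}$ up to a phase. Lemma~\ref{lem:corrective-rotation} applies because $\gamma_k\le\pi/3$. It sends this state to $\ket{x_{m\ldots k}}\otimes\ket{\phi_{k-1}}$ using $t_k$ inverse iterates $W_{k-1}S_{\psi_k}$ and then $S_{\psi_k}(\alpha_k)W_{k-1}(\beta_k)$. The phases come from Lemma~\ref{lem:fractional-step} for $k\ge2$ and from Lemma~\ref{lem:base-step} for $k=1$. All operators are linear, so a global phase carried in from earlier steps passes through unchanged and only multiplies with the new one.

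For $k=1$ the $(-1)^{t_1}$ sign in~\eqref{eq:residual-rewind} is absorbed into the global phase. Since $\ket{\phi_0}=1$, the final state is $\ket{x}$ times a unimodular scalar, so a measurement in the target basis succeeds with probability $1$.

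The one point that needs care is the sector restriction. Each lemma acts on the sector $\ket{x_{m\ldots k+1}}\otimes\mathcal{H}_{k\ldots1}$, and the upper registers must be untouched. This holds because the current state lies entirely in that sector and in the level-$k$ plane (Lemma~\ref{lem:basis}). Moreover, $S_{\psi_k}$ and $W_{k-1}$, together with their phase variants, are block-diagonal with respect to the splitting $\ket{x_{m\ldots k+1}}$ versus its complement. This follows from the block form~\eqref{eq:PW-block} of Lemma~\ref{lem:reduction} and from $S_{\psi_k}$ acting as the identity on the upper registers. The conjugation form extends the same block-diagonality to $W_{k-1}(\beta)$.

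The cost count follows by reading off Lemma~\ref{lem:corrective-rotation}. Level $k$ uses $t_k$ inverse iterates, each containing one $W_{k-1}$, and one tuned iterate containing $W_{k-1}(\beta_k)$. That is $t_k+1$ applications of $W_{k-1}$, counting the phase variant as one application.
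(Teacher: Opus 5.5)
Your proposal is correct and follows the same route as the paper. Lemma~\ref{lem:outermost-step} resolves $\mathcal{H}_m$, and Lemma~\ref{lem:corrective-rotation} is then applied at $k=m-1,\ldots,1$ with the invariant $\ket{x_{m\ldots k}}\otimes\ket{\phi_{k-1}}$, ending at $\ket{\phi_0}=1$; your extra checks make explicit details the paper's short proof leaves implicit, namely deriving $\gamma_k\le\theta_k$ from the recurrence, the block-diagonality that keeps the upper registers fixed, and treating $m=1$ separately (there $W_0$ reflects about the target itself, so Lemma~\ref{lem:outermost-step} does not apply verbatim).
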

\begin{proof}
The condition $2t_k\gamma_k \le \pi/2$ makes every $\gamma_k$ strictly
positive, by induction from $\gamma_1 = \theta_1 > 0$. In particular
$\gamma_m > 0$, so $t_m^{*}$ is positive and
$J = \lfloor t_m^{*}\rfloor \ge 0$. Lemma~\ref{lem:outermost-step}
then leaves $\mathcal{H}_m$ in $\ket{x_m}$ and the remaining subspaces
in $\ket{\phi_{m-1}}$. For $k = m-1, \ldots, 1$ in turn,
Lemma~\ref{lem:corrective-rotation} sets $\mathcal{H}_k$ to
$\ket{x_k}$ and leaves $\ket{\phi_{k-1}}$ beneath it, up to a further
global phase. Its condition $\gamma_k \le \pi/3$ holds at every level
by assumption. At the base, where $\gamma_1 = \theta_1$, it is the
condition $\theta_1 \le \pi/3$ of Lemma~\ref{lem:base-step}. Since
$\ket{\phi_0} = 1$ is trivial, the cascade terminates with
$\mathcal{H}_1$ in $\ket{x_1}$.
\end{proof}

The target $\ket{x}$ can thus be prepared deterministically.

\begin{remark}[Omitting the phase-tuned iterates]
\label{rem:no-cascade}
We can establish the probability of measuring $\ket{x}$ without access
to the generalised-phase operators. The protocol applies
$J = \lfloor t_m^{*}\rfloor$ ordinary outermost iterates, then at each
level $k = m-1, \ldots, 1$ only the $t_k$ inverse iterates, and
measures. The success probability obeys
\begin{equation}
\label{eq:phasefree-bound}
    \Pr[x] \;\ge\; 1 - \Big(\delta + \sum_{k=1}^{m-1}\gamma_k\Big)^{2},
\end{equation}
with $\delta \in [0, 2\gamma_m)$ the
residual~\eqref{eq:residual-angle}. At each level of the cascade the
omitted iterate would have closed the single angle $\gamma_k$ to the
target ray, and the outermost stage stops $\delta$ short of its own.
The failure is therefore at most the square of the summed angles. As
we show in Section~\ref{sec:applications}, the schedules there make
the angles decay geometrically, so the sum is dominated by its first
term and $1 - \Pr[x] = O(\sin^2\theta_1)$, independent of $m$ and of
$N$. The generalised-phase iterates therefore yield determinism, while
the scheme without them still succeeds with high probability.
\end{remark}

In this section, we have shown that, given a product partition of the
Hilbert space across which the initial and target states factorise, the
non-trivial dynamics at each level are confined to a single invariant
plane. The scalar recurrence~\eqref{eq:engine-recurrence} determines the
rotation angle in each plane and thereby gives an exact description of
the state for any partition and schedule. Under the standing
assumptions~\eqref{eq:window-assumption}
and~\eqref{eq:partition-assumption}, the phase-tuned outermost step and
corrective cascade prepare the target $\ket{x}$ deterministically. This
exact description across arbitrary product partitions and schedules
makes the construction a practical framework for implementing decomposed
quantum searches and for designing and analysing recursive quantum
algorithms more broadly.

\section{Complexity}
\label{sec:complexity}
We now determine the cost of the construction. We analyse its oracle
complexity and give the conditions under which it remains at the
$\Theta(1/\sin\theta)$ optimum of amplitude amplification. We then
analyse the non-oracle cost and establish when it remains of the order
of the oracle count. We further show that these two conditions can be
combined, and give schedules under which both costs remain optimal.

\subsection{Oracle Complexity}
\label{subsec:oracle-complexity}
We take $\Theta(1/\sin\theta)$ oracle calls as the optimal oracle
complexity~\cite{zalkaOptimal, Optimal} at initial overlap $\sin\theta$, the scaling achieved by
Grover search~\cite{Grover} and by amplitude
amplification~\cite{AmplitudeAmplification}. In this section we show
how the oracle complexity of our decomposition can be kept at this
bound. We count the oracle calls of the full protocol, bound the
decay of the rotation angle during the recursion, and establish the
condition under which the total remains within a constant of
$1/\sin\theta$.

The protocol applies $W_{m-1}$ fewer than $\pi/(4\gamma_m) + 3$
times. The outermost level applies it $J+1 \le t_m^* + 1$ times, the
$J = \lfloor t_m^* \rfloor$ ordinary iterates of
Lemma~\ref{lem:optimal-tm} and the phase-tuned iterate of
Lemma~\ref{lem:outermost-step}, with
$t_m^* = \pi/(4\gamma_m) - \tfrac12$. The cascade applies
$t_k + 1 \le 2t_k$ copies of $W_{k-1}$ at level $k$
(Theorem~\ref{thm:cascade}), with an oracle cost less than or equal to a single iterate of the previous level $W_k$, so the whole cascade amounts to fewer than
two further applications of $W_{m-1}$. Each application makes
$T(W_{m-1}) = \prod_{i=1}^{m-1} 2t_i$ oracle calls
(Eq.~\eqref{eq:W-def}), so
\begin{equation}
\label{eq:total-cost}
    T_{\mathrm{total}} \;<\;
    \Big(\frac{\pi}{4\gamma_m} + 3\Big)\prod_{i=1}^{m-1} 2t_i.
\end{equation}
The product counts the oracle complexity of the outermost iterate and
the angle $\gamma_m$ sets how many iterates are needed. Meeting the
optimum $\Theta(1/\sin\theta)$ requires the two to cancel, with
\begin{equation}
\label{eq:optimality-condition}
    \gamma_m=\Theta\Big(\sin\theta\prod_{i=1}^{m-1}2t_i\Big).
\end{equation}
Each level maps the angle through
$\gamma_{i+1} = \arcsin\!\big(\sin\theta_{i+1}\sin(2t_i\gamma_i)\big)$.
The count $t_i$ increases the angle and the overlap $\sin\theta_{i+1}$
decreases it. Unrolled from $\gamma_1 = \theta_1$, and with
$\sin\theta = \prod_{i}\sin\theta_i$ by~\eqref{eq:blocks},
\begin{equation}
\label{eq:angle-ceiling}
    \sin\gamma_m \;\le\; \sin\theta\prod_{i=1}^{m-1}2t_i.
\end{equation}
The condition~\eqref{eq:optimality-condition} is the matching lower
bound. It remains to show that the relative decay of the angle at
each level does not compound below a constant, which
Lemma~\ref{lem:gamma-bracket} and Theorem~\ref{thm:query-optimal}
establish.

\begin{lemma}[Per-level loss]
\label{lem:gamma-bracket}
Let $w_i := 2t_i\gamma_i$, and suppose $w_i \le \pi/2$ for $i < m$ as
in~\eqref{eq:window-assumption}. Then
\begin{equation}
\label{eq:gamma-bracket}
    \sin\gamma_m \ \ge\ \sin\theta\,\Big(\prod_{j=1}^{m-1}2t_j\Big)
    \prod_{i=1}^{m-1}\Big(1-\frac{w_i^2}{6}\Big).
\end{equation}
\end{lemma}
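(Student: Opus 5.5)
The plan is to unroll the recurrence~\eqref{eq:engine-recurrence} one level at a time and control the loss at each step with the elementary bound $\sin x \ge x(1 - x^2/6)$ for $x \ge 0$. This bound follows from the alternating Taylor series of $\sin$, or equivalently from $x - \sin x \le x^3/6$. We apply it with $x = w_i = 2t_i\gamma_i \in (0,\pi/2]$; positivity of $\gamma_i$ is guaranteed by the standing assumption~\eqref{eq:window-assumption}. At each level this gives
\[
\sin\gamma_{i+1} = \sin\theta_{i+1}\sin(w_i) \ \ge\ \sin\theta_{i+1}\,2t_i\,\gamma_i\Big(1 - \frac{w_i^2}{6}\Big).
\]
Since $1 - w_i^2/6 \ge 1 - \pi^2/24 > 0$ on this range, every factor is positive, so the inequalities can be chained multiplicatively.

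The step $\gamma_i \mapsto \sin\gamma_i$ needs care. Unrolling requires a lower bound on $\gamma_i$ in terms of $\sin\gamma_i$, and the trivial inequality $\gamma_i \ge \sin\gamma_i$ points the right way. Hence
\[
\sin\gamma_{i+1} \ge \sin\theta_{i+1}\,2t_i\Big(1-\frac{w_i^2}{6}\Big)\sin\gamma_i .
\]
We then run an induction on $i$, starting from $\sin\gamma_1 = \sin\theta_1$. The claimed inductive statement is
\[
\sin\gamma_{i} \ge \Big(\prod_{j=1}^{i}\sin\theta_j\Big)\prod_{j=1}^{i-1}2t_j\Big(1-\frac{w_j^2}{6}\Big).
\]
Because all factors are positive, the inductive step is just multiplication of the inequality above by the hypothesis. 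At $i = m$ we use $\prod_{j=1}^m \sin\theta_j = \sin\theta$ from~\eqref{eq:overlaps}, and the resulting bound is exactly~\eqref{eq:gamma-bracket}.

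The only point I expect to need attention is the positivity and range bookkeeping. The bound $\sin x \ge x - x^3/6$ must be applied with $x \ge 0$, and the factors $(1 - w_i^2/6)$ must be nonnegative so that multiplying inequalities preserves their direction. Both follow from $0 < w_i \le \pi/2$. I would also note explicitly that no lower bound of the form $\gamma_i \le c\sin\gamma_i$ is needed: the direction $\gamma_i \ge \sin\gamma_i$ suffices because we only seek a lower bound on $\sin\gamma_m$. Nothing in the argument uses $\gamma_m$ itself beyond its sine, so the outermost level imposes no window condition. This is consistent with~\eqref{eq:window-assumption} being imposed only for $i < m$.
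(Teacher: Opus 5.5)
Your proof is correct and follows the paper's argument. The paper likewise applies $\sin w \ge w(1-w^2/6)$ to $\sin\gamma_{i+1}=\sin\theta_{i+1}\sin w_i$, replaces $\gamma_i$ by the smaller $\sin\gamma_i$, and unrolls from $\sin\gamma_1=\sin\theta_1$, collecting $\prod_i\sin\theta_i=\sin\theta$; your explicit check that $1-w_i^2/6>0$, which the paper leaves implicit, is what justifies chaining the inequalities multiplicatively.
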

\begin{proof}
Let $g_i:=\sin\gamma_i$, so the recurrence of
Lemma~\ref{lem:recursion} reads $g_i=\sin\theta_i\sin(w_{i-1})$ with
$g_1=\sin\theta_1$. For $0 \le w \le \pi/2$ we have
$\sin w \ge w(1-w^2/6)$, and $w_{i-1} \ge 2t_{i-1}g_{i-1}$ since
$\gamma_{i-1} \ge \sin\gamma_{i-1}$, so
\begin{equation}
    g_i \ \ge\ 2t_{i-1}\sin\theta_i\,
    g_{i-1}\Big(1-\frac{w_{i-1}^2}{6}\Big).
\end{equation}
Unrolling from $g_1=\sin\theta_1$ and collecting
$\sin\theta_1\prod_{i=2}^{m}\sin\theta_i=\sin\theta$
by~\eqref{eq:blocks} yields~\eqref{eq:gamma-bracket}.
\end{proof}

The oracle count therefore remains optimal whenever
$\prod_i(1-w_i^2/6)$ is bounded below by a constant independent of $m$
and $N$. The following condition on the schedule guarantees this.

\begin{theorem}[Oracle complexity]
\label{thm:query-optimal}
Assume bounded local overlaps
$\theta_i\in[\theta_{\min},\theta_{\max}]\subset(0,\pi/2)$ and that
the schedule satisfies
\begin{equation}
\label{eq:ti-explicit}
    t_i\ \le\ \min\!\Big(\frac{\bar a}{2\sin\theta_{i+1}},
    \ \frac{\pi}{4\gamma_i}\Big)\qquad(i<m)
\end{equation}
for a constant $\bar a<1$. Then
$\sin\gamma_m=\Theta\big(\sin\theta\prod_{i=1}^{m-1}2t_i\big)$ and
$T_{\mathrm{total}}=\Theta(1/\sin\theta)$.
\end{theorem}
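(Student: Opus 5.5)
The plan is to sandwich $\sin\gamma_m$ between the ceiling~\eqref{eq:angle-ceiling} and the floor of Lemma~\ref{lem:gamma-bracket}. Then $\gamma_m=\Theta(\sin\gamma_m)$ can be substituted into the count~\eqref{eq:total-cost}, together with the matching lower bound $T_{\mathrm{total}}=\Omega(1/\sin\theta)$ from optimality of search~\cite{zalkaOptimal, Optimal}. The upper bound of the $\Theta$ is~\eqref{eq:angle-ceiling}. The whole task is therefore to show that $\prod_{i=1}^{m-1}(1-w_i^2/6)$ is bounded below by a positive constant independent of $m$ and $N$.

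First I would note that the second term of~\eqref{eq:ti-explicit} gives $w_i=2t_i\gamma_i\le\pi/2$, so the window assumption~\eqref{eq:window-assumption} holds. Hence Lemma~\ref{lem:gamma-bracket} applies and each factor satisfies $1-w_i^2/6\ge 1-\pi^2/24>0$. Next I would show that the $w_i$ decay geometrically, so that $\sum_i w_i^2$ is bounded. From the recurrence, $\gamma_{i+1}=\arcsin(\sin\theta_{i+1}\sin w_i)$. Since $\sin\theta_{i+1}\sin w_i\le\sin\theta_{\max}<1$, the arcsine is Lipschitz on $[0,\sin\theta_{\max}]$ with constant $C=1/\cos\theta_{\max}$, which gives $\gamma_{i+1}\le C\sin\theta_{i+1}\,w_i$. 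Then
\[
w_{i+1}=2t_{i+1}\gamma_{i+1}\le 2t_{i+1}\,C\sin\theta_{i+1}\,w_i .
\]
The first term of~\eqref{eq:ti-explicit} is indexed to $\sin\theta_{i+2}$, not $\sin\theta_{i+1}$, so this does not cancel directly. I would instead pair it with the bounded-overlap assumption $\sin\theta_{i+1}/\sin\theta_{i+2}\le \sin\theta_{\max}/\sin\theta_{\min}$. If that only yields a bounded ratio rather than a contraction, I would refine the estimate.

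The refinement I would try first uses $\sin w_i\le w_i$ and $\arcsin y\le y/\sqrt{1-y^2}$, which give $w_{i+1}\le 2t_{i+1}\sin\theta_{i+1}w_i/\cos\gamma_{i+1}$. Combined with $2t_{i+1}\le\bar a/\sin\theta_{i+2}$, this gives a per-level ratio of at most $\bar a\,\sin\theta_{i+1}/(\sin\theta_{i+2}\cos\gamma_{i+1})$. For the intended uniform partitions, where $\sin\theta_i$ is constant, this equals $\bar a/\cos\gamma_{i+1}$. Since the $\gamma_i$ shrink, this ratio is eventually below some $\rho<1$. The finitely many early levels, whose number is bounded in terms of $\theta_{\min},\theta_{\max},\bar a$, each contribute a factor of at least $1-\pi^2/24$. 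The tail contributes $\prod(1-w_i^2/6)\ge 1-\tfrac16\sum w_i^2\ge 1-\frac{w_{i_0}^2}{6(1-\rho^2)}$, a constant. This gives $\sin\gamma_m=\Theta(\sin\theta\prod 2t_i)$.

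For the cost, the bounds $\gamma_m\ge\sin\gamma_m$ and $\gamma_m\le\pi\sin\gamma_m/2$ turn~\eqref{eq:total-cost} into $T_{\mathrm{total}}=O\big(\prod 2t_i/\sin\gamma_m+\prod 2t_i\big)$. The first term is $O(1/\sin\theta)$. The second is also $O(1/\sin\theta)$ by the ceiling, since $\prod 2t_i\ge\sin\gamma_m/\sin\theta$ while $\sin\gamma_m\le1$.

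The main obstacle is the geometric-decay step. Turning the first term of~\eqref{eq:ti-explicit} into a genuine contraction $w_{i+1}\le\rho\,w_i$ needs care with the index shift between $\theta_{i+1}$ and $\theta_{i+2}$, and with the $1/\cos\gamma$ loss near the top of the arcsine. If the general bounded-overlap case does not give a contraction, I would fall back to showing that $\gamma_i$ is bounded away from $\pi/2$ and absorb the constant ratio $\sin\theta_{\max}/\sin\theta_{\min}$ into $\bar a$.
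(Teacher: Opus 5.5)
\textbf{There is a gap in the decay step.} Your reduction is the right one: everything rests on bounding $\prod_{i<m}(1-w_i^2/6)$ below, which needs geometric decay. That decay is never established.

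By contracting $w_i$ directly you pair $t_{i+1}$ with $\sin\theta_{i+1}$. The hypothesis \eqref{eq:ti-explicit} only controls the product $2t_{i+1}\sin\theta_{i+2}$. Your per-level ratio $\bar a\sin\theta_{i+1}/(\sin\theta_{i+2}\cos\gamma_{i+1})$ therefore exceeds $1$ whenever $\bar a\sin\theta_{\max}/\sin\theta_{\min}>1$, which the theorem permits. Absorbing $\sin\theta_{\max}/\sin\theta_{\min}$ into $\bar a$ proves a theorem with a strictly stronger hypothesis, not this one.

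Even your uniform-overlap case assumes without proof that the $\gamma_i$ shrink. Bounding the number of \emph{early} levels needs a quantitative rate for that shrinkage, which is exactly what is missing. Separately, your tail estimate $1-\frac{w_{i_0}^2}{6(1-\rho^2)}$ can be negative for $\rho$ near $1$. The inequality $\log(1-x)\ge -2x$ on $[0,\tfrac12]$ gives $\prod(1-w_i^2/6)\ge\exp(-\tfrac13\sum_i w_i^2)$ with no such issue.

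\textbf{The missing idea is to contract $\sin\gamma_i$ rather than $w_i$.} Use the multiple-angle inequality $\sin(2tx)\le 2t\sin x$ for integers $t\ge1$ and $x\in[0,\pi/2]$. The recurrence then gives $\sin\gamma_{i+1}=\sin\theta_{i+1}\sin(2t_i\gamma_i)\le 2t_i\sin\theta_{i+1}\sin\gamma_i\le\bar a\sin\gamma_i$. The indices now match \eqref{eq:ti-explicit} exactly, so $\sin\gamma_i\le\bar a^{\,i-1}\sin\theta_1$.

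Only then pass to $w_i$:
\begin{itemize}
\item use the first term of \eqref{eq:ti-explicit} as a uniform bound $2t_i\le\bar a/\sin\theta_{\min}$, so the index shift no longer matters;
\item use $\gamma_i\le\theta_{\max}$, hence $\gamma_i\le\sin\gamma_i/\cos\theta_{\max}$.
\end{itemize}
Together these give $w_i\le K\bar a^{\,i}$ with $K=\sin\theta_1/(\sin\theta_{\min}\cos\theta_{\max})$, and $\sum_i w_i^2$ is a convergent geometric series.

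Your Lipschitz bound on $\arcsin$ would pair the indices correctly if applied to $\gamma$ instead of $w$. It would still give a per-level ratio $\bar a/\cos\theta_{\max}$, which may exceed $1$. The multiple-angle inequality keeps the $1/\cos$ loss out of the per-level contraction and pays it only once.

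\textbf{A smaller error in the cost step.} The inequality $\prod 2t_i\ge\sin\gamma_m/\sin\theta$ runs the wrong way to bound $\prod 2t_i$ from above. You need the floor of Lemma~\ref{lem:gamma-bracket}, $\sin\gamma_m\ge c\sin\theta\prod 2t_i$, which gives $\prod 2t_i\le\sin\gamma_m/(c\sin\theta)\le 1/(c\sin\theta)$.
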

\begin{proof}
The second term of~\eqref{eq:ti-explicit} gives $w_i\le\pi/2$, so
Lemma~\ref{lem:gamma-bracket} applies and it suffices to bound
$\prod_{i<m}(1-w_i^2/6)$ below by a constant independent of $m$.
Since $w_i^2/6\le\pi^2/24<\tfrac12$ and $\log(1-x)\ge-2x$ on
$[0,\tfrac12]$, that product is at least
$\exp\big(-\tfrac13\sum_{i<m}w_i^2\big)$, and it remains to bound the
sum. By $\sin(w_{i-1})\le 2t_{i-1}\sin\gamma_{i-1}$ and the first
term of~\eqref{eq:ti-explicit}, the recurrence gives
$\sin\gamma_i\le\bar a\sin\gamma_{i-1}$, hence
\begin{equation}
\label{eq:geometric-decay}
    \sin\gamma_i\ \le\ \sin\theta_1\,\bar a^{\,i-1}.
\end{equation}
Since $\gamma_i\le\theta_i\le\theta_{\max}$, we have
$\gamma_i\le\sin\gamma_i/\cos\theta_{\max}$, and with
$2t_i\le\bar a/\sin\theta_{\min}$,
\begin{equation}
\label{eq:w-decay}
    w_i\ =\ 2t_i\gamma_i
    \ \le\ \frac{\sin\theta_1}{\sin\theta_{\min}\cos\theta_{\max}}
    \;\bar a^{\,i}\ =:\ K\bar a^{\,i}.
\end{equation}
Summing the geometric series,
\begin{equation}
\label{eq:tax-sum}
    \sum_{i<m}w_i^2\ \le\ K^2\,\frac{\bar a^{2}}{1-\bar a^{2}},
\end{equation}
a constant depending only on $\bar a$, $\theta_{\min}$ and
$\theta_{\max}$. Lemma~\ref{lem:gamma-bracket} and the
ceiling~\eqref{eq:angle-ceiling} therefore give
$\sin\gamma_m=\Theta\big(\sin\theta\prod_{i=1}^{m-1}2t_i\big)$, and~\eqref{eq:total-cost} gives
$T_{\mathrm{total}}=\Theta(1/\sin\theta)$.
\end{proof}

The first term of~\eqref{eq:ti-explicit} shows that the count remains
optimal provided the iterations of each level do not exceed the
inverse overlap of the level above,
$2t_i \le \bar a/\sin\theta_{i+1}$. Under this restriction the angle
$\gamma_m$ stays within a constant of~\eqref{eq:angle-ceiling}, and
the total oracle complexity is $\Theta(1/\sin\theta)$. Furthermore, with a fixed overlap $\theta_i = \vartheta$ and a
common count $t$ at every level, the constants behind these
asymptotics can be made explicit.

\begin{corollary}[Oracle excess for fixed overlaps]
\label{cor:epsilon}
Let $\theta_i=\vartheta$ and $t_i=t$ for all $i<m$, with
$a:=2t\sin\vartheta\le\bar a<1$, and set
\begin{equation}
\label{eq:epsilon-def}
    \epsilon\ :=\ \frac{a^2}{6\cos^2\vartheta\,(1-a^2)}
    \ =\ O\big(t^2\sin^2\vartheta\big).
\end{equation}
Then
\begin{equation}
\label{eq:epsilon-uniform}
    \sin\gamma_m\ \ge\ (1-\epsilon)\,\sin\theta\prod_{i=1}^{m-1}2t_i.
\end{equation}
\end{corollary}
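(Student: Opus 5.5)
Start from Lemma~\ref{lem:gamma-bracket}, which already gives
\[
\sin\gamma_m \ge \sin\theta\Big(\prod_{j=1}^{m-1}2t_j\Big)\prod_{i=1}^{m-1}\Big(1-\frac{w_i^2}{6}\Big).
\]
It therefore suffices to show $\prod_{i<m}(1-w_i^2/6)\ge 1-\epsilon$. I will use the elementary inequality $\prod_i(1-x_i)\ge 1-\sum_i x_i$ for $x_i\in[0,1]$, proved by induction. This reduces the claim to the bound $\sum_{i<m} w_i^2/6 \le \epsilon$. Each $x_i=w_i^2/6\le \pi^2/24<1$ by the window assumption~\eqref{eq:window-assumption}, so the inequality applies. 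This replaces the cruder exponential bound from the proof of Theorem~\ref{thm:query-optimal}, and it yields a constant in the linear form $1-\epsilon$.

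Next, bound the $w_i$ geometrically, with explicit constants in the fixed-overlap setting. By the recurrence~\eqref{eq:engine-recurrence} and $\sin w\le w$, I get $\sin\gamma_{i+1}=\sin\vartheta\sin(2t\gamma_i)\le 2t\sin\vartheta\,\gamma_i=a\gamma_i$. To close the recursion in terms of $\gamma$ rather than $\sin\gamma$, I use $\gamma\le\sin\gamma/\cos\gamma$ together with $\gamma_{i}\le\theta_i=\vartheta$. That gives $\gamma_{i+1}\le a\gamma_i/\cos\vartheta$, which loses a factor $1/\cos\vartheta$ at every level and would not give the stated constant. The better route is to track $\sin\gamma_i$ throughout: $\sin\gamma_i\le a^{i-1}\sin\vartheta$, and $\gamma_i\le \sin\gamma_i/\cos\vartheta$ is applied only once, at the end. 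Then
\[
w_i=2t\gamma_i\le \frac{2t\sin\vartheta\, a^{i-1}}{\cos\vartheta}=\frac{a^{i}}{\cos\vartheta}.
\]

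Summing the squares gives
\[
\sum_{i=1}^{m-1}\frac{w_i^2}{6}\le \frac{1}{6\cos^2\vartheta}\sum_{i\ge1}a^{2i}=\frac{a^2}{6\cos^2\vartheta\,(1-a^2)}=\epsilon,
\]
which matches~\eqref{eq:epsilon-def} exactly. The asymptotic form $\epsilon=O(t^2\sin^2\vartheta)$ follows because $a^2=4t^2\sin^2\vartheta$, while $1-a^2\ge1-\bar a^2$ and $\cos\vartheta$ are bounded away from zero. Combining this with the product inequality and Lemma~\ref{lem:gamma-bracket} gives~\eqref{eq:epsilon-uniform}.

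The main obstacle is the constant bookkeeping. Converting between $\gamma_i$ and $\sin\gamma_i$ must be done only once, at the final step, to get the single factor $1/\cos^2\vartheta$ rather than a compounding one. I also need to check that $\gamma_i\le\vartheta$ holds at every level so that $\cos\gamma_i\ge\cos\vartheta$. This follows from $\sin\gamma_i=\sin\vartheta\sin w_{i-1}\le\sin\vartheta$, together with $\gamma_i\in(0,\pi/2)$.
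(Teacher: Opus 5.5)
Your proposal is correct and is essentially the paper's proof. It uses the same ingredients: Lemma~\ref{lem:gamma-bracket}, the decay $\sin\gamma_i\le a^{i-1}\sin\vartheta$ converted once via $\gamma_i\le\sin\gamma_i/\cos\vartheta$ into $w_i\le a^{i}/\cos\vartheta$ (the paper's \eqref{eq:w-decay} with $K=1/\cos\vartheta$), the geometric sum, and $\prod_i(1-x_i)\ge 1-\sum_i x_i$. The one step to tighten is $\sin\gamma_{i+1}\le a\sin\gamma_i$. It needs the integer multiple-angle bound $\sin(2t\gamma_i)\le 2t\sin\gamma_i$, as used in the proof of Theorem~\ref{thm:query-optimal}; the bound $\sin w\le w$ you cite only yields $\sin\gamma_{i+1}\le a\gamma_i$.
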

\begin{proof}
With a fixed subspace overlap angle the constant of~\eqref{eq:w-decay} is
$K=1/\cos\vartheta$, so~\eqref{eq:tax-sum} gives
$\tfrac16\sum_{i<m}w_i^2\le\epsilon$. Since
$\prod_{i<m}\big(1-\tfrac{w_i^2}{6}\big)\ge1-\tfrac16\sum_{i<m}w_i^2$,
Lemma~\ref{lem:gamma-bracket} gives~\eqref{eq:epsilon-uniform}.
\end{proof}

Substituting~\eqref{eq:epsilon-uniform} into the first term
of~\eqref{eq:total-cost}, using $\sin\gamma_m \le \gamma_m$, for
$\epsilon \le \tfrac12$,
\begin{equation}
\label{eq:explicit-total}
    T_{\mathrm{total}} \;<\;
    (1+2\epsilon)\,\frac{\pi}{4\sin\theta}
    \;+\; 3\prod_{i=1}^{m-1} 2t_i.
\end{equation}
The oracle complexity is therefore $\Theta(1/\sin\theta)$, with
leading constant $\tfrac{\pi}{4}(1+2\epsilon)$ and the second term at
most $6/\sin\theta$ by~\eqref{eq:epsilon-uniform}. The count
$t_i = 1$ minimises both $\epsilon$ and the product, so the minimal
schedule is the cheapest.

The complexity can also be determined at the boundary $\bar a = 1$, where it becomes $\Theta(\sqrt{m}/\sin\theta)$.

\begin{lemma}[Oracle count at the boundary]
\label{lem:boundary}
Let the schedule satisfy $2t_i\sin\theta_{i+1}=1$ and $w_i\le\pi/2$
for $i<m$. Then
\begin{equation}
\label{eq:boundary-decay}
    \gamma_m=\Theta\Big(\big(\textstyle\sum_{i=1}^{m}
    \sin^{-2}\theta_i\big)^{-1/2}\Big),
\end{equation}
with constants uniform in the overlaps. For overlaps inside a fixed
$[\theta_{\min},\theta_{\max}]\subset(0,\pi/2)$ this is
$\gamma_m=\Theta\big(1/\sqrt{m}\,\big)$, and
\begin{equation}
\label{eq:boundary-cost}
    T_{\mathrm{total}}=\Theta\Big(\frac{\sqrt{m}}{\sin\theta}\Big).
\end{equation}
\end{lemma}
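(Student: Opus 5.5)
We work with $g_i := \sin\gamma_i$ and turn the recurrence into an approximately additive one for $g_i^{-2}$. Under the boundary schedule $2t_{i-1} = 1/\sin\theta_i$, Lemma~\ref{lem:recursion} gives
\[
g_i = \sin\theta_i\,\sin(w_{i-1}), \qquad w_{i-1} = 2t_{i-1}\gamma_{i-1} = \frac{\gamma_{i-1}}{\sin\theta_i}.
\]
So $g_i$ is $\sin\theta_i\sin(\gamma_{i-1}/\sin\theta_i)$. This is a map of the form $y \mapsto s\sin(\arcsin(y)/s)$. Its Taylor expansion is $y - c\,y^3/s^2 + \dots$ up to uniform constants, the cubic term coming from $\sin w \approx w - w^3/6$ and $\arcsin y \approx y + y^3/6$.

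The plan is to show there are absolute constants $0 < c_1 \le c_2$ with
\[
c_1\,\sin^{-2}\theta_i \;\le\; g_i^{-2} - g_{i-1}^{-2} \;\le\; c_2\,\sin^{-2}\theta_i
\]
for every $i \ge 2$. Telescoping from $g_1^{-2} = \sin^{-2}\theta_1$ then gives $g_m^{-2} = \Theta\big(\sum_{i\le m}\sin^{-2}\theta_i\big)$. Since $\gamma_m \le \pi/2$, we also have $\gamma_m = \Theta(g_m)$, which is~\eqref{eq:boundary-decay}.

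\textbf{Upper bound on the increment} (this gives the lower bound on $\gamma_m$). Use $\sin w \ge w(1-w^2/6)$ for $w = w_{i-1}\in[0,\pi/2]$, together with $\gamma_{i-1} \ge g_{i-1}$. This yields
\[
g_i \ge g_{i-1}\bigl(1 - w_{i-1}^2/6\bigr).
\]
Squaring and inverting, and using $1/(1-x)^2 \le 1 + Cx$ on $[0,\pi^2/24]$, we get
\[
g_i^{-2} \le g_{i-1}^{-2}\bigl(1 + C w_{i-1}^2\bigr).
\]
The key point is that $w_{i-1}^2\,g_{i-1}^{-2} = (\gamma_{i-1}/g_{i-1})^2\,\sin^{-2}\theta_i$, and $\gamma/\sin\gamma \le \pi/2$. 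Hence the increment is at most $C'\sin^{-2}\theta_i$.

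\textbf{Lower bound on the increment.} Use $\sin w \le w - w^3/6 + w^5/120 \le w(1 - w^2/10)$ on $[0,\pi/2]$. Combined with $\gamma_{i-1}$ versus $g_{i-1}$, this gives
\[
g_i \le \frac{\gamma_{i-1}}{1}\bigl(1 - w_{i-1}^2/10\bigr).
\]
The obstacle is that $\gamma_{i-1} \ge g_{i-1}$, so the linear term overshoots: $\gamma_{i-1} \le g_{i-1}(1 + c\,g_{i-1}^2)$. We must show the cubic loss $w_{i-1}^2/10 = \gamma_{i-1}^2/(10\sin^2\theta_i)$ dominates the arcsine gain $c\,g_{i-1}^2$. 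Because $\sin^{-2}\theta_i \ge 1$, the loss exceeds the gain only with the right constants, and the naive constants ($1/10$ versus $1/6$) fail.

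I would handle this exactly. Write $y = g_{i-1}$ and $s = \sin\theta_i$, so that
\[
g_i = s\sin\bigl(\arcsin(y)/s\bigr), \qquad s \le 1.
\]
The function $\phi(u) = \sin u$ is concave on $[0,\pi/2]$ with $\phi(u)/u$ decreasing. By concavity comparison ($s\,\phi(u/s) \le \phi(u)$ for $s\le1$), we get $g_i \le \sin(\arcsin y) = y$. This only gives monotonicity, so it needs refinement.

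For a quantitative version, set $h(s) = s\sin(a/s)$ with $a = \arcsin y$. Then
\[
h'(s) = \sin(a/s) - (a/s)\cos(a/s) \ge \frac{(a/s)^3}{3}\cdot c
\]
on $a/s \le \pi/2$. Integrating from $s$ to $1$ gives
\[
y - g_i \;\ge\; c\,a^3\Bigl(\tfrac{1}{s^2} - 1\Bigr)\tfrac12.
\]
This is of order $y^3(s^{-2}-1)$, which yields increments $\ge c_1(\sin^{-2}\theta_i - 1)$. The missing "$-1$" should be recovered by summing: the telescoped bound becomes $g_m^{-2} \ge \sin^{-2}\theta_1 + c_1\sum_{i\ge2}(\sin^{-2}\theta_i - 1)$. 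Comparing this with $\sum_i \sin^{-2}\theta_i$ needs care when all $\theta_i$ are near $\pi/2$. I expect this bookkeeping to be the main obstacle, and I would first try absorbing it by using $w_{i-1}\le\pi/2$ to bound $s$ away from $1$ relative to $a$.

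Finally, for $\theta_i\in[\theta_{\min},\theta_{\max}]$ the sum is $\Theta(m)$, so $\gamma_m = \Theta(1/\sqrt m)$. Then
\[
\prod_{i<m} 2t_i = \prod_{i\ge2}\sin^{-1}\theta_i = \frac{\sin\theta_1}{\sin\theta}.
\]
Inserting this into~\eqref{eq:total-cost} gives the upper bound in~\eqref{eq:boundary-cost}. The matching lower bound holds because the protocol uses at least $J\ge t_m^*-1$ applications of $W_{m-1}$, so $T_{\mathrm{total}} = \Omega(\prod 2t_i/\gamma_m)$ whenever $\gamma_m$ is small; small-$m$ cases are absorbed into the constants.
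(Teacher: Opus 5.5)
Your upper bound on the increment, the telescoping from $g_1^{-2}=\sin^{-2}\theta_1$, and the cost step are sound. Your explicit lower bound on $T_{\mathrm{total}}$ via $J\ge t_m^*-1$ is more careful than the paper's.

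The gap is in the lower bound on the increment, which is what gives $\gamma_m=O\big((\sum_i\sin^{-2}\theta_i)^{-1/2}\big)$. Your integration of $h'(s)=\sin u-u\cos u$ is correct, but it only yields $g_i^{-2}-g_{i-1}^{-2}\ge c_1(\sin^{-2}\theta_i-1)$, and you leave the ``$-1$'' open. The ``$-1$'' is genuine: the cubic coefficient of $y\mapsto s\sin(\arcsin(y)/s)$ is $-\tfrac16(s^{-2}-1)$, not $-c/s^2$ as in your opening heuristic. Your suggested repair cannot remove it, because $w_{i-1}=a/s\le\pi/2$ bounds $s$ from below in terms of $a$, not away from $1$. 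Nor is this bookkeeping. If $\sin\theta_i$ could approach $1$, the map would approach the identity and the increment would tend to $0$ while $\sin^{-2}\theta_i\to 1$. No lower constant uniform in the overlaps could then exist.

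The missing observation is that the schedule is integral. The condition $2t_{i-1}\sin\theta_i=1$ with $t_{i-1}\ge 1$ forces $\sin\theta_i=1/(2t_{i-1})\le\tfrac12$ for every $i\ge 2$. Then $\sin^{-2}\theta_i-1\ge\tfrac34\sin^{-2}\theta_i$, and your telescoped bound closes at once.

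The same fact also rescues the direct route you abandoned as having the wrong constants. First, $g_{i-1}\le 1/\sqrt{2}$: for $i-1=1$ this follows from $\theta_1\le\pi/4$, which $w_1\le\pi/2$ forces, and otherwise $g_{i-1}\le\sin\theta_{i-1}\le\tfrac12$. Hence $\gamma_{i-1}\le g_{i-1}(1+g_{i-1}^2/4)$. Second, $g_{i-1}^2\le\gamma_{i-1}^2=\sin^2\theta_i\,w_{i-1}^2\le w_{i-1}^2/4$. So the arcsine gain is at most $w_{i-1}^2/16$, which is below the loss $w_{i-1}^2/10$. This gives $g_i\le g_{i-1}(1-\tfrac{3}{80}w_{i-1}^2)$ and an increment of at least $\tfrac{3}{40}\sin^{-2}\theta_i$.

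The paper's proof is essentially this argument, written for $y_i=1/\gamma_i^2$. It places the arcsine correction on the new angle, using $\sin^2\gamma_{i+1}\le\sin^2\theta_{i+1}w_i^2\le w_i^2/4$ (again via $\sin\theta_{i+1}\le\tfrac12$ from $t_i\ge1$). This gives $\gamma_{i+1}\le\gamma_i(1-w_i^2/7)(1+w_i^2/20)\le\gamma_i(1-w_i^2/11)$.
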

\begin{proof}[Proof sketch]
The full computation is given in Appendix~\ref{app:proofs}. At the
boundary $\sin\theta_{i+1}=\gamma_i/w_i$, so the recurrence of
Lemma~\ref{lem:recursion} reduces to
$\sin\gamma_{i+1}=\gamma_i\sin(w_i)/w_i$. We track
$y_i:=1/\gamma_i^{2}$. Standard bounds on $\sin w$ for $w_i\le\pi/2$
give
\begin{equation}
\label{eq:boundary-increment}
    y_{i+1}-y_i=\Theta\Big(\frac{1}{\sin^2\theta_{i+1}}\Big),
\end{equation}
with constants independent of the overlap. Summing from
$y_1=\Theta\big(\sin^{-2}\theta_1\big)$
gives~\eqref{eq:boundary-decay}. Inside a fixed interval every term
of the sum is $\Theta(1)$, so $\gamma_m=\Theta(1/\sqrt{m})$, and the
schedule fixes $\prod_{i=1}^{m-1}2t_i=\sin\theta_1/\sin\theta
=\Theta(1/\sin\theta)$, which substituted
into~\eqref{eq:total-cost} yields~\eqref{eq:boundary-cost}.
\end{proof}

Thus by bounding the per-level iterates as in Theorem~\ref{thm:query-optimal}, the construction attains the optimal oracle complexity $\Theta(1/\sin\theta)$.

\subsection{Non-Oracle Operators}
\label{sec:non-oracle}
We now determine when the cost of the non-oracle operators remains proportional to the construction's oracle complexity. Let $c(S_{\psi_i})$ denote the
implementation cost of the diffuser $S_{\psi_i}$. The diffusers are
commuting involutions, so within each iterate a single diffuser remains
between successive oracle calls
(Lemma~\ref{lem:diffuser-cancellation}), and
Lemma~\ref{lem:nonoracle-cost} bounds the cost of the remaining
operators against the oracle count. The bound stays of the order of the
oracle count when the growth in diffuser cost from one level to the
next is offset by the repetitions introduced at the lower level. Under
this condition the contribution of each successive diffuser decreases
geometrically, and the total non-oracle cost remains proportional to the
oracle count multiplied by the cost of the innermost diffuser
(Corollary~\ref{cor:geometric-decay}).

Before cancellation, the total non-oracle cost of a single application
of the outermost iterate $S_{\psi_m}W_{m-1}$ is
\begin{equation}
\label{eq:naive-count}
    C(S_{\psi_m}W_{m-1}) = \sum_{i=1}^{m-1}c(S_{\psi_i})(2t_i+1)
    \prod_{j=i+1}^{m-1}2t_j+c(S_{\psi_m}),
\end{equation}
and Lemma~\ref{lem:diffuser-cancellation} reduces this count by merging
the diffusers that meet between consecutive oracle calls.

\begin{lemma}[Diffuser cancellation]
\label{lem:diffuser-cancellation}
The diffusion operators $\{S_{\psi_i}\}_{i=1}^{m}$ are pairwise
commuting involutions. Within a single iterate $S_{\psi_m}W_{m-1}$, a
single diffuser therefore remains between each pair of consecutive
oracle calls, and the non-oracle cost reduces to
\begin{equation}
\label{eq:nonoracle-reduced}
    C(S_{\psi_m}W_{m-1})
      = \sum_{i=1}^{m-1} c(S_{\psi_i})\,(2t_i-1)
        \!\!\prod_{j=i+1}^{m-1}\!\! 2t_j
        \;+\; c(S_{\psi_m})+2\!\sum_{i=1}^{m-1}\! c(S_{\psi_i}) .
\end{equation}
\end{lemma}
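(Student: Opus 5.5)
The plan has two parts: first prove commutativity directly from the tensor structure of~\eqref{eq:reflections}, then count diffusers by tracking the word obtained by unrolling~\eqref{eq:W-def}.

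\emph{Commutativity.} Write $S_{\psi_i} = \mathbb{I} - 2P_i$ with $P_i = \mathbb{I}_{m\ldots i+1}\otimes\ket{\psi_{i\ldots 1}}\bra{\psi_{i\ldots 1}}$. For $j<i$ the block decomposition~\eqref{eq:blocks} gives
\[
P_i = \mathbb{I}_{m\ldots i+1}\otimes\ket{\psi_{i\ldots j+1}}\bra{\psi_{i\ldots j+1}}\otimes\ket{\psi_{j\ldots 1}}\bra{\psi_{j\ldots 1}},
\qquad
P_j = \mathbb{I}_{m\ldots j+1}\otimes\ket{\psi_{j\ldots 1}}\bra{\psi_{j\ldots 1}} .
\]
Hence $P_iP_j = P_jP_i = P_i$, because $\mathrm{range}(P_i)\subseteq\mathrm{range}(P_j)$. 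The projectors therefore commute, and so do the reflections. Each $S_{\psi_i}$ is an involution as a reflection. Consequently, any finite product of diffusers collapses to the product of those $S_{\psi_i}$ that occur an odd number of times.

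\emph{Structure of the word.} Next I would unroll $W_i = S_{\psi_i}W_{i-1}S_{\psi_i}W_{i-1}\cdots S_{\psi_i}$. It contains $2t_i$ copies of $W_{i-1}$ and $2t_i+1$ copies of $S_{\psi_i}$. Exactly one $S_{\psi_i}$ sits in each of the $2t_i-1$ interior gaps between consecutive copies of $W_{i-1}$; this includes the central one, which separates the last copy of the left half from the first copy of the right half. One further $S_{\psi_i}$ sits at each end. By induction on $i$, I would show that $W_i$ begins with the maximal diffuser prefix $S_{\psi_i}S_{\psi_{i-1}}\cdots S_{\psi_1}$ before its first oracle call, and ends with the mirror suffix $S_{\psi_1}\cdots S_{\psi_i}$. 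Consider a gap between consecutive oracle calls that arises at level $i$, i.e.\ between two adjacent copies of $W_{i-1}$ inside some copy of $W_i$. That gap therefore contains exactly
\[
S_{\psi_1}\cdots S_{\psi_{i-1}}\,S_{\psi_i}\,S_{\psi_{i-1}}\cdots S_{\psi_1}.
\]
By commutativity and involutivity this collapses to the single $S_{\psi_i}$, which proves the ``single diffuser'' claim.

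\emph{Counting.} Every gap between consecutive oracle calls inside $W_{m-1}$ arises at exactly one level $i\le m-1$. A single outermost iterate $S_{\psi_m}W_{m-1}$ contains $\prod_{j=i+1}^{m-1}2t_j$ copies of $W_i$. Each such copy contributes $2t_i-1$ gaps of cost $c(S_{\psi_i})$, giving the sum in~\eqref{eq:nonoracle-reduced}. The remaining diffusers are the outer $S_{\psi_m}$, plus the prefix and suffix $S_{\psi_{m-1}}\cdots S_{\psi_1}$ of $W_{m-1}$ before the first and after the last oracle call. Together these give $c(S_{\psi_m}) + 2\sum_{i=1}^{m-1}c(S_{\psi_i})$. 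Any cross-iterate merging would only lower this further, but the statement counts a single iterate.

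I expect the main obstacle to be the boundary bookkeeping: proving cleanly, by induction, the prefix/suffix claim. One must make sure that every gap is attributed to exactly one level, and that the $2t_i-1$ interior gaps are all counted. In particular, the central $S_{\psi_i}$ must be recognised as an interior gap and not as an end. I would verify this on $m=3$, $t_i=1$ before writing the general induction.
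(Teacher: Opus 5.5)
Your proposal is correct and follows essentially the same route as the paper. The paper also obtains commutativity from the nesting $\ket{\psi_{i\ldots 1}}=\ket{\psi_{i\ldots j+1}}\ket{\psi_{j\ldots 1}}$, phrased as $S_{\psi_j}S_{\psi_i}S_{\psi_j}=S_{\psi_i}$ via the $(-1)$-eigenvector property rather than your range inclusion $P_iP_j=P_i$. It collapses each level-$i$ gap to a single $S_{\psi_i}$ by peeling off the bracketing $S_{\psi_{i-1}},\ldots,S_{\psi_1}$ one level at a time, where you write the whole palindromic run out at once, and it counts the $(2t_i-1)\prod_{j>i}2t_j$ gaps, the two boundary runs and $S_{\psi_m}$ exactly as you do.
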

\begin{proof}
For $j<i$ the axis $\ket{\psi_{i\ldots 1}}$ of $S_{\psi_i}$ contains
$\ket{\psi_{j\ldots 1}}$ as a tensor factor and is therefore a
$(-1)$-eigenvector of $S_{\psi_j}$, so, each diffuser being an
involution, $S_{\psi_j}S_{\psi_i}S_{\psi_j}=S_{\psi_i}$.
By~\eqref{eq:W-def}, consecutive copies of $W_{i-1}$ inside $W_i$
appear as $W_{i-1}S_{\psi_i}W_{i-1}$, and since every $W_{i-1}$ opens
and closes with $S_{\psi_{i-1}}$,
\begin{equation*}
W_{i-1}\,S_{\psi_i}\,W_{i-1}
= \cdots W_{i-2}\,S_{\psi_{i-1}}S_{\psi_i}S_{\psi_{i-1}}\,W_{i-2}\cdots
= \cdots W_{i-2}\,S_{\psi_i}\,W_{i-2}\cdots .
\end{equation*}
Repeating down to $W_0=S_x$ leaves a single $S_{\psi_i}$ between the
two adjacent oracle calls. Each $W_i$ contains $2t_i$ copies of
$W_{i-1}$, hence $2t_i-1$ consecutive pairs, and occurs
$\prod_{j=i+1}^{m-1}2t_j$ times inside $W_{m-1}$, giving the first
term of~\eqref{eq:nonoracle-reduced}. The remaining terms count the
diffusers before the first and after the last oracle call, one
$S_{\psi_i}$ at each end for every level $i<m$ together with the
outermost $S_{\psi_m}$.
\end{proof}

We now compare the remaining cost with the oracle count of the iterate,
$T(W_{m-1}) = \prod_{i=1}^{m-1}2t_i$, weighting the cost of each
diffuser by the oracle calls contributed by the levels beneath it.

\begin{lemma}[Non-oracle cost]
\label{lem:nonoracle-cost}
Define
\begin{equation}
\label{eq:residual-weights}
    u_i := \frac{c(S_{\psi_i})}{\prod_{j=1}^{i-1}2t_j}, \qquad i=1,\ldots,m,
\end{equation}
with $\prod_{j=1}^{0}2t_j=1$. Then
\begin{equation}
\label{eq:nonoracle-theta}
    C(S_{\psi_m}W_{m-1}) = \Theta\!\Big(T(W_{m-1})\sum_{i=1}^{m}u_i\Big),
\end{equation}
so the non-oracle cost is of the order of the oracle count exactly when
$\sum_{i=1}^m u_i=O(1)$.
\end{lemma}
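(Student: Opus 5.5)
We start from the reduced count~\eqref{eq:nonoracle-reduced} of Lemma~\ref{lem:diffuser-cancellation} and compare each of its terms with $T(W_{m-1}) = \prod_{j=1}^{m-1}2t_j$. The identity we use throughout is
\[
\prod_{j=i+1}^{m-1}2t_j \;=\; \frac{T(W_{m-1})}{\prod_{j=1}^{i}2t_j} \;=\; \frac{T(W_{m-1})}{2t_i\prod_{j=1}^{i-1}2t_j}.
\]
Substituting it into the first sum of~\eqref{eq:nonoracle-reduced} turns the $i$-th summand into $T(W_{m-1})\,u_i\,(2t_i-1)/(2t_i)$. Since $t_i\ge1$, the factor $(2t_i-1)/(2t_i)$ lies in $[\tfrac12,1)$. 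The first sum is therefore bounded above by $T(W_{m-1})\sum_{i<m}u_i$ and below by half of this.

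It remains to absorb the boundary terms $c(S_{\psi_m}) + 2\sum_{i<m}c(S_{\psi_i})$. The outermost term is exactly $T(W_{m-1})\,u_m$ by the definition~\eqref{eq:residual-weights}, because $\prod_{j=1}^{m-1}2t_j = T(W_{m-1})$. For $i<m$ we write $c(S_{\psi_i}) = u_i\prod_{j=1}^{i-1}2t_j$. The product $\prod_{j=1}^{i-1}2t_j$ is at most $T(W_{m-1})/(2t_i)\le T(W_{m-1})/2$, so $2c(S_{\psi_i}) \le T(W_{m-1})\,u_i$. Adding up, the total is at most $2T(W_{m-1})\sum_{i=1}^m u_i$. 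Since all costs are non-negative, it is also at least $\tfrac12 T(W_{m-1})\sum_{i<m}u_i + T(W_{m-1})u_m \ge \tfrac12 T(W_{m-1})\sum_{i=1}^m u_i$. This gives~\eqref{eq:nonoracle-theta} with explicit constants $\tfrac12$ and $2$, uniform in $m$ and in the schedule.

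The final clause then follows directly. The cost $C(S_{\psi_m}W_{m-1})$ is $O(T(W_{m-1}))$ if and only if $\sum_i u_i = O(1)$, because the two-sided bound has constants independent of $m$ and $N$. We expect no real obstacle here. The only point needing care is that the boundary terms are dominated by the same weights $u_i$ rather than contributing an extra $\sum_i c(S_{\psi_i})$ that might not scale with $T$. This is handled by the bound $\prod_{j<i}2t_j \le T(W_{m-1})/2$ for $i<m$, which uses $t_i\ge1$.
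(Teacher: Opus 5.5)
Your proposal is correct and follows the paper's proof essentially step for step. You use the same substitution $\prod_{j=1}^{i-1}2t_j\prod_{j=i+1}^{m-1}2t_j = T(W_{m-1})/2t_i$, which gives the coefficients $1-\tfrac{1}{2t_i}\in[\tfrac12,1)$ on the first sum, the same identity $c(S_{\psi_m}) = u_m\,T(W_{m-1})$, and the same bound $2c(S_{\psi_i})\le T(W_{m-1})\,u_i$ for the leading and trailing runs, arriving at the same constants $\tfrac12$ and $2$ (your explicit note that the lower bound just drops the non-negative boundary runs is something the paper leaves implicit).
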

\begin{proof}
Substituting $c(S_{\psi_i})=u_i\prod_{j=1}^{i-1}2t_j$ and
$\prod_{j=1}^{i-1}2t_j\cdot\prod_{j=i+1}^{m-1}2t_j=T(W_{m-1})/2t_i$ into
the first term of \eqref{eq:nonoracle-reduced},
\begin{equation}
\label{eq:nonoracle-residual}
    \sum_{i=1}^{m-1} c(S_{\psi_i})\,(2t_i-1)\!\!\prod_{j=i+1}^{m-1}\!\! 2t_j
      = T(W_{m-1})\sum_{i=1}^{m-1}\Big(1-\frac{1}{2t_i}\Big)u_i ,
\end{equation}
with coefficients $1-\tfrac1{2t_i}\in[\tfrac12,1)$ since $t_i\ge1$. Of the
boundary terms, $c(S_{\psi_m})=u_m\,T(W_{m-1})$ directly from
\eqref{eq:residual-weights}. The leading and trailing runs contribute
$2\sum_{i<m}c(S_{\psi_i})
=2\,T(W_{m-1})\sum_{i<m}u_i/\prod_{j=i}^{m-1}2t_j
\le T(W_{m-1})\sum_{i<m}u_i$.
Combining, $C(S_{\psi_m}W_{m-1})$ lies between $\tfrac12$ and $2$ times
$T(W_{m-1})\sum_{i=1}^{m}u_i$, which is \eqref{eq:nonoracle-theta}.
\end{proof}

\begin{remark}[Cancellation across the full run]
\label{rem:per-iterate}
The count \eqref{eq:nonoracle-theta} is per iterate, and the same
cancellation extends across the full run. Consecutive outer iterates
meet as $W_{m-1}S_{\psi_m}W_{m-1}$, the pattern of
Lemma~\ref{lem:diffuser-cancellation}, so only the boundary runs of the
first and last iterates survive. The first may be dropped, $\ket{\psi}$
being an eigenstate of every diffuser. The iterates of the cascade
cancel likewise, and contribute less than
the cost of two outer iterates. The total non-oracle cost is therefore
\eqref{eq:nonoracle-theta} with $T(W_{m-1})$ replaced by the total
oracle count.
\end{remark}

The weight $u_i$ compares the cost of the level-$i$ diffuser with the
repetitions accumulated beneath it, so the sum in
\eqref{eq:nonoracle-theta} stays bounded whenever the growth in cost
from one level to the next is offset by the iteration count at the
lower level.

\begin{corollary}[Geometric decay]
\label{cor:geometric-decay}
Suppose there is a constant $\bar r<1$ such that
\begin{equation}
\label{eq:geometric-condition}
    \frac{1}{\bar r}\cdot\frac{c(S_{\psi_{i+1}})}{2\,c(S_{\psi_i})}
    \le t_i,
    \qquad i<m.
\end{equation}
Then $u_{i+1}\le\bar r\,u_i$, and
\begin{equation}
\label{eq:geometric-bound}
    C(S_{\psi_m}W_{m-1}) = O\big(T(W_{m-1})\,c(S_{\psi_1})\big).
\end{equation}
\end{corollary}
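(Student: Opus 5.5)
The argument has two steps: convert the hypothesis~\eqref{eq:geometric-condition} into a ratio bound on consecutive weights $u_i$ of~\eqref{eq:residual-weights}, then sum the resulting geometric series inside Lemma~\ref{lem:nonoracle-cost}.

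\textbf{Step 1: the ratio bound.} From the definition,
\[
  \frac{u_{i+1}}{u_i}=\frac{c(S_{\psi_{i+1}})}{c(S_{\psi_i})}\cdot\frac{\prod_{j=1}^{i-1}2t_j}{\prod_{j=1}^{i}2t_j}=\frac{c(S_{\psi_{i+1}})}{2t_i\,c(S_{\psi_i})}.
\]
Rearranging~\eqref{eq:geometric-condition} gives $c(S_{\psi_{i+1}})/(2c(S_{\psi_i}))\le \bar r\,t_i$. Hence $u_{i+1}/u_i\le\bar r$ for every $i<m$, which is the first claim. Here I assume the diffuser costs are positive, so the division is legitimate.

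\textbf{Step 2: summing the weights.} Iterating Step 1 from $u_1=c(S_{\psi_1})$ (the empty product equals $1$) gives $u_i\le\bar r^{\,i-1}u_1$. Therefore
\[
  \sum_{i=1}^m u_i\le \frac{c(S_{\psi_1})}{1-\bar r},
\]
which is uniform in $m$ because $\bar r<1$ is a constant.

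\textbf{Step 3: conclusion.} Substitute this bound into~\eqref{eq:nonoracle-theta} of Lemma~\ref{lem:nonoracle-cost}. The upper half of its $\Theta$ (the factor $2$ established in that proof) gives
\[
  C(S_{\psi_m}W_{m-1})\le \frac{2}{1-\bar r}\,T(W_{m-1})\,c(S_{\psi_1}),
\]
which is~\eqref{eq:geometric-bound}. Only the upper bound is claimed, so the lower constant is not needed.

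The one point requiring care is that the constant hidden in the $O$ is independent of $m$ and of the schedule. This holds because the $\Theta$-constants in Lemma~\ref{lem:nonoracle-cost} are absolute ($\tfrac12$ and $2$), and the geometric factor depends only on $\bar r$.
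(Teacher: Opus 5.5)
Your proof is correct and takes the same route as the paper. You bound the ratio $u_{i+1}/u_i = c(S_{\psi_{i+1}})/(2t_i\,c(S_{\psi_i}))\le\bar r$, sum the geometric series to get $\sum_i u_i\le c(S_{\psi_1})/(1-\bar r)$, and substitute into Lemma~\ref{lem:nonoracle-cost}. Tracking the explicit upper constant $2$ from that lemma's proof is a small but useful addition, since it shows the hidden constant is independent of $m$ and of the schedule.
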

\begin{proof}
Condition~\eqref{eq:geometric-condition} gives
$u_{i+1}/u_i=c(S_{\psi_{i+1}})/\big(2t_i\,c(S_{\psi_i})\big)\le\bar r$,
so $\sum_{i=1}^{m}u_i\le u_1/(1-\bar r)$ with $u_1=c(S_{\psi_1})$, and
Lemma~\ref{lem:nonoracle-cost} gives \eqref{eq:geometric-bound}.
\end{proof}

Condition~\eqref{eq:geometric-condition} bounds each $t_i$ from below,
while the schedule condition of Theorem~\ref{thm:query-optimal}
bounds it from above, yielding a schedule for $t_i$ that is optimal in both query complexity and non-oracle operator cost.

\subsection{Bounding Both Complexities}
\label{sec:window}
The two complexities constrain the schedule from opposite sides. The
oracle count of Theorem~\ref{thm:query-optimal} bounds each $t_i$ from
above, the non-oracle cost of Corollary~\ref{cor:geometric-decay}
bounds it from below, and a schedule is optimal in both cost measures
whenever every $t_i$ lies between the two. 

\begin{corollary}[Iteration-count window]
\label{cor:window}
Suppose there is a constant $\bar r<1$ such that the schedule satisfies
\begin{equation}
\label{eq:window}
    \frac{1}{\bar r}\cdot\frac{c(S_{\psi_{i+1}})}{2\,c(S_{\psi_i})}
    \;\le\; t_i \;\le\;
    \min\!\Big(\frac{\bar a}{2\sin\theta_{i+1}},\,
    \frac{\pi}{4\gamma_i}\Big),
    \qquad i<m,
\end{equation}
with $\bar a<1$ the constant of Theorem~\ref{thm:query-optimal}. Then
the oracle complexity is $\Theta(1/\sin\theta)$ and the non-oracle cost
of each iterate is $\Theta\big(T(W_{m-1})\,c(S_{\psi_1})\big)$.
\end{corollary}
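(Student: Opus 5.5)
The window~\eqref{eq:window} is just the conjunction of the hypotheses of Theorem~\ref{thm:query-optimal} and Corollary~\ref{cor:geometric-decay}, so the plan is to invoke each in turn and then upgrade the non-oracle $O$-bound to a $\Theta$-bound.

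First, the right-hand inequality of~\eqref{eq:window} is exactly the schedule condition~\eqref{eq:ti-explicit}. It implies $w_i=2t_i\gamma_i\le\pi/2$, so the standing assumption~\eqref{eq:window-assumption} holds and the construction is well defined. We assume, as Theorem~\ref{thm:query-optimal} does, that the local overlaps lie in a fixed interval $[\theta_{\min},\theta_{\max}]$. Theorem~\ref{thm:query-optimal} then gives $T_{\mathrm{total}}=\Theta(1/\sin\theta)$ directly.

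Second, the left-hand inequality of~\eqref{eq:window} is condition~\eqref{eq:geometric-condition}. Corollary~\ref{cor:geometric-decay} therefore gives $u_{i+1}\le\bar r\,u_i$, hence $\sum_{i=1}^m u_i\le c(S_{\psi_1})/(1-\bar r)$. Lemma~\ref{lem:nonoracle-cost} then yields the upper bound $C(S_{\psi_m}W_{m-1})=O\big(T(W_{m-1})\,c(S_{\psi_1})\big)$.

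The only step not already recorded is the matching lower bound. For this we use the $\Theta$ in~\eqref{eq:nonoracle-theta}: since every $u_i\ge0$, we have $\sum_i u_i\ge u_1=c(S_{\psi_1})$. So $C(S_{\psi_m}W_{m-1})\ge\tfrac12\,T(W_{m-1})\,c(S_{\psi_1})$, using the factor $\tfrac12$ from the proof of Lemma~\ref{lem:nonoracle-cost}. This requires $m\ge2$, or else the trivial case where the sum is the single term $u_1$. Combining the two bounds gives $\Theta\big(T(W_{m-1})\,c(S_{\psi_1})\big)$ per iterate.

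Finally, by Remark~\ref{rem:per-iterate}, replacing $T(W_{m-1})$ by the total oracle count carries this to the whole run. The main potential obstacle is whether the window is nonempty: integers $t_i$ satisfying both sides must exist. The corollary, however, assumes a schedule in the window rather than asserting one exists, so the proof needs only to note this assumption and leave existence to the instantiations.
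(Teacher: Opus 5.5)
Your proof is correct and matches the paper's argument. The right-hand side of the window is the schedule condition of Theorem~\ref{thm:query-optimal}, and the left-hand side gives $u_1\le\sum_i u_i\le u_1/(1-\bar r)$ via Corollary~\ref{cor:geometric-decay}, which Lemma~\ref{lem:nonoracle-cost} turns into the $\Theta\big(T(W_{m-1})\,c(S_{\psi_1})\big)$ bound. Stating explicitly that the bounded-overlap hypothesis of Theorem~\ref{thm:query-optimal} is inherited is a fair clarification that the paper leaves implicit.
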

\begin{proof}
The right-hand bound of~\eqref{eq:window} is, in both of its terms, the
schedule condition of Theorem~\ref{thm:query-optimal}, under which the
oracle complexity is $\Theta(1/\sin\theta)$. The left-hand bound is
condition~\eqref{eq:geometric-condition}, so the weights satisfy
$u_1\le\sum_{i=1}^{m}u_i\le u_1/(1-\bar r)$ with $u_1=c(S_{\psi_1})$,
and Lemma~\ref{lem:nonoracle-cost} gives non-oracle cost
$\Theta\big(T(W_{m-1})\,c(S_{\psi_1})\big)$.
\end{proof}

By Remark~\ref{rem:per-iterate}, the same count extends
across the full run, so the total non-oracle cost is
$\Theta\big(c(S_{\psi_1})/\sin\theta\big)$. Whenever the innermost
diffuser costs $O(1)$ in the chosen cost measure, both complexities are
therefore of the order $1/\sin\theta$, and any schedule
satisfying~\eqref{eq:window} is simultaneously optimal in oracle and
non-oracle cost. 

The analysis in this section is deliberately independent of how the
construction is implemented. The oracle complexity is fixed by the
local overlaps and the schedule, while the non-oracle cost is
captured by the abstract function $c(S_{\psi_i})$. This separates
the recursive dynamics from the physical cost of implementing the
diffusion operators. Section~\ref{sec:non-oracle} shows how the schedule can satisfy both requirements, with any schedule satisfying~\eqref{eq:window} simultaneously controlling the oracle and non-oracle costs. To apply
the construction in a particular search setting, it is therefore
enough to identify a suitable decomposition and substitute the
relevant diffuser costs. Section~\ref{sec:applications} does this
for unstructured search and spatial search on the $d$-dimensional
grid, where the non-oracle cost becomes, respectively, the
elementary gate count and the travel time of the walker.

\section{Instantiations}
\label{sec:applications}
We instantiate the construction in two settings, unstructured search
for a single target over $n$ qubits (Section~\ref{sec:combinatorial})
and spatial search for a single marked vertex on the $d$-dimensional
grid (Section~\ref{sec:spatial}). In both instances, the
initial and target states factorise over a natural partition of the search space, so our
algorithm applies and the analysis resolves to evaluating the schedule conditions of Corollary~\ref{cor:window} at the setting's local overlaps and
diffuser costs. For unstructured search our evaluation attains the
simultaneously optimal $\Theta(\sqrt{N})$ oracle and non-oracle gate
counts of Bria\'{n}ski et al.~\cite{hardwareGrover2}, and for the grid it
recovers the $O(\sqrt{N})$ time of Aaronson and
Ambainis~\cite{scott_paper} for $d\ge3$ and their
$O\big(\sqrt{N}\,(\log N)^{3/2}\big)$ at $d=2$.

\subsection{Unstructured Search}
\label{sec:combinatorial}

We consider unstructured search for a single target $x$ among $N=2^n$ values,
encoded in the computational basis of $n$ qubits so that the search space
is the Hilbert space $\mathcal{H}=(\mathbb{C}^2)^{\otimes n}$. A partition
of the qubits into $m=n/s$ blocks of $s$ is then a factorisation
$\mathcal{H}=\bigotimes_{i=1}^{m}\mathcal{H}_i$ into registers of
dimension $2^s$, a product partition in the sense of
Section~\ref{subsec:definitions}. Over it the target
$\ket{x}=\bigotimes_{i=1}^{m}\ket{x_i}$ is the product of its blocks and
the uniform superposition is the product of the local uniform
superpositions $\ket{+_i}$, so both states factorise and the recursion
applies as given.

Each block contains a local overlap
\begin{equation}
\label{eq:comb-overlap}
    \sin\theta_i = \braket{x_i|+_i} = 2^{-s/2},
\end{equation}
giving global overlap $\sin\theta=\prod_{i=1}^m\sin\theta_i=1/\sqrt{N}$, and
the level-$i$ diffuser $S_{\psi_i}$ is a reflection about the uniform
superposition over the lowest $i$ blocks. Acting on $is$ qubits, it is
realisable in $c(S_{\psi_i})=\Theta(is)$ elementary
gates~\cite{barenco, mcx1}.

\begin{theorem}[Unstructured search at constant block size]
\label{thm:combinatorial}
Let $N=2^n$, let the register be partitioned into $m=n/s$ blocks of
$s\ge3$ qubits, and let each partial diffuser be realised in
$c(S_{\psi_i})=\Theta(is)$ elementary gates. Then, at the minimal
schedule $t_i=1$, the construction prepares $\ket{x}$ with unit
probability using
\begin{equation}
\label{eq:comb-counts}
    T_{\mathrm{total}} = \big(1+\Theta(2^{-s})\big)\,\frac{\pi}{4}\sqrt{N}
    \quad\text{oracle calls and}\quad
    C_{\mathrm{non\text{-}oracle}} = \Theta\big(s\sqrt{N}\big)
    \quad\text{non-oracle gates}.
\end{equation}
\end{theorem}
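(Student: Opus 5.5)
The plan is to reduce the statement to the general results of Section~\ref{sec:complexity}, evaluated at $\theta_i=\vartheta$ with $\sin\vartheta=2^{-s/2}$, $t_i=1$ and $c(S_{\psi_i})=\Theta(is)$.

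\textbf{Standing assumptions.} First check \eqref{eq:window-assumption} and \eqref{eq:partition-assumption}, so that Theorem~\ref{thm:cascade} gives exact preparation of $\ket{x}$.
\begin{itemize}
\item For $s\ge3$ we have $\sin\vartheta\le 2^{-3/2}<\sqrt3/2$, so \eqref{eq:partition-assumption} holds.
\item For the window condition, the recurrence gives $\sin\gamma_{i+1}=\sin\vartheta\sin(2\gamma_i)\le 2\sin\vartheta\,\sin\gamma_i$. Put $a:=2\sin\vartheta=2^{1-s/2}\le 2^{-1/2}<1$. Then $\gamma_i\le\gamma_1=\vartheta$ by induction, and $w_i=2\gamma_i\le 2\vartheta<\pi/2$.
\item Hence the schedule $t_i=1$ satisfies \eqref{eq:ti-explicit} with $\bar a=a$.
\end{itemize}

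\textbf{Oracle count.} Apply Corollary~\ref{cor:epsilon} with $t=1$. This gives $\epsilon=a^2/(6\cos^2\vartheta(1-a^2))=\Theta(2^{-s})$. Feeding it into \eqref{eq:explicit-total} yields the upper bound
\[
T_{\mathrm{total}}<(1+2\epsilon)\tfrac{\pi}{4}\sqrt N+3\cdot 2^{m-1}.
\]
This bound has an additive term $3\cdot2^{m-1}$ that must be absorbed, and I expect this to be the main obstacle.

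\begin{itemize}
\item \emph{Absorbing the additive term.} We have $2^{m-1}=2^{n/s-1}$, while $\sqrt N=2^{n/2}$. For $s\ge3$ the ratio is $2^{n(1/s-1/2)}$, which is $o(1)$ when $n\to\infty$ with $s$ fixed. So the term is lower order than $2^{-s}\sqrt N$ for large $n$. If the statement is meant uniformly in $s$, I would instead refine the counting of the $+3$ in \eqref{eq:total-cost}. The cascade costs $\sum_k 2\cdot 2^{k-1}$ oracle calls, i.e.\ $O(2^{m})$. The term is therefore exponentially smaller than $\sqrt N$, since $m\le n/3$.
\item \emph{Matching lower bound.} The constant must be exactly $1+\Theta(2^{-s})$, not just $O$. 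For this I would use \eqref{eq:angle-ceiling} together with the reverse inequality $\gamma_m\le\sin\gamma_m(1+O(\gamma_m^2))$. The per-level loss must also be genuinely of order $2^{-s}$. Concretely, $\sin w\le w(1-w^2/6+w^4/120)$ with $w_1=2\vartheta$ gives a factor $1-\Theta(\vartheta^2)=1-\Theta(2^{-s})$ at the first level.
\item This gives $\sin\gamma_m\le(1-c2^{-s})\,2^{m-1}\sin\theta$. The count $J+1\ge t_m^*+\tfrac12$ of outer iterates then gives $T_{\mathrm{total}}\ge(1+c'2^{-s})\tfrac{\pi}{4}\sqrt N$.
\end{itemize}

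\textbf{Non-oracle cost.} Use Lemma~\ref{lem:nonoracle-cost} and Remark~\ref{rem:per-iterate}. The weights are $u_i=\Theta(is)/2^{i-1}$, so $\sum_i u_i=\Theta(s)$: the sum is bounded by $s\sum_i i2^{1-i}=O(s)$, and is at least $u_1=\Theta(s)$. Equivalently, the condition of Corollary~\ref{cor:geometric-decay} holds for $i\ge2$ with $\bar r$ close to $(i+1)/(2i)\le3/4$. The $i=1$ ratio is only a constant, which does not matter.

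The total non-oracle cost is therefore $\Theta(s\cdot T_{\mathrm{total}})=\Theta(s\sqrt N)$. The phase-generalised diffusers used in the outermost step and the cascade carry the same asymptotic costs by Section~\ref{subsec:access}, so they do not change this bound.
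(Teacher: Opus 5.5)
Your proposal is correct and follows the paper's proof for everything the paper proves. The shared steps are:
\begin{itemize}
\item the checks of \eqref{eq:window-assumption}, \eqref{eq:partition-assumption} and \eqref{eq:ti-explicit} at $t_i=1$;
\item Corollary~\ref{cor:epsilon} fed into \eqref{eq:explicit-total}, with the additive $\tfrac32 N^{1/s}$ absorbed at fixed $s$;
\item Lemma~\ref{lem:nonoracle-cost} with Remark~\ref{rem:per-iterate} for the gate count;
\item Theorem~\ref{thm:cascade} for unit probability.
\end{itemize}
Two differences are worth noting.

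\textbf{The non-oracle sum.} The paper applies the ratio test of Corollary~\ref{cor:geometric-decay} from $i\ge2$ at $\bar r=\tfrac34$, which holds with equality at $i=2$. It then treats level~$1$ as a flat exception. You instead bound $\sum_i u_i$ directly by $O\big(s\sum_i i\,2^{1-i}\big)$. Your version does not depend on the hidden constants in $c(S_{\psi_i})=\Theta(is)$, in the spirit of Remark~\ref{rem:diffuser-robustness}.

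\textbf{The oracle excess.} The paper's proof only shows the upper bound $(1+2\epsilon_s)\tfrac{\pi}{4}\sqrt N$, that is, an excess of $O(2^{-s})$. It reads $\epsilon_s=\Theta(2^{-s})$ as the size of that bound and never shows the excess is at least of that order. Your lower-bound argument supplies the missing half of the $\Theta$, and it works, but one step needs fixing.
\begin{itemize}
\item You use $J+1\ge t_m^*+\tfrac12$. This is false whenever the fractional part of $t_m^*$ exceeds $\tfrac12$.
\item Use $J+1>t_m^*$ instead, which is equivalent to $\delta<2\gamma_m$ in \eqref{eq:residual-angle}. This gives $T_{\mathrm{total}}\ge 2^{m-1}t_m^*=\tfrac{\pi}{4}\,2^{m-1}/\gamma_m-2^{m-2}$.
\item The lost $2^{m-2}$ is $o(2^{-s}\sqrt N)$ at fixed $s$, so nothing changes.
\end{itemize}

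You can also drop the Taylor estimate. At $t_i=1$ the recurrence telescopes exactly to $\sin\gamma_m=2^{m-1}\sin\theta\prod_{i<m}\cos\gamma_i$. Hence $\sin\gamma_m\le\sqrt{1-2^{-s}}\;2^{m-1}\sin\theta$ immediately. The geometric decay of the $\gamma_i$ bounds the same product below by $1-O(2^{-s})$, so this one identity gives both sides of the constant.

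\textbf{The uniform-in-$s$ aside.} This is unnecessary, since the theorem fixes $s$. As written it also does not deliver uniformity. The refined cascade count is still $\Theta(2^m)$. It is also compared against $\sqrt N$ rather than against $2^{-s}\sqrt N$, which is the comparison that matters.
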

\begin{proof}
We verify the schedule conditions of Theorem~\ref{thm:query-optimal} and
Corollary~\ref{cor:geometric-decay} in turn. For the first,
by~\eqref{eq:comb-overlap} the bound $t_i\le\bar a/(2\sin\theta_{i+1})
=\bar a\,2^{\,s/2-1}$ of~\eqref{eq:ti-explicit} admits $t_i=1$ for every
$s\ge3$ and any constant $\bar a\in(1/\sqrt{2},1)$, and its second bound
is met as well, since $\gamma_i\le\theta_i=\arcsin 2^{-s/2}<\pi/4$ gives
$\pi/(4\gamma_i)>1$. The overlaps $\theta_i=\arcsin 2^{-s/2}$ being
bounded, Theorem~\ref{thm:query-optimal} applies, and the oracle count is
$\Theta(1/\sin\theta)=\Theta(\sqrt{N})$. For the second, condition~\eqref{eq:geometric-condition} states
$t_i\ge c(S_{\psi_{i+1}})/\big(2\bar r\,c(S_{\psi_i})\big)
=(i+1)/(2\bar r\,i)$ up to implementation constants, which at
$\bar r=\tfrac34$ the schedule $t_i=1$ meets at every level $i\ge2$. Only the first level fails the condition, at ratio $u_2/u_1=1$, and a
single flat level costs at most a constant. The weights decay
geometrically from $u_2$ onwards, so
$\sum_{i=1}^{m}u_i\le 2\sum_{i=2}^{m}u_i\le 2u_2/(1-\bar r)=\Theta(s)$,
which Lemma~\ref{lem:nonoracle-cost} with Remark~\ref{rem:per-iterate}
forms the non-oracle count of~\eqref{eq:comb-counts}.

For a uniform partition with $\theta_i=\vartheta=\arcsin 2^{-s/2}$
and $t_i=1$, Corollary~\ref{cor:epsilon} applies with
$a=2\sin\vartheta=2^{1-s/2}\le2^{-1/2}$ and bounds the angle shortfall
by $\epsilon_s=\Theta(a^2)=\Theta(2^{-s})$, independent of $N$ and with
$\epsilon_s\le\tfrac12$ for every $s\ge3$. The total~\eqref{eq:explicit-total} then has leading term
$(1+2\epsilon_s)\tfrac{\pi}{4}\sqrt{N}$, Grover's leading
constant~\cite{bbht} up to an excess shrinking geometrically in $s$
and minimal at $t_i=1$. Its additive term is
$3\prod_{i<m}2t_i=\tfrac32 N^{1/s}$, which at constant $s\ge3$ is
$o\big(2^{-s}\sqrt{N}\big)$ and is absorbed by the excess
of~\eqref{eq:comb-counts}. Unit probability follows from
Theorem~\ref{thm:cascade}, whose two conditions,
$\lvert\gamma_k\rvert\le\theta_k<\pi/3$ and
$2t_k\gamma_k=2\gamma_k<\pi/2$, hold at every level.
\end{proof}

\begin{remark}[Robustness to the diffuser implementation]
\label{rem:diffuser-robustness}
The $\Theta(\sqrt{N})$ scaling does not require the partial
diffusers to have a linear gate count. Suppose instead that
$c(S_{\psi_i})=O\big((is)^p\big)$ for some fixed $p$, as may occur
when limited connectivity introduces a routing overhead. At the
minimal schedule $t_i=1$, the corresponding weights satisfy
\begin{align*}
    u_i &= O\!\left(\frac{(is)^p}{2^{i-1}}\right),
    \\
    \intertext{and hence}
    \sum_{i=1}^{m}u_i
    &= O\!\left(s^p\sum_{i=1}^{\infty}\frac{i^p}{2^{i-1}}\right)
    = O_p(s^p).
\end{align*}
Thus, for any fixed block size $s\ge3$ and fixed $p$, the
non-oracle cost remains $\Theta(\sqrt{N})$ in $N$, while the oracle
count is unchanged. For fixed block size, the $\Theta(\sqrt{N})$ scaling remains with a polynomial diffuser overhead.
\end{remark}

At $s=2$, $2t_i\sin\theta_{i+1}=1$ for $t_i = 1$, and by
Lemma~\ref{lem:boundary} the cost over the $m=n/2$ levels is
$\Theta\big(\sqrt{N}\sqrt{\log N}\big)$.

Equation~\eqref{eq:comb-counts} lies on the trade-off noted by Bria\'{n}ski
et al.~\cite{hardwareGrover2}. Their lower bound states that any
algorithm using $O\big(\log(1/\epsilon)\sqrt{N}\big)$ non-oracle gates
requires at least $(1+\epsilon)\tfrac{\pi}{4}\sqrt{N}$ oracle calls,
and with $\log(1/\epsilon_s)=\Theta(s)$ these are the results of
Theorem~\ref{thm:combinatorial}. Consequently, the block size traces
the trade-off directly, with each additional block qubit halving the
oracle excess at the cost of $\Theta(\sqrt{N})$ further gates.

\begin{remark}[Unequal blocks]
\label{rem:unequal-blocks}
While we use a uniform partition here, the construction admits unequal
block sizes. Any partition into blocks of $n_i\ge3$ qubits with
non-increasing, bounded sizes satisfies the schedule conditions of
Theorem~\ref{thm:query-optimal} and
Corollary~\ref{cor:geometric-decay} at $t_i=1$, and the counts of
Theorem~\ref{thm:combinatorial} hold with constants determined by
Theorem~\ref{thm:query-optimal} at the partition's overlaps
$\sin\theta_i=2^{-n_i/2}$.
\end{remark}

For unstructured search, a schedule can therefore be devised that is
optimal in both oracle complexity and non-oracle gate count,
reaching the same bounds as Bria\'{n}ski et al.~\cite{hardwareGrover2}. The
two constructions, however, differ in what the recursion determines.
Their approach describes a fixed procedure, recursively nesting a
single amplification step per level over a partition of the
register, where the amplitude gain at each level is determined by
the register size at that level. The amplitude concentrated in the
target is computed through a product recurrence in the block sizes, in which the accumulated loss stays bounded away
from zero only when the sizes $k_j$ grow with the level, $k_j=(x+1)j$ over
$m=\Theta(\sqrt{n/x})$ levels. At constant sizes $k\ge3$ that loss
decays exponentially in $n$ over the $n/k$ levels, and at $k=2$ it
vanishes entirely but the nesting itself makes $O(3^{n/2})$ oracle
calls, so no constant size attains $\Theta(\sqrt{N})$.

Here the reflections are nested with a tunable intermediate schedule
$\{t_i\}$, and the optimal counts are attained for any schedule
satisfying the conditions of Corollary~\ref{cor:window}, although
the minimal $t_i=1$ gives the smallest oracle excess. The state is
tracked exactly rather than the amplitude alone, and the loss at a
level is second order in the rotation that level applies, so the
losses converge at any constant $s\ge3$. The partition therefore
does not need to grow with the register, and the depth $m=n/s$ can be linear
in the number of qubits.

\subsection{Spatial Search}
\label{sec:spatial}
In this subsection, we consider spatial search for a single marked
vertex on a $d$-dimensional grid of $N$ vertices with side length
$L=N^{1/d}$, in the model of Aaronson and
Ambainis~\cite{scott_paper}. Where the travel cost in unstructured
search is negligible, in spatial search the database is spread across
a physical region and moving amplitude between distant sites takes
time. A quantum walker representing the state of the algorithm
searches the grid, carrying a vertex register $\ket{v}$ and a
workspace register $\ket{z}$. An oracle step writes the state of the
current vertex into the workspace,
$\ket{v,z}\mapsto\ket{v,z\oplus x_v}$, from which the oracle $S_x$
follows by phase kickback. Every other step moves amplitude only
along the edges of the grid, which join vertices differing by one in
a single coordinate. The cost of an algorithm is its number of steps,
so an operator correlating vertices at distance $\ell$ requires
$\Omega(\ell)$ of them.

Run plainly in this model, Grover search provides no speedup on the
$d=2$ grid. Its diffuser correlates amplitude across the whole grid,
so each of the $\Theta(\sqrt{N})$ iterates costs $\Theta(\sqrt{N})$
steps and the total is $\Theta(N)$~\cite{benioffRobot}. Aaronson and
Ambainis~\cite{scott_paper} recovered the speedup by recursively
subdividing the grid and partially amplifying within each level, in
$O(\sqrt{N})$ steps for $d\ge3$ and
$O\big(\sqrt{N}(\log N)^{3/2}\big)$ for $d=2$.

The Hilbert space of the walker encodes its position, and the vertex
register factorises along the axes as
$\ket{v}=\ket{a_1}\ket{a_2}\cdots\ket{a_d}$, each axis register
$\ket{a_r}$ spanned by the positions $\{\ket{0},\ldots,\ket{L-1}\}$. A
recursive subdivision of the grid is then a tensor-product partition
of this space, formed by writing the axes in mixed radix. Let the
$i$-th digit carry its own base $b_i$ with $L=\prod_{i=1}^{m}b_i$, the
same bases in the same places on every axis,
\begin{equation}
\label{eq:spatial-axis}
    \ket{a_r} = \ket{c_{r,m}}\ket{c_{r,m-1}}\cdots\ket{c_{r,1}},
    \qquad c_{r,i}\in\{0,\ldots,b_i-1\}.
\end{equation}
The digits partition the grid geometrically. Varying the lowest digit
of each axis and fixing the rest, the axes range over a sub-cube of
side $b_1$, the position within it indexed by the lowest digits and
the sub-cube itself by the rest. Likewise at each level $i$, the
digits above the $i$-th single out one cube of side
$\ell_i=\prod_{j\le i}b_j$, and the grid is divided by such cubes at
every level. Regrouping the digit factors by level rather than by
axis, the vertex register reads
\begin{equation}
\label{eq:spatial-regroup}
    \ket{v} \;=\; \bigotimes_{i=m}^{1}\ket{c_i},
    \qquad
    \ket{c_i} := \ket{c_{1,i}}\ket{c_{2,i}}\cdots\ket{c_{d,i}}
    \;\in\; \mathcal{H}_i \cong \big(\mathbb{C}^{b_i}\big)^{\otimes d},
\end{equation}
so that register $i$ holds one base-$b_i$ digit of every axis, in a
fixed axis order, and $\dim\mathcal{H}_i=b_i^{\,d}$. The rewriting is a relabelling of the walker's basis states, and a
step of the walker still changes a single axis by one. Within a cell
such a step changes only the lowest digit, and a step across a cell
boundary changes the digits above as well. Figure~\ref{fig:spatial-partition} illustrates it on a $16\times16$
grid.

Both the initial state and the target state factorise over this
partition of $\mathcal{H}$. The marked vertex can be written as a
product of its digits, and the uniform superposition as a product of the uniform superpositions over each digit register. The conditions of
Section~\ref{subsec:definitions} are therefore satisfied, and the
construction applies directly to the $m$-level partition.

The local overlaps follow immediately from the relabelling. At level
$i$, the register $\mathcal{H}_i$ distinguishes the $b_i^{\,d}$ sub-cubes
of side $\ell_{i-1}$ contained within one of side $\ell_i$, and the
target digit selects one of these values. Hence
\begin{equation}
\label{eq:spatial-overlap}
    \sin\theta_i = \braket{x_i|\psi_i} = \frac{1}{b_i^{\,d/2}},
    \qquad
    \prod_{i=1}^{m}\sin\theta_i = L^{-d/2} = \frac{1}{\sqrt{N}}.
\end{equation}

Under the relabelling, each partial diffuser acts independently on the
sub-cubes at its level. Recall that
\begin{equation}
    S_{\psi_i}
    = \mathbb{I}_{m\ldots i+1}\otimes
    \bigl(\mathbb{I}_{i\ldots 1}
    - 2\ket{\psi_{i\ldots 1}}\!\bra{\psi_{i\ldots 1}}\bigr).
\end{equation}
Fixing the registers above level $i$ selects one block in which the
lowest $i$ registers span a sub-cube $C$ of side $\ell_i$. Within this
block, $\ket{\psi_{i\ldots 1}}$ is the uniform state $\ket{s_C}$ over the
sub-cube. Hence
\begin{equation}
    S_{\psi_i}
    =
    \bigoplus_C
    \left(I_C-2\ket{s_C}\!\bra{s_C}\right).
    \label{eq:spatial-diffuser}
\end{equation}

To implement these reflections, let $U_i$ map a fixed corner of each
sub-cube to its uniform state, and let $Z_i$ apply a phase of $-1$ to
that corner in every sub-cube. Then
\begin{equation}
    S_{\psi_i}=U_iZ_iU_i^\dagger.
\end{equation}
In~\cite{scott_paper} it is shown that an arbitrary known
state on a graph can be prepared from a fixed vertex by fanning out
along a minimum-height spanning tree. For a $d$-dimensional sub-cube
of side $\ell_i$ rooted at a corner, this requires
$d(\ell_i-1)$ $C$-local steps. This count is also necessary, since
$\ket{s_C}$ has non-zero amplitude at the opposite corner, whose graph
distance from the root is $d(\ell_i-1)$. Thus
\begin{equation}
    c(U_i)=d(\ell_i-1).
\end{equation}
The corresponding transformations act independently within the
distinct sub-cubes at level $i$ and can therefore be applied in
parallel without transferring amplitude between them. They also act
trivially on the walker's workspace register. Since $Z_i$ requires
one step and $U_i^\dagger$ has the same cost as $U_i$, this
implementation gives
\begin{equation}
    c(S_{\psi_i})
    =2d(\ell_i-1)+1
    =\Theta(d\ell_i).
    \label{eq:diffuser-cost}
\end{equation}
The phase-generalised operator $S_{\psi_i}(\alpha)$ is obtained by
replacing the phase $-1$ in $Z_i$ by $e^{i\alpha}$, without changing
this cost. 

Since $\ell_{i+1}=b_{i+1}\ell_i$, the ratio between successive
diffuser costs is asymptotically $b_{i+1}/2$. Substituting this and
$\sin\theta_i=1/b_i^{d/2}$ into the schedule conditions of
Corollary~\ref{cor:window} gives, up to constant factors,
\begin{equation}
\label{eq:spatial-schedule}
    \frac{b_{i+1}}{2}
    \;<\;
    t_i
    \;<\;
    \frac{b_{i+1}^{\,d/2}}{2}.
\end{equation}
The lower bound keeps the non-oracle cost of the same order as the
oracle count, while the upper bound preserves the optimal oracle
complexity. For $d\ge3$ and bases $b_i\ge3$, this interval contains an
integer and therefore admits a valid schedule $\{t_i\}$. The initial state, the uniform superposition
over all vertices of the graph, has a preparation cost of $N^{1/d}$ through the same fanout approach as the outermost diffuser operator $S_{\psi_m}$.

\begin{theorem}[Spatial search, $d\ge3$]
\label{thm:spatial-dge3}
Let $d\ge3$, and let the grid be subdivided with bases $b_i$
bounded above by a constant and satisfying $b_i\ge3$, or $b_i\ge2$
for $d\ge5$. Choose
\begin{equation}
t_i=\left\lceil\frac{b_{i+1}}{2\bar r}\right\rceil
\end{equation}
for a constant $\bar r<1$ such that the schedule satisfies the
iteration-count window~\eqref{eq:window}, and suppose that
$w_i=2t_i\gamma_i\le\pi/2$ at every level. Then the construction
prepares the marked vertex with unit probability in
\begin{equation}
T=\Theta(\sqrt{N})
\end{equation}
steps.
\end{theorem}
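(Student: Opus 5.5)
The plan is to obtain the result by instantiating Corollary~\ref{cor:window} and Theorem~\ref{thm:cascade} with the spatial overlaps~\eqref{eq:spatial-overlap} and diffuser costs~\eqref{eq:diffuser-cost}, and then converting the abstract costs into walker steps.

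\textbf{Step 1: the window contains the chosen counts.} I would first check that $t_i=\lceil b_{i+1}/(2\bar r)\rceil$ lies inside the window~\eqref{eq:window}.

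The lower bound holds by construction. Since $c(S_{\psi_{i+1}})/c(S_{\psi_i})\le b_{i+1}(1+O(1/\ell_i))$, with the additive $+1$ in~\eqref{eq:diffuser-cost} absorbed, one may need to shrink $\bar r$ slightly or take $\bar r$ close to $1$.

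For the upper bound, $t_i\le \bar a\, b_{i+1}^{d/2}/2$, I need $\lceil b/(2\bar r)\rceil \le \bar a b^{d/2}/2$ for some $\bar a<1$.
\begin{itemize}
\item For $d\ge3$, $b\ge3$: $b^{d/2}\ge b\sqrt3$, so it suffices that $b/\bar r+2\le \bar a\sqrt3\, b$. At $b=3$ this reads $3/\bar r+2\le 3\sqrt3\bar a\approx 5.19\bar a$. This forces $\bar r$ near $1$ and $\bar a$ near $1$, which is tight but feasible, e.g.\ $\bar r=0.99$, $\bar a=0.99$.
\item For $d\ge5$, $b=2$: $t=2$ and $2\le \bar a\cdot 2^{5/2}/2\approx 2.83\bar a$, which is fine.
\end{itemize}
These arithmetic checks are the fiddly part. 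Since the statement already says ``such that the schedule satisfies the window'', I would note the checks but lean on that hypothesis. The second term $\pi/(4\gamma_i)$ is the assumed $w_i\le\pi/2$.

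\textbf{Step 2: oracle count and the Theorem~\ref{thm:query-optimal} hypotheses.} Because the bases are bounded, $\theta_i$ ranges over a finite set inside $(0,\pi/3]$, as $b_i^{d/2}\ge2^{3/2}$ gives $\sin\theta_i\le 0.36$. So Theorem~\ref{thm:query-optimal} applies and $T_{\mathrm{total}}=\Theta(\sqrt N)$ oracle steps.

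\textbf{Step 3: non-oracle (travel) cost.} Corollary~\ref{cor:window} with Remark~\ref{rem:per-iterate} gives non-oracle cost $\Theta(c(S_{\psi_1})\sqrt N)$. Here $c(S_{\psi_1})=2d(b_1-1)+1=O(1)$ since $d$ and $b_1$ are constants. I would confirm that the boundary runs and the cascade add at most a constant number of outer iterates. Each of these costs at most $O(T(W_{m-1})u_1)$ after cancellation, together with $c(S_{\psi_m})=O(\sqrt[d]{N})$ for the initial-state preparation and the unmatched outermost diffusers, and $N^{1/d}=o(\sqrt N)$ for $d\ge3$.

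\textbf{Step 4: totals and determinism.} Adding the oracle steps and travel steps gives $O(\sqrt N)$. The lower bound $\Omega(\sqrt N)$ follows from oracle optimality~\cite{Optimal}. Unit probability follows from Theorem~\ref{thm:cascade}: its conditions $2t_k\gamma_k\le\pi/2$ (assumed) and $\gamma_k\le\theta_k\le\pi/3$ hold. The generalized-phase variants have the same step cost, as noted after~\eqref{eq:diffuser-cost}.

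The main obstacle I expect is Step 1 at the extreme case $d=3$, $b=3$, where the window is narrow. I would handle it by making explicit that the constants $\bar r,\bar a$ can be taken close to $1$, which only affects the hidden constants.
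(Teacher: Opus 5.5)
Your proposal is correct and follows the paper's proof. The window hypothesis with bounded overlaps lets Corollary~\ref{cor:window} and Remark~\ref{rem:per-iterate} give $\Theta(\sqrt{N})$ oracle calls and travel, since $c(S_{\psi_1})=\Theta(1)$, and Theorem~\ref{thm:cascade} gives unit probability; your added remarks on the $O(N^{1/d})$ initial preparation and the $\Omega(\sqrt{N})$ query lower bound match what the paper notes just before and after the theorem.

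One correction to Step~1: the window's lower bound does not hold by construction, because $c(S_{\psi_{i+1}})/c(S_{\psi_i})$ strictly exceeds $b_{i+1}$ (by a factor $1+O(1/\ell_i)$), so it is met only through the slack of the ceiling. Since the statement assumes the window, this does not affect your argument, and the paper likewise simply invokes that hypothesis.
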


\begin{proof}
Under the stated hypotheses, the schedule satisfies
condition~\eqref{eq:window} of Corollary~\ref{cor:window} with
constants $\bar a,\bar r<1$. The bounded bases keep the
overlaps~\eqref{eq:spatial-overlap} inside a fixed
$[\theta_{\min},\theta_{\max}]$. Corollary~\ref{cor:window}, together
with Remark~\ref{rem:per-iterate}, therefore gives
$\Theta(1/\sin\theta)=\Theta(\sqrt{N})$ oracle calls and travel cost
of the same order, since $c(S_{\psi_1})=\Theta(b_1)=\Theta(1)$.

The assumed bound $w_i\le\pi/2$ holds at every level, while
$b_i\ge2$ gives $\theta_i\le\pi/6<\pi/3$. The conditions of
Theorem~\ref{thm:cascade} are therefore satisfied, and the marked
vertex is prepared with unit probability.
\end{proof}

\begin{remark}[Uniform subdivisions]
\label{rem:uniform-spatial}
The hypotheses of Theorem~\ref{thm:spatial-dge3} include uniform
subdivisions $b_i=b$ whenever the corresponding constant schedule
satisfies the stated schedule conditions. For example, taking
$b=3$ and $t_i=2$ gives an admissible fixed-rate subdivision for
$d=3$ and $d=4$, while $b=2$ and $t_i=2$ does so for $d\ge5$.
More generally, any fixed satisfiable choice of $b$ and $t$ gives
$m=\Theta(\log N)$ while retaining the $\Theta(\sqrt{N})$ running
time. Grid sizes not admitting an exact subdivision can be padded to
the next compatible side length, increasing $N$ by only a constant
factor and therefore leaving the asymptotic running time unchanged.
\end{remark}

For $d\ge3$ the construction therefore reaches $\Theta(\sqrt{N})$
steps. This is the optimal lower bound, as $\Omega(\sqrt{N})$
oracle calls are necessary for any unstructured search problem~\cite{Optimal} and a $T$-step algorithm
makes at most $T$ of them. For $d=2$, however, the two sides of~\eqref{eq:spatial-schedule} are equal,
and no schedule satisfies the conditions of
Corollary~\ref{cor:window}.

\begin{figure}[t]
\centering
\begin{tikzpicture}[scale=0.30]
\fill[blue!12] ( 8, 8) rectangle (16,16);
\fill[blue!28] (12, 8) rectangle (16,12);
\draw[gray!25, line width=0.2pt] (0,0) grid[step=1] (16,16);
\draw[gray!75, line width=0.7pt] (0,0) grid[step=4] (16,16);
\draw[black, line width=1.1pt] (0,0) grid[step=8] (16,16);
\draw[black, line width=1.5pt] (0,0) rectangle (16,16);
\node[circle, fill=blue!75!black, inner sep=1.6pt] (v) at (13.5,9.5) {};
\draw[->, >=stealth] (0,16.8) -- (3,16.8) node[right, font=\scriptsize] {$a_1$};
\draw[->, >=stealth] (-0.8,16) -- (-0.8,13) node[below, font=\scriptsize] {$a_2$};
\node[font=\scriptsize, anchor=south] at (13.5,16.6) {$a_1=13=(1,1,1)$};
\node[font=\scriptsize, anchor=east]  at (-1.4,9.5)  {$a_2=6=(0,1,2)$};

\node[font=\scriptsize, blue!60!black] at (12.3,14) {$c_3=10$ \;(side-$8$ cell)};
\node[font=\scriptsize, blue!60!black] at (14.0,10.8) {$c_2=11$};
\draw[densely dashed, gray, ->, >=stealth] (17.5,9.5) -- (13.9,9.5)
   node[pos=0, right, black, font=\scriptsize]
   {$v=(13,6)$, $c_1=(1,2)$};

\end{tikzpicture}
\caption{The relabelling of Eqs.~\eqref{eq:spatial-axis}
and~\eqref{eq:spatial-regroup} on a $16\times16$ grid, with bases
$(b_3,b_2,b_1)=(2,2,4)$ and the origin at the top-left corner. Register
$i$ holds digit $i$ of each axis, and fixing the registers above
level $i$ selects the sub-cube of side $\ell_i$ containing $v$, shaded
darker at each level.}
\label{fig:spatial-partition}
\Description{A sixteen by sixteen square grid with its origin at the
top-left corner, ruled at three nested scales: thin lines around every
single vertex, medium lines every four vertices, and heavy lines every
eight vertices, corresponding to the bases four, two and two read from the
lowest digit upwards. The marked vertex, at coordinates thirteen and six,
is drawn as a filled dot. The side-eight cell containing it is lightly
shaded and labelled by its level-three digit, and the side-four cell within
that is shaded darker and labelled by its level-two digit; the position
within the smallest cell is the level-one digit. The horizontal axis
coordinate thirteen is annotated with its three digits one, one, one, and
the vertical axis coordinate six with its digits zero, one, two.}
\end{figure}

\begin{theorem}[Spatial search, $d=2$]
\label{thm:spatial-d2}
On the two-dimensional grid, let the subdivision have bases
$b_i\ge2$ with $\prod_{i=1}^{m}b_i=\sqrt{N}$, and choose the boundary
schedule $t_i=b_{i+1}/2$. Suppose that the schedule contains integer values and
satisfies $w_i=2t_i\gamma_i\le\pi/2$ for every $i<m$. Then
\begin{gather}
    \label{eq:spatial-d2}
    T=\Theta\Big(
    \sqrt{N}\,m\,
    \sqrt{\textstyle\sum_{i=1}^{m}b_i^{2}}
    \Big).
    \\
    \intertext{Among such admissible boundary schedules,}
    T=\Omega\big(
    \sqrt{N}\,(\log N)^{3/2}
    \big),
\end{gather}
with this scaling attained by admissible subdivisions with
$m=\Theta(\log N)$ and bounded bases.

Under the stated conditions, the construction prepares the marked
vertex with unit probability.
\end{theorem}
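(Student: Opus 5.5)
The proof has three parts. First, the oracle count comes from Lemma~\ref{lem:boundary}. Second, the diffuser weights $u_i$ of Lemma~\ref{lem:nonoracle-cost} turn the oracle count into a travel time. Third, the resulting expression is minimised over admissible subdivisions.

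\textbf{Oracle count and angle.} For $d=2$, \eqref{eq:spatial-overlap} gives $\sin\theta_{i+1}=1/b_{i+1}$. The boundary schedule $t_i=b_{i+1}/2$ is therefore exactly $2t_i\sin\theta_{i+1}=1$, and with the assumed $w_i\le\pi/2$ Lemma~\ref{lem:boundary} applies. It gives
\[
\gamma_m=\Theta\Bigl(\bigl(\textstyle\sum_i b_i^2\bigr)^{-1/2}\Bigr),
\qquad
T(W_{m-1})=\prod_{i<m}b_{i+1}=\sqrt N/b_1 .
\]
Substituting these into \eqref{eq:total-cost} yields
\[
T_{\mathrm{total}}=\Theta\Bigl(\sqrt N\,b_1^{-1}\sqrt{\textstyle\sum_i b_i^2}\Bigr).
\]
For the lower bound I use that the outermost stage needs $\Theta(1/\gamma_m)$ applications of $W_{m-1}$, each of which makes $\prod_{i<m}2t_i$ calls. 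The cascade only adds a constant number of $W_{m-1}$-equivalents.

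\textbf{Travel cost.} By \eqref{eq:diffuser-cost}, $c(S_{\psi_i})=\Theta(\ell_i)$ with $\ell_i=\prod_{j\le i}b_j$, and $\prod_{j<i}2t_j=\prod_{j=2}^{i}b_j=\ell_i/b_1$. Hence $u_i=\Theta(b_1)$ for every $i$, so $\sum_i u_i=\Theta(m\,b_1)$. Lemma~\ref{lem:nonoracle-cost}, extended to the whole run by Remark~\ref{rem:per-iterate}, gives non-oracle cost $\Theta(T_{\mathrm{total}}\,m\,b_1)$. The total step count is
\[
\Theta\bigl(T_{\mathrm{total}}(1+m b_1)\bigr)=\Theta\Bigl(\sqrt N\,m\sqrt{\textstyle\sum_i b_i^2}\Bigr),
\]
which is \eqref{eq:spatial-d2}. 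The factor $b_1$ cancels, and the oracle calls are dominated since $m b_1\ge1$. The walking cost $\Theta(\sqrt N)$ of preparing the initial state is also dominated. The step that needs care is that the $\Theta$ in Lemma~\ref{lem:nonoracle-cost} is two-sided. I must check that the boundary runs of the cascade and the first iterate do not change the order, and Remark~\ref{rem:per-iterate} covers this.

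\textbf{Optimisation.} Let $s=\sum_i\log_2 b_i=\tfrac12\log_2N$. I minimise $m\sqrt{\sum_i b_i^2}$ subject to this constraint. Since $b_i\ge2$, we have $b_i^2\ge 4$, so $\sum_i b_i^2\ge 4m$. Also $b_i^2\ge \log_2 b_i$ (indeed $b^2\ge 2^{\log_2 b}\ge\log_2 b$), so $\sum_i b_i^2\ge s$. If $m\ge s/C$ for a suitable constant $C$, the product is at least $m\cdot 2\sqrt m=\Omega((\log N)^{3/2})$. Otherwise $m<s/C$, and I argue as follows. For fixed $m$, convexity of $x\mapsto 4^{x}$ means $\sum_i b_i^2=\sum_i 4^{\log_2 b_i}$ is minimised at equal $\log_2 b_i=s/m$, giving $\sum_i b_i^2\ge m\,4^{s/m}$. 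The objective is then at least $m^{3/2}2^{s/m}$. Minimising over real $m\in[1,s]$, the exponential dominates when $m$ is small, and the minimum is $\Theta(s^{3/2})$ at $m=\Theta(s)$. In both cases the lower bound is $\Omega(\sqrt N(\log N)^{3/2})$.

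For attainment, take all $b_i=2$ with $t_i=1$. This schedule is an integer schedule with $m=\tfrac12\log_2N$. I then check $w_i\le\pi/2$: by Lemma~\ref{lem:boundary}, $\gamma_i=\Theta(1/\sqrt i)$ is decreasing, and $w_1=2\theta_1=2\arcsin(1/2)=\pi/3$, so the condition should hold at every level. Verifying this carefully is the main obstacle, since the constants in Lemma~\ref{lem:boundary} are only asymptotic. I would first try induction using $\sin\gamma_{i+1}=\gamma_i\sin(w_i)/w_i\le\gamma_i$.

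\textbf{Unit probability.} The bound $\theta_i\le\arcsin(1/2)=\pi/6\le\pi/3$ holds at every level, and $w_i\le\pi/2$ is assumed. These are the conditions of Theorem~\ref{thm:cascade}, which then gives deterministic preparation of the marked vertex.
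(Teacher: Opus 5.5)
Your proposal follows the paper's proof essentially step for step: the oracle count comes from Lemma~\ref{lem:boundary} with $\prod_{i<m}2t_i=\sqrt N/b_1$, the flat weights $u_i=\Theta(b_1)$ contribute a factor $\Theta(mb_1)$, and the convexity (AM--GM) bound $\sum_i b_i^2\ge mN^{1/m}$ is minimised at $m=\Theta(\log N)$; that bound alone already gives $\Omega(\sqrt N(\log N)^{3/2})$ for every $m$, so your case split is redundant. The check you flag as the main obstacle is immediate, since $\sin\gamma_{i+1}=\sin\theta_{i+1}\sin w_i\le\sin\theta_{i+1}=\tfrac12$ gives $\gamma_i\le\pi/6$ and hence $w_i=2\gamma_i\le\pi/3$ when $b_i=2$, whereas an induction on $\sin\gamma_{i+1}\le\gamma_i$ alone would not close, and $\gamma_i=\Theta(1/\sqrt i)$ does not by itself imply that $\gamma_i$ is decreasing.
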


\begin{proof}
At the boundary, $\sin\theta_i=1/b_i$ and, since
$w_i\le\pi/2$ for every $i<m$,
\begin{gather}
    \gamma_m
    = \Theta\!\left(\frac{1}{\sqrt{\sum_{i=1}^{m}b_i^2}}\right)
    \\
    \intertext{by~\eqref{eq:boundary-decay}. Also,
    $\prod_{i<m}2t_i=\prod_{i=2}^{m}b_i=\sqrt{N}/b_1$, so the oracle
    count is}
    \Theta\!\left(\frac{\sqrt{N}}{b_1}\sqrt{\sum_{i=1}^{m}b_i^2}\right).
    \\
    \intertext{At the boundary, the ratio in~\eqref{eq:geometric-condition}
    is $1$ at every level, so Lemma~\ref{lem:nonoracle-cost} and
    Remark~\ref{rem:per-iterate} contribute a further factor
    $\Theta(mb_1)$. The factors of $b_1$ cancel, giving}
    T = \Theta\!\left(\sqrt{N}\,m\,\sqrt{\sum_{i=1}^{m}b_i^2}\right),
\end{gather}
which proves~\eqref{eq:spatial-d2}.

We next minimise this expression over admissible boundary schedules.
Since $\prod_{i=1}^{m}b_i=\sqrt{N}$, we have
\begin{gather}
    \sum_{i=1}^{m}b_i^2 \ge mN^{1/m}.
    \\
    \intertext{Hence}
    T = \Omega\!\left(\sqrt{N}\,m^{3/2}N^{1/(2m)}\right).
    \\
    \intertext{The right-hand side is minimised for $m=\Theta(\log N)$,
    where $N^{1/(2m)}=\Theta(1)$, giving}
    T = \Omega\big(\sqrt{N}\,(\log N)^{3/2}\big).
\end{gather}
This scaling is attained whenever $m=\Theta(\log N)$ and the bases
remain bounded within the admissible class, since then
$\sum_i b_i^2=\Theta(m)$. If the bases are even, the boundary schedule is an integer; in particular, the binary subdivision $b_i=2$ gives
$t_i=1$ at every level. Finally, $b_i\ge2$ implies
$\theta_i\le\pi/6<\pi/3$. Together with the assumed condition
$w_i\le\pi/2$, the hypotheses of Theorem~\ref{thm:cascade} are
satisfied, and the marked vertex is prepared with unit probability.
\end{proof}

The two factors producing the $d=2$ overhead have
direct counterparts in the analysis of Aaronson and
Ambainis~\cite{scott_paper}. For bounded subdivisions, the decay of
the outer rotation angle gives an additional factor $\sqrt{m}$ in the
oracle count, corresponding to the $O(\sqrt{R})$
amplitude-amplification steps required in their construction when the
success probability falls as $\Omega(1/R)$. The non-oracle cost
contributes a further factor $m$, matching the factor $R$ accumulated
by their recursive local operations. At the minimising depth
$m=\Theta(\log N)$, these combine to give the same
$(\log N)^{3/2}$ overhead.

Despite the matching asymptotic costs, our spatial-search construction differs substantially from that of Aaronson and Ambainis~\cite{scott_paper}. There, the recursion runs through
amplitude amplification on subdivisions of the grid, and the success
probability of each level is bounded below, with the errors compounding multiplicatively.
For $d\ge3$ they are
absorbed by sub-cube sizes that grow doubly exponentially,
$n_{R-1}=\Theta(n_R^{\beta})$ with $\beta<1$, over
$\Theta(\log\log N)$ levels, a growth that also keeps their travel
cost geometric. For $d=2$ the diameter of the grid is
$\Theta(\sqrt{N})$, so only $O(1)$ amplification steps can be applied
at the end, meaning that the success probability must remain high
throughout the recursion, where amplification is least efficient.
Their recursion then runs at constant sub-cube sizes over
$\Theta(\log N)$ levels, costing one factor of $\log N$ for the
accumulated travel and a further $\sqrt{\log N}$ for the final amplification.

In our construction, the recursive subdivision is instead a
factorisation of the walker's Hilbert space. Each level reflects
about the image of the uniform superposition over its sub-cube under
the rotation that the level itself applies, so the level composes to
a rotation whose angle Section~\ref{sec:dynamics} tracks exactly in contrast to bounding the success probability of the subroutines during the recursion. The loss at a level is then second order in
its own rotation $w_i=2t_i\gamma_i$. For $d\ge3$, the schedule conditions~\eqref{eq:spatial-schedule}
make the $w_i$ decay geometrically, so $\sum_i w_i^2$ remains
bounded without increasing the sub-cube sizes, and the construction
attains $\Theta(\sqrt{N})$. For $d=2$ the
two sides of~\eqref{eq:spatial-schedule} coincide and no schedule
satisfies them. Here, the angle $w_i$ decays with the depth, and
$\gamma_m=\Theta(1/\sqrt{m})$ by Lemma~\ref{lem:boundary}. The
oracle count then gains a factor $\sqrt{m}$ and the diffusers a
further $m$, which at the minimising depth $m=\Theta(\log N)$
combine to $\Theta\big(\sqrt{N}(\log N)^{3/2}\big)$.

\section{Discussion}
\label{sec:discussion}

Building a quantum search algorithm around a decomposition of the
Hilbert space encoding the problem can yield advantages over
standard Grover search and amplitude amplification. We propose an approach
that recurses on this decomposed space through nested
reflection operators. When the initial and target states factorise
over the decomposition, the non-trivial dynamics at each level are
confined to a single two-dimensional invariant plane, and the
displacement of the initial state within that plane is determined
exactly by a scalar recurrence. An exact description of the state is
therefore available at any stage of the algorithm, simplifying the
analysis of recursive quantum search since properties of the
algorithm follow directly from the state's evolution instead of
through separate success-probability bounds at each level. This
makes it possible to reason about the recursion as a whole, rather
than controlling the accumulated error of a sequence of approximate
sub-searches. The resulting information can then be used when
implementing the construction within a particular search setting and
when optimising it for the costs relevant to that setting.


The Hilbert spaces of unstructured search for a single
target and spatial search for a single marked vertex on a
$d$-dimensional grid both admit decompositions for which our
construction applies. By abstracting the implementation cost of the
operators, we can derive the schedules under which the oracle
complexity and the cost of the non-oracle operators are
simultaneously minimised. The same recursive construction can
therefore be optimised against different physical costs without
changing its underlying analysis. It can be instantiated within a
setting by substituting the relevant operator costs, such as the elementary
gate count in unstructured search and the travel time of the walker
in spatial search. The resulting implementations attain the best
known costs for unstructured search and recover the established
recursive-search bounds for spatial search.

Furthermore, the analysis technique enabled by the recursion allows
us to establish efficient schedules that are not covered by the
corresponding prior analyses. In particular, fixed-size
decompositions attain the optimal oracle and non-oracle costs for
unstructured search and for spatial search with $d\ge3$. The
analyses of Bria\'{n}ski et al.~\cite{hardwareGrover2} and of
Aaronson and Ambainis~\cite{scott_paper}, respectively, operate within subdivision rates that grow with the recursion level. The construction
therefore attains these costs without requiring this growth, giving
greater freedom in choosing the decomposition and the operators used
for the recursion.

\subsection{Limitations and Future Work}
In applying and developing the scheme, several considerations emerged that help clarify its scope and implementation. These also point to natural directions in which the construction may be extended.

A central condition of the construction is that the initial and
target states factorise over the same product partition of the
Hilbert space. For the initial state, this factorisation allows the
partial diffusers to act only on part of the system while leaving the
remainder unchanged. The uniform superposition has this structure
naturally. An arbitrary initial state may instead be brought into
product form by augmenting the construction with a unitary that
disentangles it across the chosen partition. The corresponding
partial diffusers must then be conjugated by this transformation at
every level, so they differ from a reflection about the full initial
state only in the central reflection about the appropriate zero
state. Although the schedule conditions remain unchanged, the cost
of each diffuser can therefore approach that of the global
reflection, and the repeated conjugation can dominate the
non-oracle cost. Extending the construction to arbitrary initial
states while retaining partial diffusers that are cheap to implement
remains an open problem.

For the target state, factorisation is not required to construct the recursive sequence of reflections, but it is what confines the non-trivial dynamics at each level to a single invariant plane and guarantees the scalar recurrence for the rotation angle. This analytical simplification does not restrict the construction to a single marked element, since a target state supported on multiple basis states may itself factorise over the partition. A generic multiple-target state, however, may not have this structure, in which case the single-angle description would no longer apply as given. One possible extension would be to replace the scalar recurrence by a low-dimensional description associated with the Schmidt structure of the target across the partition. Determining when such a description remains closed under the recursion merits further investigation.

The construction also assumes that the local overlaps are known in
advance. Standard amplitude amplification requires the global
overlap to determine the optimal iteration count, whereas here the
outer rotation angle $\gamma_m$ and the phases used to resolve each
register are determined by all $m$ local overlaps. For unstructured
search with a single target, these values follow directly from the
dimensions of the subspaces. The full set of local overlaps
$\theta_i$ is required for deterministic preparation, but this
dependence can be relaxed when exact preparation is not required.
With the minimal schedule $t_i=1$,
Theorem~\ref{thm:query-optimal} gives
$\sin\gamma_m=\Theta(2^{m-1}\sin\theta)$, so the outer iteration
count can be estimated from the global overlap.
Remark~\ref{rem:no-cascade} further shows that the phase-tuned
corrections can be omitted with a failure probability quadratic in
the remaining angles, which the schedules considered here keep at
$O(\sin^2\theta_1)$. A natural extension would be a fixed-point
version of the recursion, analogous to fixed-point amplitude
amplification~\cite{FixedPointSearch, fixed-point}, that further
reduces the dependence on precise knowledge of the local overlaps.

More broadly, the recursion is not tied to its interpretation as a
search algorithm. Quantum algorithms with reflection- and
rotation-based dynamics similar to those of quantum search may admit
a similar decomposition of their Hilbert space and reflection
operators, allowing the recursive construction and analysis developed
here to be applied in those settings. Szegedy-type quantum
walks~\cite{szegedy} provide one possible direction, since their
walk operator is a product of reflections whose action decomposes
into two-dimensional invariant subspaces. A recursive partition of
the walk space may therefore allow the corresponding rotation angles
to be related across successive levels, in a manner analogous to the
angle recurrence derived here. A further possible connection is with
quantum singular value transformation~\cite{QSVT}, which applies
polynomial functions to the singular values of an operator encoded
within a unitary. In the present construction, repeated rotations
likewise induce a transformation of the singular value associated
with a pair of reflected subspaces, with the transformed value
determining the singular value at the next level through the angle
recurrence.
\section{Conclusion}
\label{sec:conclusion}

We have developed a novel decomposition of quantum search in which the
search operator is built recursively from reflections over a
decomposition of the underlying Hilbert space. When the initial and
target states factorise over a common product partition, the
non-trivial dynamics at each level are confined to a single invariant
plane. The corresponding rotation angle is propagated exactly through
the recursion by a scalar recurrence. The success probability
therefore follows directly from the accumulated rotation, with no
residual error from bounding the success of individual stages. This
allows the recursion to be treated as a whole and gives an exact
description of the state throughout the algorithm. Furthermore, we
show how the phases of the reflection operators can be adjusted to
prepare the target deterministically.

This description also allows the oracle and non-oracle costs of the
construction to be derived abstractly and then instantiated according
to the cost model of a particular search setting. For unstructured
search, the resulting algorithm attains the simultaneously optimal
$\Theta(\sqrt{N})$ oracle and non-oracle gate counts of
Bria\'{n}ski et al.~\cite{hardwareGrover2}. For spatial search on the
$d$-dimensional grid, it recovers the $O(\sqrt{N})$ running time for
$d\ge3$ and the $O\big(\sqrt{N}(\log N)^{3/2}\big)$ bound of
Aaronson and Ambainis~\cite{scott_paper} for $d=2$.

The construction also admits recursive decompositions not covered by
the corresponding prior analyses. In particular, for unstructured
search and spatial search with $d\ge3$, the optimal asymptotic costs
can be attained using a fixed rate of subdivision at every level,
whereas the corresponding prior analyses use subdivision rates that
increase with recursion depth. This gives greater freedom in how the
recursion is structured and in choosing decompositions suited to the
available operations. More broadly, the recursive construction and its exact angle-based analysis suggest a new approach to incorporating and analysing recursion in quantum algorithm design.

\printbibliography

\appendix
\section{Deferred Proofs}
\label{app:proofs}
Two proofs are collected here rather than given where their statements
appear. The first is the induction producing an orthonormal basis for
the level-$i$ rotation plane and fixing its orientation
(Lemma~\ref{lem:basis}, Section~\ref{subsec:planes}). The second is
the computation tracking the reciprocal squared angle along the
boundary schedule (Lemma~\ref{lem:boundary}). The sketches standing
in their place quote two identities, \eqref{eq:reflection-axis}
and~\eqref{eq:boundary-increment}, which later sections draw from the
proofs rather than from the statements.

\subsection{Orthonormal Basis for the Rotation Plane}
\label{app:basis-proof}
Throughout, $\ket{x_i^\perp} = \Pi_i^\perp\ket{\psi_i}/\cos\theta_i$ is
the unit vector of~\eqref{eq:perp-vector}, so that
$\braket{x_i^\perp|\psi_i} = \cos\theta_i$ by~\eqref{eq:perp-overlap}.
The reduced projectors $\tilde{P}_\psi, \tilde{P}_W$ are those
of~\eqref{eq:reduced-projectors}. By Lemma~\ref{lem:reduction},
$S_{\psi_i}$ and $W_{i-1}$ act through them within
$\mathcal{H}_i \otimes \cdots \otimes \mathcal{H}_1$ once the common
factor $\ket{x_{m\ldots i+1}}$ on the upper registers is suppressed.
\begin{proof}[Proof of Lemma~\ref{lem:basis}]
We induct on $i$. The base case $i = 1$ is the rotation by
$\pi - 2\theta_1$ sketched in Section~\ref{subsec:planes}, a direct
computation in $\mathrm{span}\{\ket{x_1}, \ket{\psi_1}\}$ orienting it
as in~\eqref{eq:base-evolution} with $\ket{\psi_1^\perp}$ as defined.

For the inductive step, fix $i \ge 2$, suppress the common factor
$\ket{x_{m\ldots i+1}}$ and work within
$\mathcal{H}_i \otimes \cdots \otimes \mathcal{H}_1$, where
$S_{\psi_i}$ and $W_{i-1}$ act through $\tilde{P}_\psi$ and
$\tilde{P}_W$. The plane is $\mathrm{span}\{\ket{\psi_{i\ldots 1}},
\tilde{P}_W\ket{\psi_{i\ldots 1}}\}$, so take
$\ket{e_1^{(i)}} = \ket{\psi_{i\ldots 1}}$ and obtain
$\ket{e_2^{(i)}}$ by Gram--Schmidt. We apply $\tilde{P}_W$ to
$\ket{\psi_i}\ket{\psi_{i-1\ldots 1}}$, using
$\braket{x_i|\psi_i} = \sin\theta_i$ and
$\braket{x_i^\perp|\psi_i} = \cos\theta_i$ together with the evolution
one level down, which is \eqref{eq:plane-evolution} at level $i-1$ by
the inductive hypothesis, or \eqref{eq:base-evolution} if $i = 2$,
whose sign $(-1)^{t_1}$ cancels since the evolved state enters
$\tilde{P}_W$ once in the ket and once in the bra. This gives
\begin{equation}
\begin{split}
    \tilde{P}_W\ket{\psi_i}\ket{\psi_{i-1\ldots 1}} = {}&
    \sin\theta_i\cos(2t_{i-1}\gamma_{i-1})\ket{x_i}\otimes
    \big(\cos(2t_{i-1}\gamma_{i-1})\ket{\psi_{i-1\ldots 1}}
    - \sin(2t_{i-1}\gamma_{i-1})\ket{\psi_{i-1\ldots 1}^\perp}\big) \\
    & + \cos\theta_i\ket{x_i^\perp}\ket{\psi_{i-1\ldots 1}}.
\end{split}
\label{eq:PWui-raw}
\end{equation}
Using $\cos^2 = 1 - \sin^2$, the $\ket{\psi_{i-1\ldots 1}}$-terms
regroup as $\ket{\psi_i}\ket{\psi_{i-1\ldots 1}}$, and the remainder
factors through
$\sin\gamma_i = \sin\theta_i\sin(2t_{i-1}\gamma_{i-1})$, so
\begin{equation}
    \tilde{P}_W\ket{\psi_i}\ket{\psi_{i-1\ldots 1}} =
    \ket{\psi_i}\ket{\psi_{i-1\ldots 1}}
    - \sin\gamma_i\ket{x_i}\otimes
    \Big[\sin(2t_{i-1}\gamma_{i-1})\ket{\psi_{i-1\ldots 1}}
    + \cos(2t_{i-1}\gamma_{i-1})\ket{\psi_{i-1\ldots 1}^\perp}\Big].
    \label{eq:PWui-compact}
\end{equation}
Subtracting the $\ket{\psi_{i\ldots 1}}$-component, whose coefficient
is $\cos^2\gamma_i$ by Lemma~\ref{lem:eigenvalue}, leaves
\begin{equation}
    \ket{w_i} = \sin\gamma_i\Big[\sin\gamma_i
    \ket{\psi_i}\ket{\psi_{i-1\ldots 1}} - \ket{x_i}\otimes
    \big(\sin(2t_{i-1}\gamma_{i-1})\ket{\psi_{i-1\ldots 1}}
    + \cos(2t_{i-1}\gamma_{i-1})\ket{\psi_{i-1\ldots 1}^\perp}\big)\Big],
\end{equation}
of norm $\lvert\sin\gamma_i\cos\gamma_i\rvert$. Dividing by the signed
quantity $\sin\gamma_i\cos\gamma_i$ rather than by that norm yields
$\ket{e_2^{(i)}} = \ket{\psi_{i\ldots 1}^\perp}$ as in~\eqref{eq:e2}.
The two prescriptions differ by a sign whenever $\gamma_i < 0$, and it
is the signed one that fixes the orientation of the plane, so that a
single convention makes the iterate rotate by $-2\gamma_i$, with the
signed angle of~\eqref{eq:engine-recurrence}, at every level. For
$\gamma_i = 0$ the plane is degenerate, $S_{\psi_i}W_{i-1}$ acts on it
as the identity, and~\eqref{eq:e2} still returns a unit vector
orthogonal to $\ket{\psi_{i\ldots 1}}$.

It remains to fix the sign of the rotation. The same expression gives
$\tilde{P}_W\ket{\psi_{i\ldots 1}} = \cos^2\gamma_i\ket{e_1^{(i)}} +
\sin\gamma_i\cos\gamma_i\ket{e_2^{(i)}}$, which
is~\eqref{eq:reflection-axis}. Since $W_{i-1}$ acts as
$\mathbb{I} - 2\tilde{P}_W$ here (Lemma~\ref{lem:reduction}), it maps
$\ket{e_1^{(i)}}$ to
$-\cos(2\gamma_i)\ket{e_1^{(i)}} - \sin(2\gamma_i)\ket{e_2^{(i)}}$,
and $S_{\psi_i} = \mathbb{I} - 2\ket{e_1^{(i)}}\bra{e_1^{(i)}}$ flips
the first component and fixes the second, leaving
$\cos(2\gamma_i)\ket{e_1^{(i)}} - \sin(2\gamma_i)\ket{e_2^{(i)}}$.
This is $\ket{e_1^{(i)}}$ rotated by $-2\gamma_i$, and iterating
gives~\eqref{eq:plane-evolution}.
\end{proof}
\subsection{Angle Decay at the Boundary Schedule}
\label{app:boundary-proof}
Here $w_i := 2t_i\gamma_i$ is the rotation of level $i$, held at or
below $\pi/2$ by hypothesis. The global overlap is
$\sin\theta = \prod_{i=1}^{m}\sin\theta_i$, as
in~\eqref{eq:overlaps}, and $T_{\mathrm{total}}$ is the total oracle
count~\eqref{eq:total-cost}.
\begin{proof}[Proof of Lemma~\ref{lem:boundary}]
At the boundary $\sin\theta_{i+1}=\gamma_i/w_i$, so the recurrence of
Lemma~\ref{lem:recursion} closes over the angles alone,
\begin{equation}
\label{eq:boundary-map}
    \sin\gamma_{i+1}=\sin\theta_{i+1}\sin w_i
    =\gamma_i\,\frac{\sin w_i}{w_i}.
\end{equation}
We bound the growth of $y_i:=1/\gamma_i^2$ level by level, using
$w_i\le\pi/2$ throughout. For the upper increment,
$\sin w\ge w(1-w^2/6)$ gives
$\gamma_{i+1}\ge\sin\gamma_{i+1}\ge\gamma_i(1-\tfrac{w_i^2}{6})$, and
since $w_i^2/6\le\pi^2/24$,
\begin{equation}
    y_{i+1}\ \le\ y_i\Big(1-\frac{w_i^2}{6}\Big)^{-2}
    \ \le\ y_i\Big(1+\frac{5w_i^2}{6}\Big),
    \qquad\text{so}\qquad
    y_{i+1}-y_i\ \le\ \frac{5}{6}\,\frac{w_i^2}{\gamma_i^2}
    \ =\ \frac{5}{6\sin^2\theta_{i+1}}.
\end{equation}
For the lower increment, $\sin w\le w(1-\tfrac{w^2}{7})$ on
$(0,\pi/2]$ bounds \eqref{eq:boundary-map} above, while $t_i\ge1$
forces $\sin\gamma_{i+1}\le\sin\theta_{i+1}\le\tfrac12$, on which
range $\arcsin z\le z(1+\tfrac{z^2}{5})$. Combining, with
$\sin^2\gamma_{i+1}\le\sin^2\theta_{i+1}\,w_i^2\le\tfrac{w_i^2}{4}$,
\begin{equation}
    \gamma_{i+1}\ \le\ \gamma_i\Big(1-\frac{w_i^2}{7}\Big)
    \Big(1+\frac{w_i^2}{20}\Big)
    \ \le\ \gamma_i\Big(1-\frac{w_i^2}{11}\Big),
    \qquad\text{so}\qquad
    y_{i+1}-y_i\ \ge\ \frac{2}{11}\,\frac{w_i^2}{\gamma_i^2}
    \ =\ \frac{2}{11\sin^2\theta_{i+1}}.
\end{equation}
Together the two increments are the
estimate~\eqref{eq:boundary-increment} quoted in the sketch, with
constants $\tfrac{2}{11}$ and $\tfrac{5}{6}$ independent of the
overlaps. Summing from $y_1=1/\theta_1^2$, with
$\theta/\sin\theta\in[1,\pi/2]$ making
$y_1=\Theta(\sin^{-2}\theta_1)$ uniformly,
\begin{equation}
    y_m=\Theta\Big(\textstyle\sum_{i=1}^{m}\sin^{-2}\theta_i\Big),
\end{equation}
which is~\eqref{eq:boundary-decay}. For overlaps inside a fixed
$[\theta_{\min},\theta_{\max}]$ every term is $\Theta(1)$, so
$y_m=\Theta(m)$ and $\gamma_m=\Theta(1/\sqrt{m})$. For the cost, the
boundary sets $\sin\theta_{i+1}=1/2t_i$ at every level, so
\begin{equation}
    \sin\theta=\sin\theta_1\prod_{i=1}^{m-1}\sin\theta_{i+1}
    =\sin\theta_1\prod_{i=1}^{m-1}\frac{1}{2t_i},
\end{equation}
giving $\prod_{i=1}^{m-1}2t_i=\Theta(1/\sin\theta)$. Substituting
this and $\gamma_m=\Theta(1/\sqrt{m})$
into~\eqref{eq:total-cost} yields
$T_{\mathrm{total}}=\Theta(\sqrt{m}/\sin\theta)$,
which is~\eqref{eq:boundary-cost}.
\end{proof}

\end{document}